\theoremstyle{plain}
\newtheorem{theorem}{Theorem}
\newtheorem*{theorem*}{Theorem}
\newtheorem{lemma}{Lemma}
\newtheorem{corollary}{Corollary}
\newtheorem*{corollary*}{Corollary}
\newtheorem{definition}{Definition}
\newtheorem*{remark*}{Remark}
\def\mathscr{\mathfrak}
\def\a{\alpha}
\def\b{\beta}
\def\l{\lambda}
\def\Sw{\textrm{Sw}}
\def\Cut{\textrm{Cut}}
\date{}
\begin{document}

\title{Optimal allocation in annual plants\\ with density dependent fitness}

\date{}

\author{Sergiy Koshkin, University of Houston-Downtown, Houston, TX, US\\ email: \texttt{koshkins@uhd.edu}\\   
Zachary Zalles, Rice University, Houston, TX, US,\\ 
Michael F. Tobin, University of Houston-Downtown, Houston, TX, US\\
Nicolas Toumbacaris, Emory University, Atlanta, GA, US\\
Cameron Spiess, Keene State College, Keene, NH, US
}

\maketitle

\vspace*{-3em}

\begin{abstract}
We study optimal two-sector (vegetative and reproductive) allocation models of annual plants in temporally variable environments, that incorporate effects of density dependent lifetime variability and juvenile mortality in a fitness function whose expected value is maximized. Only special cases of arithmetic and geometric mean maximizers have previously been considered in the literature, and we also allow a wider range of production functions with diminishing returns. The model predicts that the time of maturity is pushed to an earlier date as the correlation between individual lifetimes increases, and while optimal schedules are bang-bang at the extremes, the transition is mediated by schedules where vegetative growth is mixed with reproduction for a wide intermediate range. The mixed growth lasts longer when the production function is less concave allowing for better leveraging of plant size when generating seeds. Analytic estimates are obtained for the power means that interpolate between arithmetic and geometric mean and correspond to partially correlated lifetime distributions.
\bigskip

\textbf{Keywords}: allocation schedule, fitness, evolutionary stable strategy, eco-evolutionary feedback, density dependence, temporal variability, recruitment survival, seed yield, bet hedging, random lifetime, diminishing returns, mixed growth, graded allocation, control-affine system, singular control
\bigskip

\textbf{MSC}: 37N25 49K15 49K30 92C80 49N90 92D15

\end{abstract}

\newpage

\section{Introduction}

Models of allocation to growth and reproduction in plants based on the optimal control theory were pioneered by Cohen \cite{Coh71} (in the continuous case, by Denholm \cite{Den}), and developed in a variety of ways by many authors \cite{DeL,KR,MK,TY,VP,ZK}. Approaches and results are summarized in the survey paper \cite{PS}, and the book \cite{RB}. These models divide plants into two sectors, the vegetative (leaves, stems, roots) and the reproductive (flowers, fruits), and the model optimizes the allocation of net production between the two. Namely, it prescribes how to vary the fractions of net production allocated to the vegetative and to the reproductive growth over plant's lifetime to maximize a fitness measure.  Models with more than two sectors, that prescribe shoot and root allocation separately \cite{IR}, or add storage \cite{MK}, have been studied as well.

There are two sorts of assumptions about plants' lifetimes in allocation models of this type. In the more common deterministic models the lifetime is assumed to be fixed and the reproductive mass (assumed to be proportional to the seed yield) at the end of it is taken as the fitness measure. Two sector models then typically predict as optimal bang-bang behavior, which is allocation to only vegetative growth followed by allocation to only reproductive growth, with the switching time determined by the length of lifetime and the production rate \cite{Den,VP}. Models with a random distribution of lifetimes are more realistic but rare \cite{AmCoh,DeL,EKE,KR,LeCoh}, and typically assume that either geometric or arithmetic mean of seed yield is maximized. For geometric mean maximizers, a novel phenomenon is that the optimal behavior may no longer be bang-bang, but rather there is also a period of mixed growth when vegetative and reproductive organs grow simultaneously.

The main goal of this paper is to investigate effects of density dependence and temporal variation on the choice of fitness measure and the resulting optimal allocation schedules for annual plants. Density dependence is a simple example of eco-evolutionary feedback, where selection patterns in the population are affected by growing density of the population itself \cite{PP}. There is an ongoing discussion on what, if anything, is maximized by natural selection, especially when random fluctuations are taken into account \cite{ES,Frank,Roff}, and appropriate fitness measures often depend on the assumed type of feedback. Fisher's Malthusian parameter $r$ is a good fitness measure only for populations growing in a vacuum, with no appreciable density effects. MacArthur and Wilson termed selection under such conditions $r$-selection, and showed that in environments saturated with organisms $K$-selection takes place instead \cite{McAW}, where $K$ is the ``carrying capacity" of the environment, or, more precisely, the stable population size. For many species $K$-selection can be associated in deterministic models with maximizing the lifetime reproductive success $R_0$ \cite{ES}, which can be measured by seed yield for annual plants. The oft-used geometric and arithmetic mean maximization in fluctuating environments also correspond, roughly, to $r$ and $K$ selection, respectively.

However, already Pianka suggested that in real environments the two extremes are compromised giving rise to a continuum interpolating between $r$ and $K$-selectors \cite{Pian}. Moreover, if the environment is fluctuating then both types of selection might be contributing to the final outcome, and the environment itself might be affected by the selection taking place. More recent modeling suggests that fluctuating environments and density dependent feedbacks lead to transitions between fitness measures such as $r$ and $R_0$, and to maximizing mixed or hybrid measures that interpolate between the two \cite{AR17,AR20,ES,FS,LES}. Competition between different genotypes means that only relative difference in reproduction across competing genotypes matters, which implies diminishing returns on selective advantage from its increase \cite{Frank}. In agreement with the method of evolutionary stable strategies, such fitness measures reflect invasibility of a trait and/or the chance of its fixation in the long run \cite{FS,LES}.

Furthermore, the notions of $r$ and $K$ selection are linked to the classical logistic model, which has been criticized for including obscure phenomenological parameters and unrealistic predictions at the equilibrium. A recent model of nest site lottery, proposed in \cite{AB} and developed in \cite{AR17,AR20}, gives a deeper insight into actual demography and differences in selection for unsuppressed and suppressed growth, which stand behind the intuitions for $r$ and $K$ selection, respectively. It provides an example of a mechanism of selectively neutral growth suppression acting on the recruitment of juveniles (seeds and seedlings in the case of annual plants), and shows that transitional fitness measures should reflect more than just a difference in averaging. Juvenile mortality directly reduces reproductive value of seeds, and not necessarily proportionally. While availability of nest sites as a growth suppression factor is not applicable to plants, analogous mechanisms operate on seeds and seedlings as well. 

As we argue below (Section \ref{RepFit}), both mixed averaging and recruitment survival effects can be incorporated into a non-linear function of reproductive mass $L(y)$ that enters a fitness measure of the form $\mathbb{E}[L(y)])$, where $\mathbb{E}$ is the expected value averaged over individual lifetimes. We call $L(y)$ the {\it fitness function}. When the number of surviving seeds is proportional to the total reproductive mass, i.e. when recruitment survival can be neglected, one can take $L(y)=y$ for arithmetic and $L(y)=\ln y$ for geometric mean maximization, and those are the only two cases typically considered in the literature. But when non-linear effects of recruitment survival are taken into account, a much broader range of $L(y)$ is realized, from superlinear to sublogarithmic. 

The role of $L(y)$ is muted in deterministic models, because maximizing $y$ is equivalent to maximizing any monotone function of it, but its effect can be rather dramatic in allocation models with random variation, because $\mathbb{E}[L(y)]\neq L(\mathbb{E}[y])$. Mathematically, fitness functions are analogous to utility functions in economic models, although their interpretation is quite different from describing the subjective value of returns on investment. One can think of $L(y)$ as the ``evolutionary utility" of reproductive output $y$. In particular, linear and superlinear (convex) functions represent ``gambler's utilities" favoring high risk high reward strategies of delaying reproduction, while concave functions lead to more conservative behavior that Slatkin called ``hedging one's evolutionary bets" \cite{PS,SB,Slat}.

In addition to general fitness functions, we also consider general production functions assumed to be concave, instead of the typical linear ones, to reflect diminishing returns. In plant biology, concave production functions are natural as a consequence of Liebig's law of limiting factors \cite{Goud}. For example, self-shading of leaf canopies limits returns on increasing the leaf mass, other limiting factors include availability of minerals and nutrients. 

\subsection{Reproduction and fitness}\label{RepFit}

The fitness measure of our model takes into account both recruitment survival of seeds to seedling establishment, and temporal variability of lifetimes within a season. The two factors have to be taken into account separately because the allocation model applies only after plants begin to rely on autotrophy rather than on maternal resources for growth (as is typical). Since the model's lifetime distribution only covers established plants the seed yield as a measure of reproductive success has to be corrected for recruitment survival. In the long run, advantages and disadvantages of allocation schedules are sensitive not to the total number of seeds produced, but to the number of them that will live to reproduce, in the next or subsequent seasons. Indeed, the tradeoff between the risk of delayed reproduction and potentially larger seed yield may or may not be worthwhile depending on how many extra seeds will likely grow into reproducing adults. Even if we assume that the fraction of the overhead costs (flowers, pedicels and the like) is size independent and the total seed yield is proportional to the reproductive mass $y$, it is not the total yield that determines reproductive success, but rather the {\it effective} seed yield $S(y)$. The effective seed yield is the expected number of seeds that survive seed and seedling mortality, and are recruited into the adult population that may reproduce ($S$ stands for `surviving'). We model $S$ as a function of $y$ only, which is a simplification that can only be justified in special circumstances. In particular, it does not take into account that survivability may vary in the course of a season. 

When the effective seed yield is averaged over individual lifetimes to produce a fitness measure, an additional non-linearity may be introduced. This is familiar from the case of geometric mean, where the averaging transformation is $A_0(z)=\ln z$. More generally, mixed fitness measures correspond to power means with $A_\alpha(z)=z^\alpha$, where $0<\a\leq1$ is determined by the correlation between individual lifetimes in the population, as we show below. The overall fitness function is then the composition of the two non-linearities, i.e. $L(y)=A_\alpha(S(y))$, perhaps scaled and/or shifted by a constant for convenience. Thus, both recruitment survival and averaging contribute to the non-linear dependence of the fitness measure $\mathbb{E}[L(y)]$ on the reproductive mass.

\subsubsection{Averaging}\label{FitAv}

We start with the averaging as the simpler of the two factors. The conventional wisdom of geometric mean maximization goes back to a simple two allele haploid model of Dempster \cite{Dem}, which showed that the allele with the larger geometric mean gets fixed with greater probability. In a number of papers \cite{AmCoh, Coh66, Coh71}, Cohen and co-authors reached the same conclusion and explored its implications for the behavior of annual plants. More complex models confirmed the principle, but only under the assumption that the correlation of lifetimes between individuals of the same allelic type is near perfect (and cross-correlation among types is near zero) \cite{Frank}, e.g. because most of them die near the end of the season. Lewontin-Cohen \cite{LeCoh} and later Gillespie \cite{Gil}
also considered the other extreme, when the correlation is near zero, and found that the appropriate fitness measure is then the arithmetic rather than the geometric mean of seed yields. 

Quantitatively, the effect of temporal variability on the fitness measure was studied by Frank and Slatkin \cite{Frank, FS}, and more recently in \cite{LES}. In general, the genotype
with the greater probability of trait fixation has the larger value of $m-\rho\frac{\sigma^2}{2m}$, where $m$ is the mean, $\sigma^2$ the variance, and $\rho$ the correlation coefficient between lifetimes. For $\rho=0$ we get the arithmetic mean, and for $\rho=1$ a mean-variance approximation of the geometric mean. As $\rho$ varies between $0$ and $1$ the expression  $m-\rho\frac{\sigma^2}{2m}$ can be seen as the mean-variance approximation of power means $M_\alpha:=(\mathbb{E}[y^\alpha])^{\frac1\alpha}$ ($M_0:=\exp(\mathbb{E}[\ln y])$, the geometric mean), where $\mathbb{E}$ is the mathematical expectation. Indeed, $M_\alpha\approx m-(1-\alpha)\frac{\sigma^2}{2m}$, so $\alpha=1-\rho$. Thus, we can expect power mean maximization when plant lifetimes are less than perfectly correlated. 

If we assume that $S(y)$ is proportional to $y$, i.e. the same proportion of seeds survives regardless of their total number, one can take $L(y)=A_\alpha(y)$ because a constant coefficient of proportionality does not affect optimization. The corresponding fitness measure, which serves as the payoff functional in optimization, will then be of the form $\mathbb{E}[A_\alpha(y)]$, where $A_\alpha(y):=y^\alpha$ for $0<\a<1$ rather than $A_1(y):=y$ or $A_0(y):=\ln y$. 

According to Amir and Cohen \cite{AmCoh,Coh71}, ``linear" (i.e. arithmetic mean) maximizers are favored in large populations and stable environments, while ``logarithmic" (i.e. geometric mean) maximizers are favored in smaller populations and harsher environments. It is not hard to see the parallel between logarithmic and linear maximizers, and Pianka's $r$ and $K$-selectors, respectively. Transitional allocation patterns produced by environments that fall between this dichotomy have not, to our knowledge, been studied in the plant science literature. Biological or ecological processes reducing the correlation among lifetimes for a cohort of seeds will shift the selective advantage for allocation from geometric mean maximizers toward arithmetic mean maximizers, but not all the way. For example, some plants produce cohorts of seeds which germinate over multiple seasons due to delayed germination of a subset of the seeds \cite{Ven}. Seed dormancy for more than a year (seed bank) converts seasonal variability into cohort variability. Similarly, correlation among plant lifetimes could be reduced by spatial variability in resources due to the differential impact of resource availability on survival of spatially dispersed plants \cite{BLS}.  Disturbances like fire could increase or decrease the correlation depending on the spatial variability of fire induced mortality \cite{OWA}, thus shifting selection toward geometric or arithmetic maximizers, respectively.

\subsubsection{Recruitment survival}\label{FitRec}

As we saw, different averaging leads to a very narrow class of fitness functions. But it broadens considerably when non-linear effects of seed survival are also taken into account, i.e. when the averaging is applied not to $y$ directly but to $S(y)$. 

Ecological processes resulting in nonlinearity between seed yield and effective seed yield are very heterogeneous and complex. Among other things, seed survival is affected by seed dormancy, size, dispersal, physical or chemical defenses, and density dependent inhibition by other seeds, seedlings or adults, with these mechanisms selected for or against in concert with the allocation schedule. Determination of $S(y)$ is further complicated by the fact that seeds germinate in different years due to seed banks, and at different sites due to dispersal. Nonetheless, field studies of seed germination suggest \cite{Inou,Lin}
that seed and seedling density has some direct effects on germination, both positive and negative, that can be captured by $S(y)$. If the proportion of germinating seeds that successfully establish themselves decreases as seedling density increases we would expect a concave dependence of effective seed yield on total seed yield. 

More generally, effects of selectively neutral juvenile mortality can be captured by $S(y)$. Even simple known mechanisms that lead to mixed fitness measures, such as the nest site lottery \cite{AB,AR17} and its generalization, the host lottery that models parasitism \cite{AR20}, have close analogs for annual plants. For  example, a limited number of sites for establishment \cite{CPLO} would result in a fixed carrying capacity of the environment (in the mechanistic sense of \cite{Hui} rather than the phenomenological sense of the logistic growth model). The resulting function $S(y)$ will be linear for small $y$, but will saturate at the carrying capacity. A more flexible mechanism of the same nature is germination suppression to reduce competition when the seed or seedling density is high \cite{Inou,Lin}, which leads to more gradually saturating curves. Intraspecific competition among seedlings for resources, or herbivores and pathogens attracted to greater densities of host plants are also potential mechanisms producing a concave dependence \cite{Harp}.

Alternatively, a convex dependence can result from predator satiation. Predator satiation has been invoked to explain masting in perennial species \cite{KS}, but modeling suggests that this type of selective pressure could favor temporal clumping of seed production in annuals as well \cite{BV}.  Granivores often consume a lower proportion of seeds as the amount of seeds available increases, leading to the inverse density dependence and thus a convex shape of $S(y)$ \cite{HB}. Similarly, synchronized flowering or fruiting may improve efficiency of pollination and seed dispersal \cite{KS}. For example, larger seed crops were associated with more efficient dispersal by rodents in a forest system \cite{ZCX}.

As we saw, when $S(y)$ is linear one can take  $L(y)=y^\alpha$ for $0<\alpha\leq1$ depending on the type of averaging alone. On the other hand, one can expect initially convex dependence that saturates at larger values of $y$, such as given by functions of the form $S(y)=S_{\max}e^{-Cy^\alpha}$ with $\alpha<0$, from a combination of parasite/disperser suppression with a carrying capacity limit, see Figure \ref{Sexamples}(a). Composed with the geometric mean averaging $A_0(y)=\ln y$ this gives $L(y)=-Cy^\alpha+\ln S_{\max}$, where the multiplicative and the additive constants can be dropped for the purposes of optimization, leaving just $L(y)=-y^\alpha$. 
\begin{figure}[!ht]
\centering
(a)\includegraphics[width=1.8in,height=1.5in]{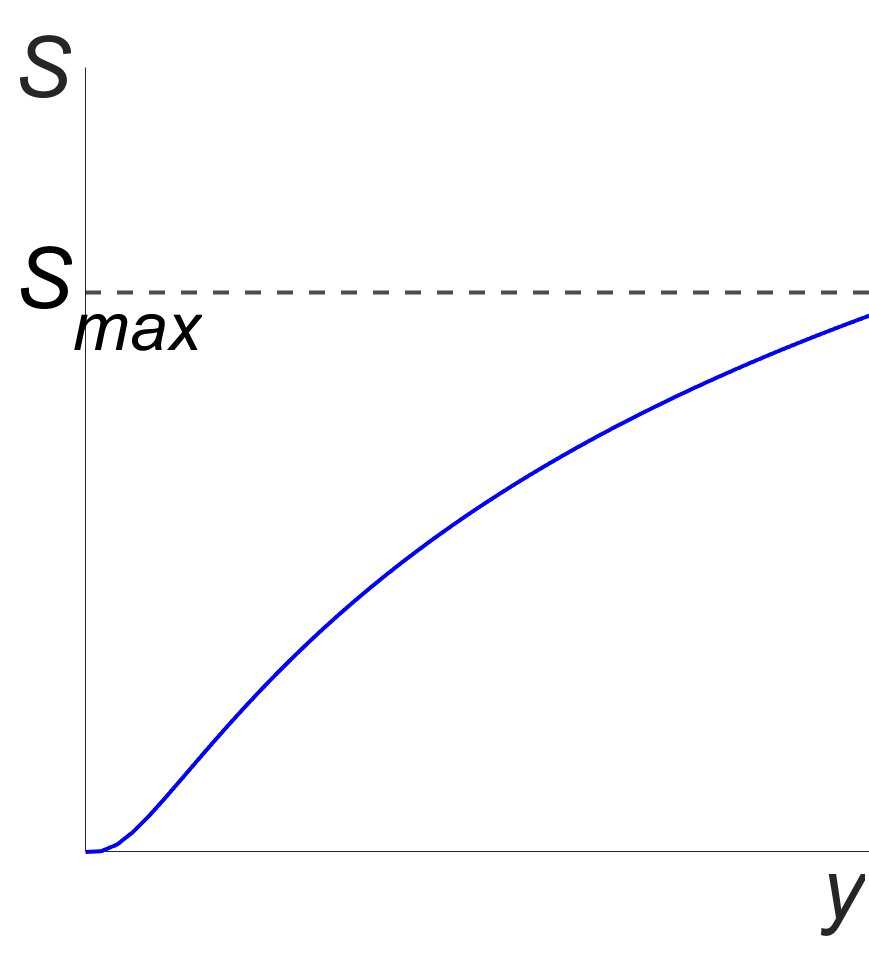}\hspace{5em}
(b)\includegraphics[width=1.9in,height=1.2in]{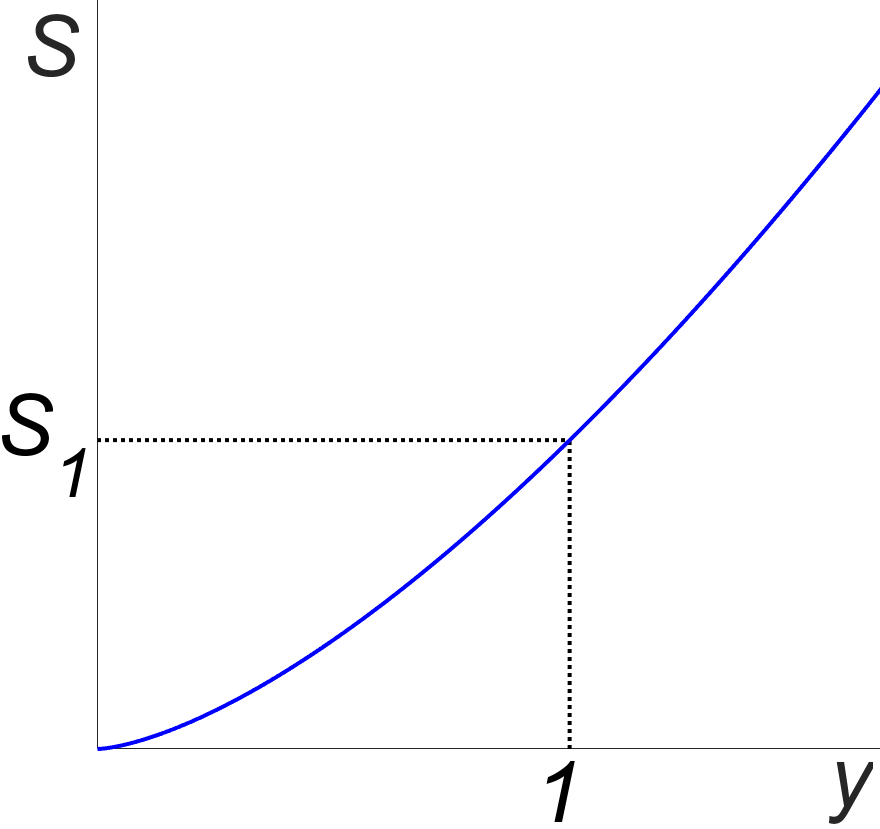}
\caption{\label{Sexamples} Possible graphs of effective seed yield as a function of reproductive mass: a) $S(y)=S_{\max}e^{-Cy^\alpha}$ with $\alpha<0$; b) $S(y)=S_1y^\gamma$ with $\gamma>1$.
} 
\end{figure} 

One can also consider $S(y)$ that remain convex over a feasible range of $y$ values, such as  $S(y)=S_1y^\gamma$ with $\gamma>1$, for a suitable normalization constant $S_1$, see Figure \ref{Sexamples}(b). Composed with the power averaging $A_{\alpha/\gamma}$ they produce the fitness function $L(y)=S_1^{\alpha/\gamma}y^\alpha$, where the constant can again be dropped. It is convenient to unify the $\pm y^\alpha$ cases by taking $L(y)=\frac{y^\alpha}{\alpha}$ for all real $\alpha\neq0$, which automatically takes care of the sign and simplifies some formulas.  We refer to $\alpha$ as the {\it fitness index}. Of course, much more general functions $S(y)$, and hence $L(y)$, are possible, and some of our general results apply to them. 

\subsection{Model description and main results}\label{Model}

The optimal allocation models we consider can be formulated in the standard optimal control setting \cite{LW,SS}. The state variables are $x$ (vegetative mass) and $y$ (reproductive mass), and the net rate of photosynthesis depends only on the vegetative mass given by the {\it production function} $P(x)$. The control variable $u$ is the fraction of the net production reinvested into vegetative growth, and the remainder $1-u$ is invested in reproduction.
The starting time $t=0$ is the time when the plants shift from maternal seed resources to photosynthesis for growth. We assume that plants' lifetimes are identically distributed, but not necessarily independent. Unfortunately, using something like a Gaussian distribution makes the problem analytically intractable. As a compromise, we split the lifetime into a ``safe" period, with fixed length $T_0$, and the ``volatile" period of maximal length $T$, which is equally likely to end at any time after $T_0$. The state equations are as follows:
\begin{gather}\label{StateEq}
\begin{cases}\dot{x} =uP(x),\ x(0)=x_0>0\\ \dot{y}=(1-u)P(x),\ y(0)=y_0\geq0;\end{cases}\\
0\leq u\leq1,\ \ \ \ \ t\in[0,T_0+T].\notag
\end{gather}
The payoff functional is the expected value of the fitness function $L(y)$ over possible lifetimes (we drop the $\frac1T$ factor that does not affect optimization):
\begin{equation}\label{ExtPay}
J[u](x_0,y_0,T_0,T)=\int_{T_0}^{T_0+T}L(y)dt\to\max\,.
\end{equation}
We generally assume that $P(x)$ is positive, and both $P(x)$ and $L(y)$ are twice differentiable and monotone increasing for positive $x$ and $y$. Other conditions will be explicitly stated when necessary. Under these conditions the existence of optimal control follows from the Filippov's theorem \cite[p.132]{SS}.  

To state the results more compactly we use a shorthand notation for different arcs of optimal allocation schedules. The letters below are V for vegetative, M for mixed, and R for reproductive growth.
\begin{definition}\label{ArcLet} We label an arc of an optimal schedule V when $u=1$, M when $0<u<1$, and R when $u=0$, respectively. A schedule is said to be of type VMR, etc., when the corresponding arcs are arranged from $t=0$ to $t=T$ in the indicated order left to right.
\end{definition}
\noindent In principle, a schedule can include any sequence of arc types in any order, but their variety turns out to be very limited for optimal schedules. The following theorem characterizes them under very general assumptions on production and fitness functions. 
\begin{theorem}\label{Schedules}
Let $\ln P$ be concave and $L$ be strictly monotone increasing. Then 

\noindent\textup{(i)} If $L$ is convex and $L(y)\xrightarrow[y\to\infty]{}\infty$ then any optimal schedule has at most one switch, and is bang-bang, i.e. R or VR;

\noindent\textup{(ii)} If $\ln L'$ is strictly convex then any optimal schedule has no more than three switches, and one of the following schedule types: R, VR, VMR, RMR, VRMR. Moreover, if $T_0$ is not too short (e.g. $T_0P'(x(T_0))\geq1$) and $y(0)=0$ then only VR and VRMR schedules are possible, and the first switch occurs during the safe period.
\end{theorem}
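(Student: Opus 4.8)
The plan is to run Pontryagin's maximum principle on the control-affine problem \eqref{StateEq}--\eqref{ExtPay}. Writing the costates $p_x,p_y$, the Hamiltonian is $H=P(x)\bigl(up_x+(1-u)p_y\bigr)+L(y)$, with the running term present only on $[T_0,T_0+T]$, so the control enters linearly and the switching function is $\sigma:=p_x-p_y$: the optimal control is $u=1$ where $\sigma>0$ (V), $u=0$ where $\sigma<0$ (R), and singular (M) where $\sigma\equiv0$. The adjoint equations are $\dot p_x=-P'(x)\bigl(up_x+(1-u)p_y\bigr)$ and $\dot p_y=-L'(y)$ on the volatile window, with transversality $p_x(T_0+T)=p_y(T_0+T)=0$. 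From these one reads off $p_y>0$ on $[0,T_0+T)$ and, since $\dot\sigma(T_0+T)=L'(y)>0$, that $\sigma<0$ just before the terminal time; hence the last arc is always R. The device that organizes everything is the auxiliary function $g:=L'(y)-P'(x)p_y$: a direct computation gives $\dot\sigma=g$ on every R arc and at every zero of $\sigma$, an M arc forces $\sigma\equiv g\equiv0$, and because $\dot\sigma=-P'(x)p_y<0$ at any zero of $\sigma$ in $[0,T_0]$, no singular arc can live in the safe period.

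The two hypotheses are exactly what control the sign behavior of $g$. Log-concavity of $P$ gives, on any V arc, the differential inequality $\dot g\ge P'(x)\,g$ (the discrepancy is $p_y\bigl((P')^2-PP''\bigr)\ge0$); combined with the fact that a V arc cannot be terminal, this yields the clean structural statement that $g\le0$ on every V arc, which forbids the transitions R$\to$V and M$\to$V and (since an M arc cannot be terminal either) shows every M arc is followed by R. For part (i), convexity of $L$ gives $\dot g>0$ on R arcs, so $g$ is positive-invariant along the volatile window and changes sign at most once, from $-$ to $+$; feeding this into $\dot\sigma=g$ together with the safe-period sign forces $\sigma$ to have at most one sign change, necessarily $+\to-$, giving exactly R or VR. For part (ii), strict convexity of $\ln L'$ makes $\dot g=P(x)L''(y)+P'(x)L'(y)$ change sign at most once along each R arc (its sign equals that of $P(x)(\ln L')'(y)+P'(x)$, which increases because $y$ increases and $(\ln L')'$ is strictly increasing), so $g$ is \emph{valley-shaped} on every R arc.

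With these facts the combinatorics of part (ii) close up. Since one cannot transition into a V arc, all V arcs merge into a single initial block, after which the schedule alternates R and M and ends in R. The key remaining point is that there is at most one M arc: two M arcs would be joined by a single R arc carrying $g=0$ at both endpoints, but a valley-shaped $g$ vanishing at both ends is $\le0$ throughout with strict negativity inside, whence $\int g\,dt<0$ over that arc; since $\dot\sigma=g$ there and $\sigma=0$ at both ends, this is a contradiction. Enumerating the admissible words (initial V-block, then R/M alternation with at most one M, ending in R) leaves precisely R, VR, VMR, RMR, VRMR, and the longest, VRMR, has three switches. I expect the verification of the singular-arc feasibility conditions (the generalized Legendre--Clebsch inequality, equivalently that an M arc requires $L''<-(P')^2p_y/P$) and the careful treatment of tangential or degenerate junctions to be the main obstacle, since the entire count relies on the relevant crossings being transversal.

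For the refinement I would show that $y(0)=0$ forces the schedule to open with a V arc, on which $y$ stays $0$ (checking $\sigma(0)>0$), which already excludes R and RMR; and that the length condition $T_0P'(x(T_0))\ge1$, which bounds below the accumulated decay $\int_0^{T_0}P'(x)\,dt$ of the switching factor on that initial V arc, forces $\sigma(T_0)\le0$, i.e. the first switch occurs no later than the end of the safe period. Because no singular arc can occur in $[0,T_0]$, this first switch must then be V$\to$R rather than V$\to$M, eliminating VMR and leaving only VR and VRMR. Making the threshold precise --- comparing the marginal values $p_x(T_0)$ and $p_y(T_0)$ through the backward adjoint flow and bounding the costate ratio at $t=0$ --- is the delicate quantitative step here.
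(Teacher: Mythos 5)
Your handling of part (i) and of the structural half of part (ii) is correct, and it takes a genuinely different route from the paper. The paper proves the volatile-window structure by second-derivative comparisons of the costates at junctions (Theorem \ref{5Structures}: convexity of $\lambda-\mu$ along V arcs makes V arcs initial; monotonicity of $(\ln L')'$ makes an R arc preceding an M arc initial), rules out singular arcs for convex $L$ by showing the cutoff curve is empty (Theorem \ref{CutExist}, Corollary \ref{ConGrowth}), and then grafts on the safe period (Theorem \ref{5StructuresExt}). Your auxiliary function $g=L'(y)-P'(x)p_y$, with $\dot\sigma=g$ on R arcs and at zeros of $\sigma$, packages the same inequalities into different combinatorial lemmas: $g\le0$ on V arcs (no transition into V), $\dot g>0$ on R arcs when $L$ is convex (which kills M--R junctions, hence all M arcs, giving (i) with no switching-surface machinery at all), and the valley-shape integral argument showing no R arc can join two M arcs (at most one M arc). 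Your enumeration then yields exactly R, VR, VMR, RMR, VRMR with at most three switches. Your lemma is weaker than the paper's Theorem \ref{5Structures}(iii) --- it does not show that, within the volatile window alone, an R arc preceding an M arc is initial --- but that stronger fact is not needed here. Also, your worry about Legendre--Clebsch is unfounded: the theorem only restricts possible structures (existence of M arcs is Theorem \ref{Mexist}), and your junction arguments are comparison/integral arguments that never require transversal crossings.

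The ``moreover'' clause is where the proposal genuinely breaks: you have swapped the roles of the two hypotheses. First, $y(0)=0$ does \emph{not} force an initial V arc --- the paper itself notes that if $T_0$ is very short the plant may have to start reproducing from the start --- so your step ``checking $\sigma(0)>0$'' cannot succeed. In the paper the initial V arc comes from the other hypothesis: if the initial arc were R, then along it $x\equiv x(T_0)$, $\mu$ is constant on $[0,T_0]$, and integrating $\dot\lambda=-\mu P'$ backward gives $\lambda(0)=\lambda(T_0)+\mu(T_0)P'(x_0)T_0\le\mu(T_0)$, forcing $T_0P'(x(T_0))<1$ (the Corollary following \eqref{SafeSwt}). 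Second, your claim that $T_0P'(x(T_0))\ge1$ alone forces $\sigma(T_0)\le0$ is unsupported and cannot be repaired: the decay estimate $\lambda(T_0)\le\lambda(0)e^{-\int_0^{T_0}P'\,dt}$ says nothing without an upper bound on $\lambda(0)$ relative to $\mu$, and, more fundamentally, any proof that the first switch lies in the safe period must use $y(0)=0$ essentially, since for $y(0)>0$ the model produces optimal VR and VMR schedules whose first switch lies in the volatile window (Theorem \ref{Ext3LinStructures}(ii),(iii)). What $y(0)=0$ actually buys is that $y\equiv0$ along the initial V arc, which precludes a V--R or V--M junction inside the volatile window (for the power fitness functions, $L'(0^+)=\infty$, and Lemma \ref{VRabove} forces $y_s>0$ at any admissible VR junction); hence the V arc must terminate by $T_0$, where only an R arc can follow it, leaving VR and VRMR.
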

\noindent If $T_0$ is very short the plant may have to start reproducing from the start, which is not biologically feasible. The value of $P'(x(T_0))$ is typically easy to estimate from below by integrating $\dot{x}=P(x)$ on $[0,T_0]$. Since $y(0)=0$ means that the plant has no reproductive mass at the time of establishment, only VR and VRMR schedules have biological relevance when $L$ is as in Theorem \ref{Schedules}(ii). The first switching time $t_m>0$ is the time when the plant starts reproducing for the first time, and can be identified with the {\it age of maturity}. For realistic parameter values it is often close to $T_0$. When the schedule is not bang-bang, i.e. when it is VRMR, the first reproductive phase is often short and seems to be an artifact of our sharp separation between safe and volatile periods. With a more continuous distribution of lifetimes one can expect a direct transition from vegetative to mixed growth followed by purely reproductive growth.

To interpret the conditions on $L$ more concretely let us apply them to the power functions $L(y)=\frac{y^\alpha}{\alpha}$. They are convex for $\alpha\geq1$ corresponding to ``gambler's" behavior, and $\ln L'(y)=(\alpha-1)\ln y$ is strictly convex for $\alpha<1$. In other words, the possibility of mixed growth only exists for the more conservative indices $\alpha<1$. In fact, if the volatile period is long enough then mixed growth always occurs for some initial values. A quantitative threshold can be estimated in terms of two constants $q_c,\tau_c$ that depend only on $\alpha$, which we call the {\it cutoff constants}. They are easiest to interpret when the production function is linear with unit rate, $P(x)=x$. Then $\tau_c$ is the time-to-go at the end of the mixed period, i.e. (maximal) time remaining until the end of plant's lifetime, and $q_c$ is the ratio of reproductive to vegetative mass at that time. The constants are easy to find to any precision by solving a system of two non-linear equations, see Section \ref{CutPow}. For $\alpha=1, \pm\frac12$ and $-2$ they can be found analytically, and both grow to $\infty$ when $\a\to-\infty$. Some representative values are given in Table \ref{qctauc}, for $L(y)=\ln y$ they were found in \cite{KR}.
\begin{theorem}\label{Mexist}
Let $P(x)$ be concave and $L(y)=\frac{y^\alpha}{\alpha}$ with $\alpha<1$. If $T>\frac{q_c+\tau_c}{P'(x(T_0))}$ then any optimal schedule with $y(0)=0$ has a mixed growth arc. In particular, when $P(x)=kx^\beta$ it suffices that  $T>\frac{q_c+\tau_c}{k\b}\big(x_0^{1-\beta}+(1-\beta)kT_0\big)$.
\end{theorem}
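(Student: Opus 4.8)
The plan is to use Theorem~\ref{Schedules} to reduce the claim to excluding bang-bang schedules, and then to rule those out by a switching-function computation once $T$ exceeds the stated threshold. First I would check that the hypotheses of Theorem~\ref{Schedules}(ii) hold: since $\ln$ is increasing and concave and $P$ is concave and positive, $\ln P=\ln\circ P$ is concave; and for $L(y)=\frac{y^\alpha}{\alpha}$ with $\alpha<1$ we have $L$ strictly increasing together with $\ln L'(y)=(\alpha-1)\ln y$ strictly convex. Hence every optimal schedule with $y(0)=0$ is one of R, VR, VMR, RMR, VRMR, and the last three already contain a mixed arc. It therefore suffices to show that neither purely bang-bang type, R or VR, can be optimal when $T>\frac{q_c+\tau_c}{P'(x(T_0))}$.

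Both bang-bang types terminate in an R arc reaching $t=T_0+T$, so the second step is to analyze this final arc through the Pontryagin maximum principle. With a free right endpoint and no terminal payoff the transversality conditions give $\psi_x(T_0+T)=\psi_y(T_0+T)=0$, and integrating the adjoint equations $\dot\psi_x=-P'(x)[u\psi_x+(1-u)\psi_y]$, $\dot\psi_y=-L'(y)$ backward confirms that the last arc is R. On a bang-bang trajectory the single switch lies in the safe period (forced by the singularity $L'(0)=\infty$, which makes $y>0$ necessary throughout the volatile period), so the vegetative mass is frozen at $x_c=x(T_0)$ and $\dot y=P(x_c)$ is constant, making $y$ affine on the R arc. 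I would then reduce the switching function $\sigma=\psi_x-\psi_y$ on the volatile part of this arc to a one-variable expression $h(\xi;\mu)$ in the consumed-mass fraction $\xi$ and the single dimensionless parameter
\[
\mu=\frac{P'(x_c)\,y_f}{P(x_c)},\qquad y_f=y(T_0+T)=P(x_c)(T_0+T-t_m),
\]
which in the linear case $P(x)=x$ is exactly the terminal ratio $y_f/x_c$.

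The heart of the proof is the dichotomy governed by the cutoff constants. The singular (mixed) arc is initiated precisely when $\sigma$ first acquires an interior double zero, and the tangency conditions $h(\xi_c;\mu_c)=0$, $h'(\xi_c;\mu_c)=0$ are the two nonlinear equations that define $(q_c,\tau_c)$; they identify the critical value $\mu_c=q_c+\tau_c$, with $q_c,\tau_c$ the ratio and time-to-go at the junction in the linear normalization. A bang-bang R arc requires $\sigma$ to keep the sign dictated by $u=0$ throughout the volatile part, i.e.\ $\mu\le\mu_c$. Since $x_c=x(T_0)$ and $t_m<T_0$ give $\mu=P'(x(T_0))(T_0+T-t_m)>P'(x(T_0))\,T$, the hypothesis $T>\frac{q_c+\tau_c}{P'(x(T_0))}$ yields $\mu>q_c+\tau_c=\mu_c$, so the final R arc cannot stay bang-bang and a mixed arc must appear. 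For the power production function $P(x)=kx^\beta$ I would then bound $x(T_0)$ above by its pure-vegetative value, $x(T_0)^{1-\beta}\le x_0^{1-\beta}+(1-\beta)kT_0$ (from $\frac{d}{dt}x^{1-\beta}=(1-\beta)k$ under $u=1$); since $P'$ is decreasing this replaces $P'(x(T_0))$ by a smaller quantity and turns the threshold into the explicit sufficient bound $T>\frac{q_c+\tau_c}{k\beta}\big(x_0^{1-\beta}+(1-\beta)kT_0\big)$.

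The step I expect to be the main obstacle is establishing the dichotomy ``bang-bang $\iff\mu\le\mu_c$'' rigorously. One must show that for $\mu>\mu_c$ the function $h(\cdot;\mu)$ genuinely changes sign in the interior of the volatile range, rather than merely touching zero, and that this happens for every fitness index $\alpha<1$ despite the sign of the prefactor $1/\alpha$ flipping between the regimes $0<\alpha<1$ and $\alpha<0$. This needs monotonicity of $h$ in $\mu$ at $\xi_c$ together with the Kelley (generalized Legendre--Clebsch) condition to confirm that the tangency initiates an \emph{optimal} singular arc; the strict convexity of $\ln L'$ is exactly what secures the latter. Care is also required with the safe/volatile split, since $\dot\psi_y=-L'(y)$ is active only on $[T_0,T_0+T]$, so the reduction to $h(\xi;\mu)$ must be carried out on the volatile portion of the arc and one must verify that $\xi_c$ lies within it, which holds in the regime $T>\frac{q_c+\tau_c}{P'(x(T_0))}$.
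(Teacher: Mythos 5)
Your overall architecture matches the paper's own mechanism: reduce to excluding R and VR schedules via Theorem \ref{Schedules}(ii), then show the final R arc is incompatible with the maximum principle once $T>\frac{q_c+\tau_c}{P'(x(T_0))}$, with $(q_c,\tau_c)$ characterized exactly as you say by a double zero (tangency) of the switching function along an R arc; your parameter $\mu$ is just the terminal adjusted mass ratio $q(T_0+T)$, conserved as $q+\tau$ along the volatile part of the R arc, and the paper runs the same comparison in Theorem \ref{Mexists} and Corollary \ref{Marcs}. However, there is a genuine gap in the step that places the bang-bang switch inside the safe period. Your justification --- that $L'(0)=\infty$ ``makes $y>0$ necessary throughout the volatile period'' --- is a payoff-blowup argument that only works for $\alpha\le 0$, where $L(y)=\frac{y^\alpha}{\alpha}\to-\infty$ as $y\to0^+$. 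For $0<\alpha<1$ one has $L(0)=0$, so a VR schedule switching at some $t_s>T_0$ has a perfectly finite payoff and nothing forces $y>0$ on $[T_0,t_s]$ on payoff grounds. This matters: if $t_s>T_0$ were allowed, then $x_c=x(t_s)>x(T_0)$ and $T_0+T-t_s<T$, so your key inequality $\mu>P'(x(T_0))T$ reverses and the contradiction evaporates. The correct exclusion is the junction condition: at a VR junction with $y_s=0$, immediately afterwards $\dot{\mu}=-L'(y)\to-\infty$ while $\dot{\lambda}=-\mu P'(x)$ stays bounded, so $\lambda>\mu$ just after the switch, contradicting $u=0$; equivalently the cutoff function there equals $L(y_T)-L(0)-\xi L'(0)=-\infty<0$, i.e.\ the junction is not on the active part of the switching surface. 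This is precisely what Lemma \ref{VRabove} provides (optimal VR junctions require $q_s\ge q_c>0$, whereas here $q_s=0$), so the repair is available but must be made.

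On the step you flag as the main obstacle, you are overcomplicating. No Kelley/generalized Legendre--Clebsch verification is needed, because the theorem does not assert that the singular arc is optimal --- only that no bang-bang PMP candidate is, which together with existence (Filippov) forces an M arc on the optimum. Moreover, only one direction of your dichotomy is needed, and it is a one-line evaluation rather than a monotonicity-in-$\mu$ argument: since $P'(x(T_0))T>q_c+\tau_c\ge\tau_c$, the final R arc reaches back to times where $\tau$ slightly exceeds $\tau_c$, and there $q=\mu-\tau>q_c$; the region $q>q_c$, $\tau>\tau_c$ lies above the switching surface (by Lemma \ref{VRabove} and the monotone shape of the switching curve), so Lemma \ref{SubR} makes any schedule that remains R afterwards suboptimal. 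This evaluation at $\tau=\tau_c$ is exactly how the paper argues the R-initial case of Theorem \ref{Mexists}, and it handles all $\alpha<1$ uniformly, with no case split on the sign of $\frac1\alpha$.
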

\vspace{-1em}
\begin{table}[!ht]
\centering
\begin{tabular}{|c|c|c|c|c|c|}
\hline
 $\alpha$   & $1$ & 1/2 & $0$ & $-1/2$ & $-2$\\  \hline
$q_c$   & $0$ & 1/6 & $0.558$ &  1 & $1+\sqrt{2}$ \\ \hline
$\tau_c$   & $2$ & 5/2 & $2.793$ & 3 & $2+\sqrt{2}$ \\ \hline
\end{tabular}
\caption{\label{qctauc}Cutoff constants that determine occurrence of mixed growth.}
\end{table}
\noindent We should point out that unlike the condition on $T_0$, which is always satisfied in realistic circumstances, $T$ may well not be ``long enough". In fact, the minimal length explicitly depends on $\alpha$, and mixed phase  only occurs for a limited range of its values, other parameters being equal. 

Much more detailed results are obtained for linear production functions $P(x)=kx$, when the model can be solved (almost) analytically, see Section \ref{Lin}. The optimal growth trajectories are fully characterized by the evolution of mass ratio $q=\frac{y}{x}$, and the full picture of their dependence on initial values and model parameters can be visualized on a 2D diagram. This generalizes the results of \cite{KR} for $L(y)=\ln y$.

\subsection{Discussion}\label{Disc}

Since Cohen's work in 1960s, geometric mean maximization was associated with risk aversion and conservatism, while arithmetic mean maximization with more risk taking. It is easy to see why. In a large well-mixed population when lifetimes are uncorrelated we get a large number of essentially independent trials for the same allele type. Even if many plants die without offspring, survivors leave more than enough to compensate. On the other hand, when the correlation is perfect population size does not matter much because all plants die almost at once (e.g. due to water shortage or a cold spell). A vivid illustration of this point is that geometric mean becomes $0$ if even one of the terms is $0$. But in general, seasonal and individual adult mortality factors (like grazing by animals or microsite environment) combine and the correlation is neither perfect nor zero.

We argued that some non-linear effects of density dependence (such as lifetime variability and seedling mortality) can be represented by introducing a non-linear fitness function $L(y)$, whose mathematical expectation is maximized. Its role is analogous to the role of utility function in economics, but only linear and logarithmic functions were previously considered in the ecological literature. When reproductive mass is proportional to the effective seed yield  $L(y)=\frac{y^\alpha}{\alpha}$ was shown to be a natural choice, and the fitness index $\a$ being directly related to the correlation coefficient $\rho$ between lifetime distributions, $\a=1-\rho$. By varying $\a$ from $0$ to $1$ (correspondingly, $\rho$ from $1$ to $0$) our model traces changes in the optimal schedule. If juvenile mortality  saturates effective seed yields, negative values of $\a$ become meaningful as well, while seed predation and low disperser efficiency at low seed yields may lead to ``gambler's" fitness with $\a>1$. 

Two observations stand out. First, the time of maturity, when reproduction starts, grows with $\a$, moving from within the safe period for small values to within the volatile period for $\a\geq1$. Second, while on both ends optimal schedules are bang-bang, the transition is mediated by schedules with mixed growth for a wide range of intermediate values. On the upper end the mixed period disappears at $\a=1$ for any values of other parameters, and does not appear for ``gambler's" fitness $\a>1$, but on the lower end the threshold value depends on the initial plant size and the lifetime distribution. 
\begin{figure}[!ht]
\vspace{-0.1in}
\begin{centering}
(a)\includegraphics[height=25mm,width=40mm]{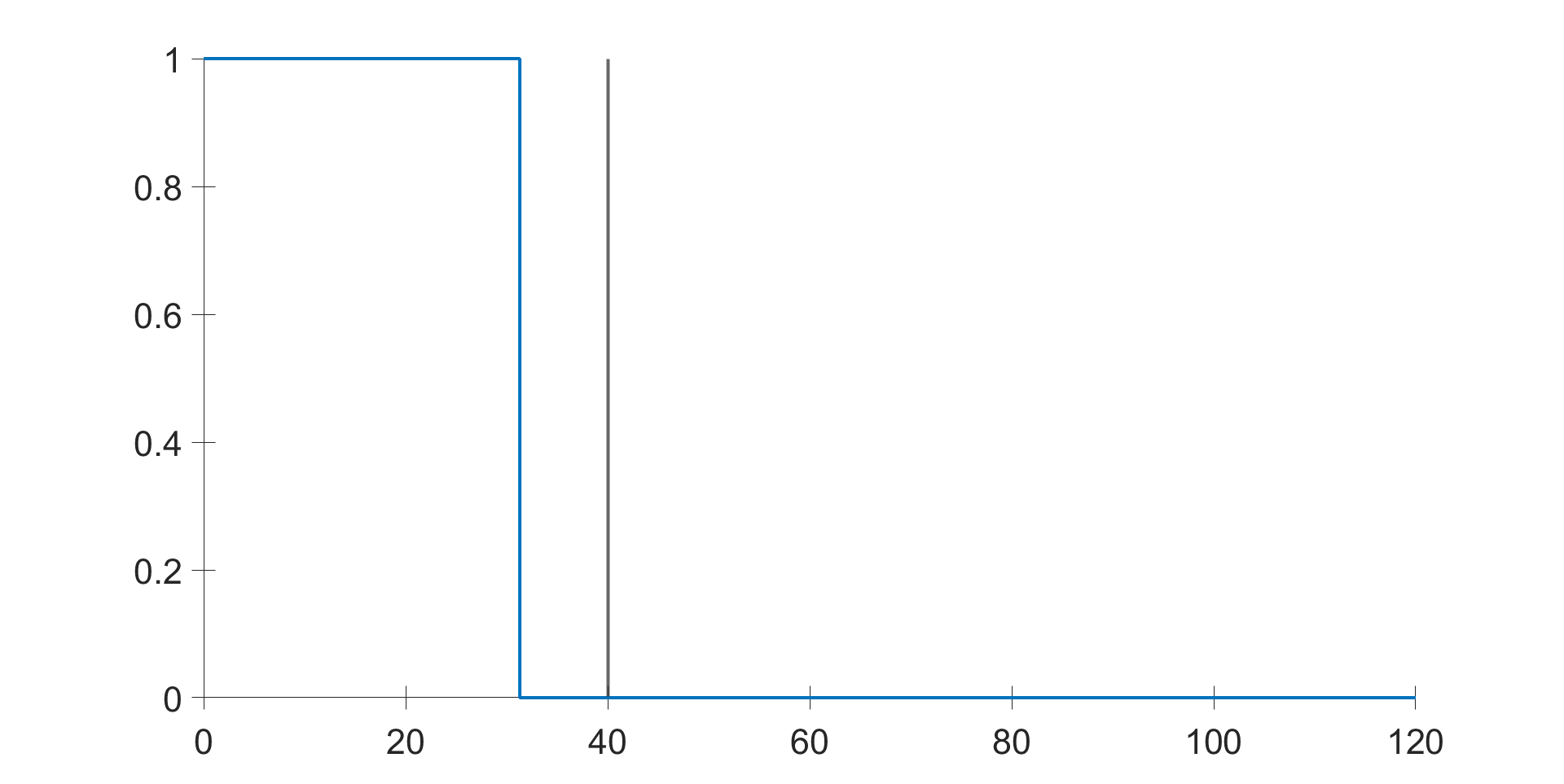}\hspace{0.2in} 
(b)\includegraphics[height=25mm,width=40mm]{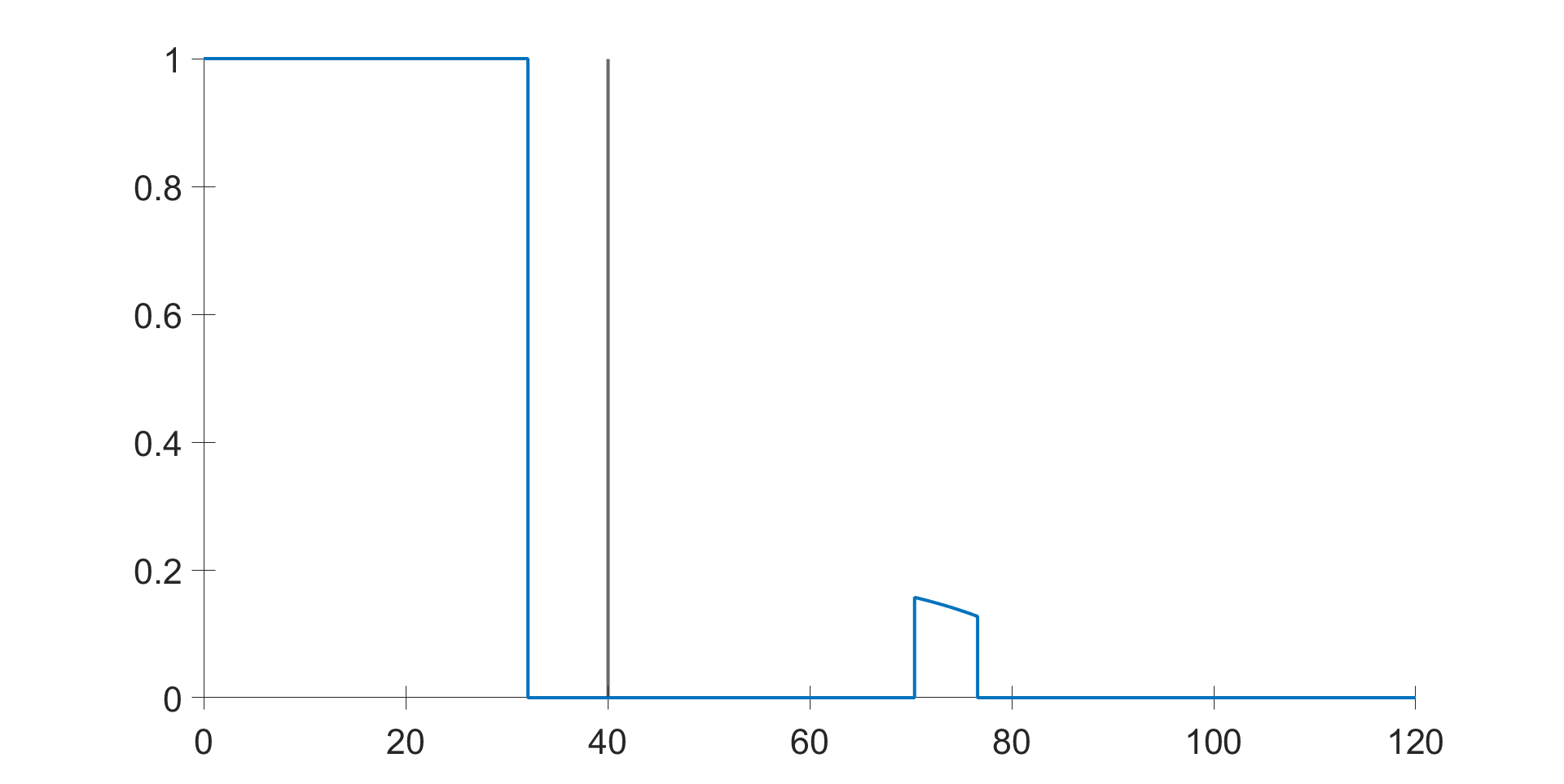} \hspace{0.2in} 
(c) \includegraphics[height=25mm,width=40mm]{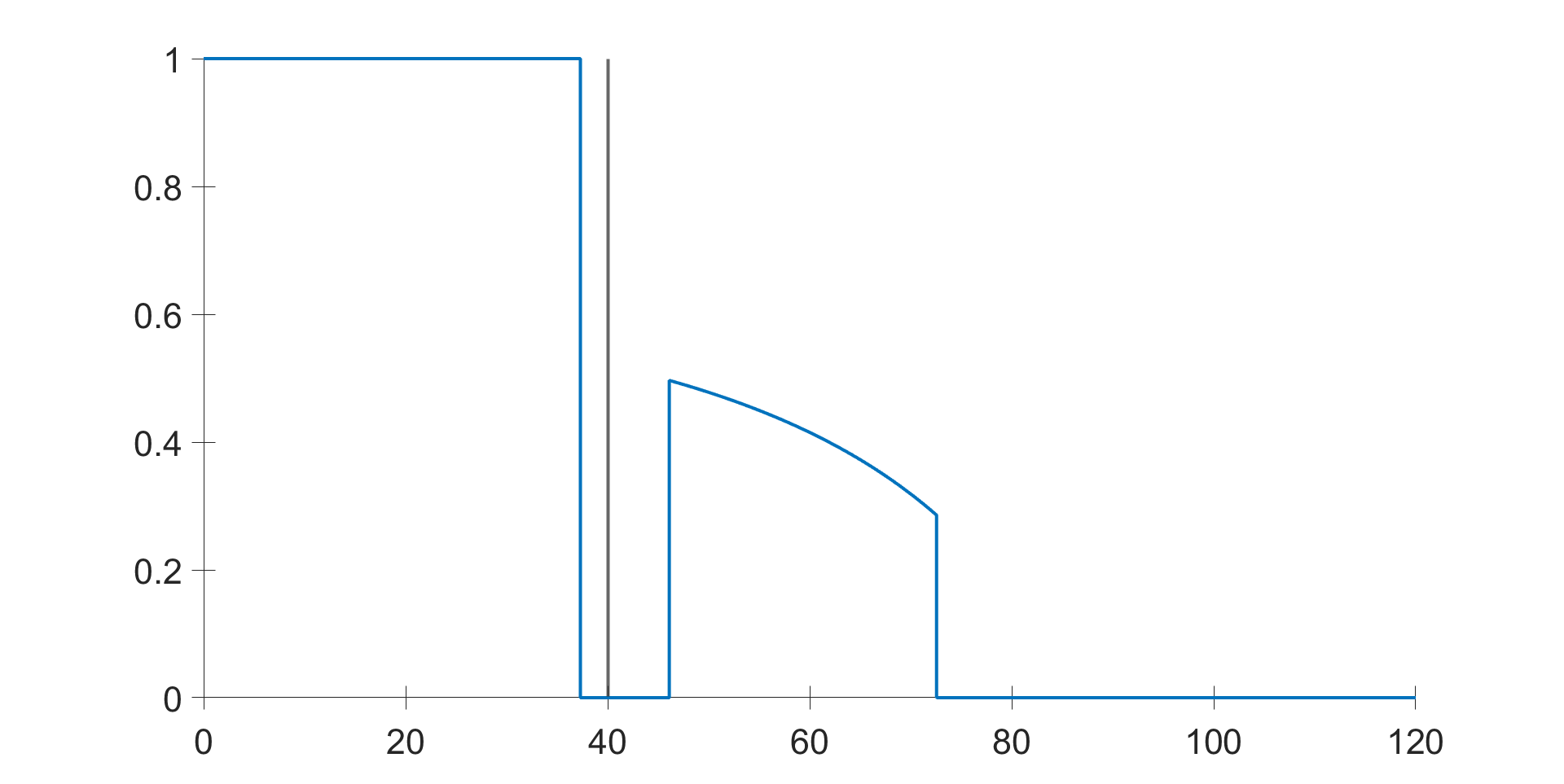}\\
(d)\includegraphics[height=25mm,width=40mm]{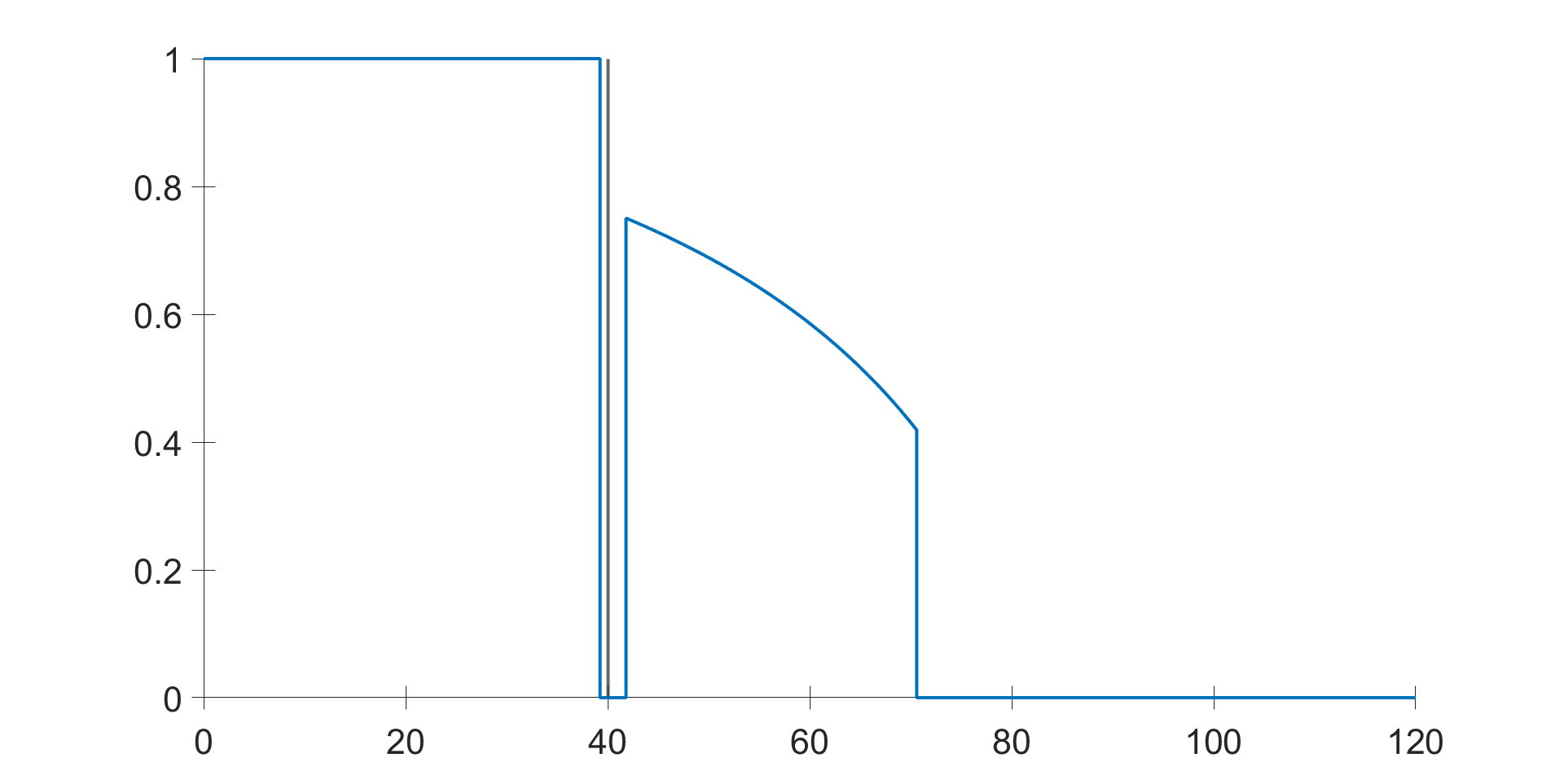}\hspace{0.2in} 
(e)\includegraphics[height=25mm,width=40mm]{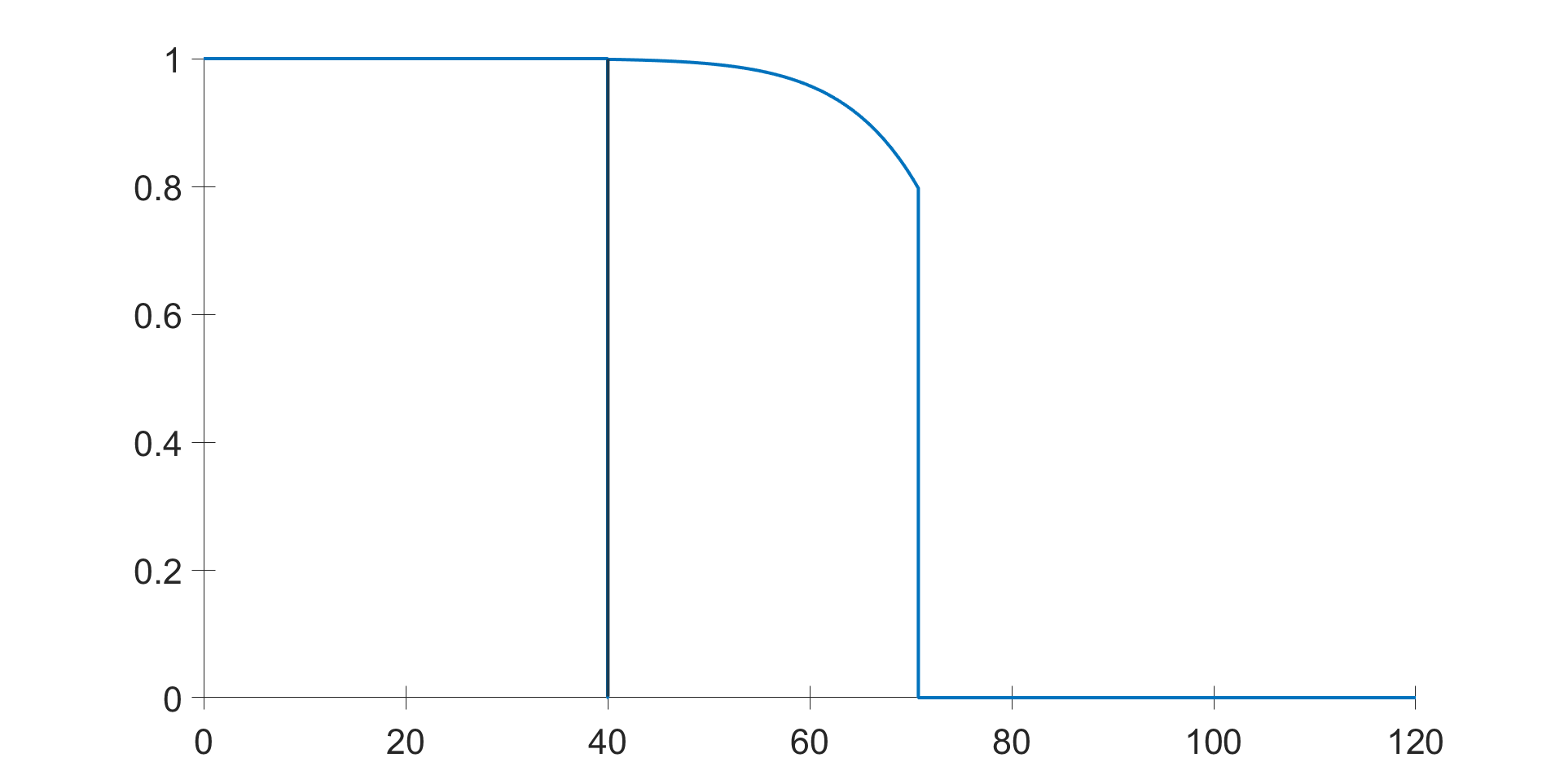} \hspace{0.2in} 
(f) \includegraphics[height=25mm,width=40mm]{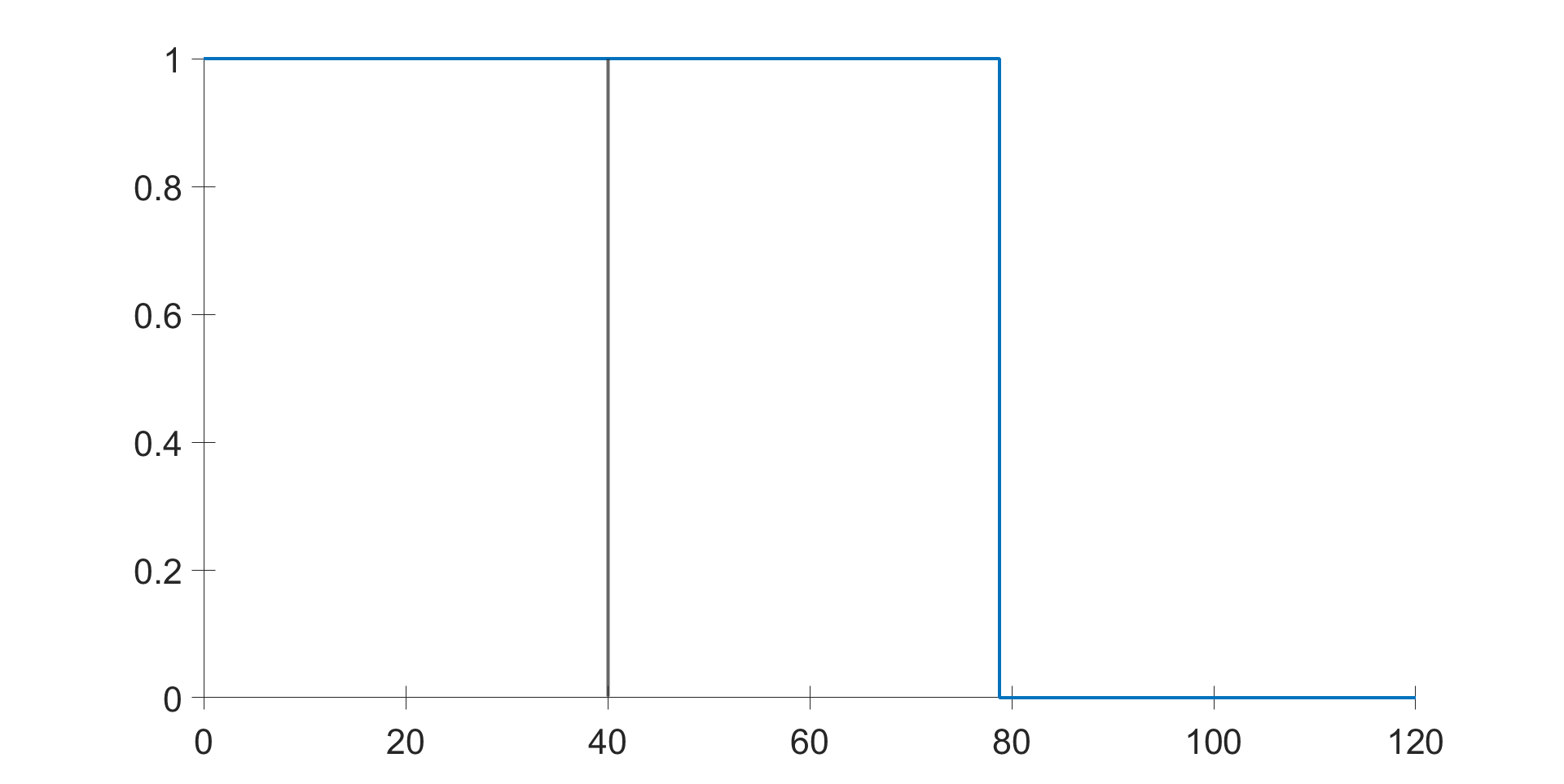}
\hspace{0.2in}
\par\end{centering}
\vspace{-0.2in}
 \hspace*{-0.1in}\caption{\label{OptU} Optimal schedules (graphs of $u(t)$) for $P(x)=0.1x^{0.8}$ with $T_0=40$ and $T=80$ days, $x_0=2.5$ mg, $y_0=0$, and (a) $\a=-4$ (b) $\a=-3$ (c) $\a=-0.5$ (d) $\a=0.1$ (e) $\a=0.7$ (f) $\a=1.5$.}
\end{figure}

Figure \ref{OptU} shows a progression of optimal schedules as $\a$ increases for a choice of parameter values. While thresholds and durations vary with model parameters the pattern is typical. The duration of mixed period increases with $\a$, and its nature shifts from reproductive growth dominating at smaller values to vegetative growth dominating as $\a$ approaches $1$. For the chosen parameter values it first appears at $\alpha\approx-3.5$, and at $\a=1$ it merges with the initial period of vegetative growth. No mixed growth occurs for $\a>1$, in accordance with Theorem \ref{Schedules}(i). In the course of mixed growth period the fraction of net production allocated to reproduction increases with time. The intermediate reproductive period, that precedes mixed growth, is barely noticeable for a range of $\a$, in this case $0<\a<1$, and is likely an artifact of the sharp separation between safe and volatile periods in our model. In this range the switch to mixed growth is very close to the beginning of the safe period.

For a continuous distribution of lifetimes one can still expect that the time of maturity moves from the lower tail of the lifetime distribution towards the average as $\a$ increases, and the initial vegetative growth transitions directly into mixed growth. It then follows a balanced path where the marginal values of vegetative and reproductive output are equal. The model also predicts that mixed growth is often followed by an opportunistic reproductive spurt at the end of plant's lifetime.

Mechanisms that support mixed growth in deterministic models include increasing the number of allocation sectors \cite{IR,MK}, introducing size-dependent mortality and/or fecundity \cite{JBMD,PS}, and size-dependent constraints on the size of reproductive organs \cite{IG,ZK}. As we saw, fluctuating environments lead to mixed growth being optimal under certain conditions even without any additional mechanisms. In continuous time models this phenomenon was first discovered by King and Roughgarden \cite{KR} for geometric mean maximizers and linear production. Our results extend the range where it is to be expected, and clarify limiting conditions on its occurrence. It is robust over a range of production functions with diminishing returns. It is ruled out for linear and convex fitness functions, as well as saturating ones that diminish selective advantage returns on reproduction too quickly. This last threshold is sensitive to the initial mass and lifetime distribution of plants in addition to the fitness index.

The structure of optimal trajectories $x(t),y(t)$ has a nice geometric description in terms of singular and switching surfaces that collect balanced growth paths and switching points between vegetative and reproductive growth periods, respectively. For linear production functions optimal dynamics depends only on the ratio of vegetative and reproductive masses, and can be presented on a two-dimensional diagram (Figures \ref{KRdiagram} and \ref{Extdiagram}), that visually manifests many features of the model predictions valid also in the general case. For $L(y)=\ln y$ such a diagram was presented already in \cite{KR}. Non-linear production shifts the onset of various stages depending on the initial values, but the overall pattern remains the same. As can be expected, mixed growth period lasts longer when the production function is made proportionally bigger, or less concave, because both provide greater opportunities for leveraging larger plant size when generating seeds.

In summary, our results suggest that when the length of lifetime fluctuates bang-bang behavior predicted by deterministic models cannot be taken as the baseline against which other mechanisms that induce mixed growth are assessed. The interaction of risk-averse bet-hedging with such mechanisms, as well as with seed bank effects, is of great interest for further study.
\bigskip

The rest of the paper is mostly mathematical and is organized as follows. In Sections \ref{Struct}-\ref{Lin} we study the model for the volatile period only, which is most challenging mathematically. In particular, in Section \ref{Struct} we introduce the marginal values of vegetative and reproductive masses, and characterize optimal schedule types, as needed to prove Theorem \ref{Schedules}. In Section \ref{Geom} we introduce geometric objects that determine which schedule type is selected for given initial values, and establish some of their properties. In Section \ref{Power} we specialize to power fitness functions, and prove the main ingredients for Theorem \ref{Mexist}. Section \ref{Lin} presents a detailed analytic solution for the case of linear production. Finally, in Section \ref{Ext} we incorporate the volatile period results into the full lifetime model.

\section*{List of symbols}

\begin{tabular}{c r c c r}
  $x,y$ & vegetative, reproductive mass, \ref{Model} &  & $\l,\mu$ & costate variables for $x$, $y$, \ref{Switch}\\
  $T_0,T$ & safe, volatile period, \ref{Model} & & $\xi$ & adjusted vegetative mass, \ref{Switch}\\
  $L(y)$ & fitness function, \ref{RepFit}& & $\tau$ & adjusted time-to-go, \ref{Switch} \\
  $P(x)$ & production function, \ref{Model}& & $q$, $r$ & adjusted mass ratios, \ref{Power}\\
  $\a$ & fitness index, \ref{RepFit}& & $q_c$, $r_c$, $\tau_c$ & cutoff constants, \ref{Power}; \ref{CutPow}\\
   $\Sw(\tau,\xi,y)$ & switching function, \ref{Switch}& & $t_s$ & VR switching time, \ref{Model}; \ref{Ext}\\
  $\Cut(\tau,\xi,y)$ & cutoff function, \ref{Switch}& & $t_m$ & time of maturity, \ref{Switch}\\
\end{tabular}

\section{Optimal schedule types for the volatile period}\label{Struct}

Most of the paper studies growth during the volatile period, which is more challenging mathematically, and the full lifetime model is considered in Section \ref{Ext}. Until then, we take the volatile period to start at $t=0$, but it is important to keep in mind that the plant is likely to be mature already at the starting point. 

In the following we use the standard notation and terminology of the optimal control theory \cite{LW,SS}. The payoff functional for the volatile period simplifies to:
\begin{equation}\label{Pay}
J[u](x_0,y_0,T)=\int_0^TL(y)dt\to\max\,.
\end{equation}
The costate (adjoint) variables are denoted $\lambda$ and $\mu$, and represent the marginal values of increasing $x$ and $y$, respectively. In other words, $\lambda(t)$ is the increase in the payoff from a unit increase in the vegetative mass at time $t$, and similarly for $\mu(t)$ \cite{LW}. The Hamiltonian of the model is
\begin{equation}\label{Ham}
H(x,y,\lambda,\mu,u)=L(y)+\big(\mu + (\lambda - \mu)u\big)P(x).
\end{equation}
The costate equations and the transversality conditions at the terminal time $T$ are given by Pontryagin's theorem:
\begin{equation}\label{Adj}
\begin{cases}
\dot{\lambda} = -\big(\mu + (\lambda - \mu)u\big)P'(x),\ \ \lambda(T)=0\\ 
\dot{\mu} = -L'(y),\ \ \mu(T)=0.\\
\end{cases}
\end{equation}
Since our system is autonomous, it satisfies the stationarity condition, i.e. the Hamiltonian remains constant along any extremal trajectory. The transversality conditions and \eqref{Ham} imply that this constant is equal to $L(y(T))$. Since the state equations and the payoff functional are linear in control we are dealing with a control-affine system, a class which is well known to feature both bang-bang and singular controls \cite{LW}. 

According to the Pontryagin maximum principle, the optimal control is obtained by maximizing the Hamiltonian as a function of $u$ at each time $t$. Since $\frac{\partial H}{\partial u}=(\lambda - \mu)P(x)$ does not depend on the control explicitly, the maximum of $H$ as a function of $u$ is attained at one of the endpoints, $0$ or $1$, as long as $\frac{\partial H}{\partial u}\neq0$. Since $P(x)>0$ for positive $x$ we have:
\begin{equation}\label{Bang}
    u(t)=\begin{cases}
        1,\ \lambda(t)> \mu(t)\\
        0,\ \lambda(t)<\mu(t)\,.
    \end{cases}
\end{equation}
Therefore, singular controls $0<u<1$ corresponding to mixed growth are only possible when $\lambda(t)=\mu(t)$ over an entire interval, in which case $\dot{\lambda}(t)=\dot{\mu}(t)$. It follows from the costate equations that $\lambda(t)$ and $\mu(t)$ are continuously differentiable as functions of $t$, as long as the control is piecewise continuous. Moreover, $\mu=\frac{L'(y)}{P'(x)}$ on any singular arc, and differentiating this identity while taking into account the state equations yields a feedback expression for the singular control: 
\begin{equation}\label{SingControl}
u = \frac{L''(y)+L'(y)\frac{P'(x)}{P(x)}}{L''(y)+L'(y)\frac{P''(x)}{P'(x)}}.
\end{equation}
While equations \eqref{StateEq} with this expression for singular $u$ can not usually be integrated explicitly, some useful information can be extracted from the stationarity condition. On a singular trajectory the Hamiltonian reduces to
\begin{equation}\label{SingHam}
H(x,y)=L(y)+\frac{P(x)}{P'(x)}L'(y),
\end{equation}
and remains constant, which gives an algebraic relation between $x$ and $y$ that holds during periods of mixed growth.

We will now determine possible switching structures of optimal control in the allocation problem \eqref{StateEq}-\eqref{Pay}.  Although {\it a priori} any sequence of arc types in any order is possible, it turns out that the possibilities in our case are quite limited. This is largely because the costate equations guarantee that both $\lambda(t)$ and $\mu(t)$ are strictly monotone decreasing. We summarize the result in the following theorem. Note that $\ln P$ is concave when $P$ is since $P$ and $P'$ are positive.
\begin{theorem}\label{5Structures} Let $P(x)$ and $L(y)$ be twice differentiable and $P(x)$ be positive for $x,y>0$. Then we have for the optimal trajectories of problem \eqref{StateEq}, \eqref{Pay}:

{\textup{(i)}} If $L(y)$ is strictly monotone increasing, i.e. $L'(y)>0$, then the final arc of an optimal trajectory is an R arc;
\smallskip 

{\textup{(ii)}} If $\ln P$ is concave, and an optimal trajectory has a V arc, this arc is initial;
\smallskip 

{\textup{(iii)}} If $\ln L'$ is strictly convex, and an optimal trajectory has an R arc preceding an M arc, this R arc is initial;
\smallskip 

{\textup{(iv)}} If $P(x)$ and $L(y)$ meet all the conditions of \textup{(i)-(iii)} then any optimal trajectory has no more than two switches, and one of the following control structures: R, MR, VR, RMR, VMR.
\end{theorem}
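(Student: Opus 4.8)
The plan is to track the switching function $\phi:=\lambda-\mu$, whose sign fixes the control by \eqref{Bang}: a V arc has $\phi>0$, an R arc has $\phi<0$, and an M arc has $\phi\equiv0$ on an interval. I would prove (i)--(iii) separately and then obtain (iv) by enumeration. Part (i) is immediate: transversality gives $\phi(T)=0$, and at $t=T$ the bracket $\mu+(\lambda-\mu)u$ vanishes for every $u$, so $\dot\lambda(T)=0$ and $\dot\phi(T)=-\dot\mu(T)=L'(y(T))>0$; hence $\phi<0$ just below $T$ and the final arc is R.

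The common tool for (ii) and (iii) is the quantity $\Sigma:=\mu-L'(y)/P'(x)$, which vanishes exactly on singular arcs. At any zero of $\phi$ (in particular at every junction, where $\lambda=\mu$) and throughout any R arc, the costate equations give $\dot\phi=-P'(x)\Sigma$; since $P'>0$ this ties the sign of $\dot\phi$ to $-\Sigma$, yielding the junction rule that $\Sigma\le0$ wherever $\phi$ increases through $0$ and $\Sigma\ge0$ wherever it decreases through $0$. For (ii) I would differentiate $\Sigma$ along a V arc ($u=1$, $\dot y=0$) to get $\dot\Sigma=L'(y)\,P(x)^2(\ln P)''(x)/P'(x)^2$, so concavity of $\ln P$ makes $\Sigma$ nonincreasing there. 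If a V arc $[t_0,t_1]$ were not initial, the entry rule gives $\Sigma(t_0)\le0$ and the exit rule (or $\phi(T)=0$) gives $\Sigma(t_1)\ge0$; monotonicity then squeezes $\Sigma\equiv0$ on the arc. But where $\Sigma=0$ on a V arc one computes $\dot\phi=-\phi\,P'(x)$, and with $\phi(t_0)=0$ uniqueness for this linear ODE forces $\phi\equiv0$, contradicting $\phi>0$. Hence every V arc is initial, which also caps their number at one.

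Part (iii) is the crux. Along an R arc ($u=0$, $\dot y=P$, $x$ frozen) the same computation gives $\dot\Sigma=-L'(y)\big(1+\tfrac{P(x)}{P'(x)}\tfrac{L''(y)}{L'(y)}\big)=:-L'(y)G(y)$. Strict convexity of $\ln L'$ says precisely that $L''/L'=(\ln L')'$ is strictly increasing, so $G$ increases along the arc and $\dot\Sigma$ changes sign at most once, from $+$ to $-$. Now suppose a non-initial R arc $[t_0,t_1]$ abuts an M arc. Continuity with the singular arc gives $\Sigma(t_1)=0$, while $\phi<0$ rising to $0$ forces $\Sigma\le0$ near $t_1^-$, hence $\dot\Sigma(t_1^-)\ge0$ and $G(t_1)\le0$; since $G$ is increasing, $G<0$ on $[t_0,t_1)$, so $\Sigma$ is strictly increasing and $\Sigma(t_0)<\Sigma(t_1)=0$. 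This contradicts the entry rule $\Sigma(t_0)\ge0$, since a non-initial R arc is entered from a V or M arc. Thus such an R arc is initial. The main obstacle is precisely isolating $\Sigma$ and recognizing that strict convexity of $\ln L'$ is exactly what monotonizes $G$ and thereby controls the sign changes of $\dot\Sigma$ on R arcs; everything else is junction bookkeeping.

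Finally, assuming extremals have finitely many arcs, (iv) follows by enumeration. By (i) the last arc is R; by (ii) there is at most one V arc, necessarily initial. If two M arcs occurred, the later one would be preceded by a non-initial V (excluded by (ii)) or a non-initial R (excluded by (iii)), so there is at most one M arc; its successor must be R (a following V would be non-initial), and its predecessor, if any, must be initial. Splitting into the case with no M arc (giving R and VR) and the case with one M arc (giving MR, VMR, RMR) yields exactly the five listed structures, each with at most two switches.
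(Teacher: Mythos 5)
Your proof is correct, and its core parts (ii)--(iii) take a genuinely different route from the paper's. Part (i) and the enumeration in (iv) coincide with the paper's argument. For (ii)--(iii) the paper works at second order in the costates: on a V arc it computes $\ddot{\lambda}=\big(P'(x)^2-P''(x)P(x)\big)\lambda\geq0$ and $\ddot{\mu}=0$, so $\phi=\lambda-\mu$ is convex and cannot vanish at both ends of the arc; at an RM junction it shows $\ddot{\lambda}(t_s)\leq\ddot{\mu}(t_s)$, rewrites this as $P'/P\leq-L''/L'$, and uses strict monotonicity of $(\ln L')'$ to propagate $\ddot{\phi}<0$ backward, so that $\phi<0$ on the whole preceding R arc. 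You instead run a first-order analysis of the singular defect $\Sigma=\mu-L'(y)/P'(x)$, which vanishes identically on M arcs, and use $\dot{\phi}=-P'\Sigma$ (valid at zeros of $\phi$ and on R arcs) to turn junctions into sign conditions on $\Sigma$. On R arcs the two arguments are equivalent (there $\dot{\Sigma}=-\ddot{\phi}/P'$, and your $G(t_1)\leq0$ is literally the paper's junction inequality), but on V arcs your squeeze $\Sigma(t_0)\leq0\leq\Sigma(t_1)$ plus monotonicity, followed by ODE uniqueness for $\dot{\phi}=-P'\phi$, genuinely differs from the paper's convexity argument. What your packaging buys: a single quantity with uniform junction bookkeeping handles both parts, and $\Sigma$ is, up to a positive factor, the paper's cutoff function (cf.\ \eqref{CutMarg}), so your proof anticipates the geometric machinery of Section \ref{Geom}. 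One step should be tightened: in (iii), ``$\phi<0$ rising to $0$ forces $\Sigma\leq0$ near $t_1^-$'' does not follow as stated, since a priori $\Sigma$ could oscillate in sign; what you need, and what does follow, is $\dot{\Sigma}(t_1^-)\geq0$: if $G(y(t_1))>0$, then by continuity $\dot{\Sigma}<0$ and hence $\Sigma>0$ just left of $t_1$, making $\phi$ strictly decrease to $\phi(t_1)=0$ and thus $\phi>0$ there, contradicting that the arc is R. Finally, the finiteness-of-arcs assumption you flag in (iv) is also implicit in the paper's proof, so nothing is lost there.
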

\begin{proof}
{\textup{(i)}} From the transversality conditions, $\mu(T) = \lambda(T) = 0$, and the adjoint equations \eqref{Adj}, we have that $\dot{\lambda}(T) = 0$, and $\dot{\mu} = -L'(y(T)) < 0$. Therefore, in a small left neighborhood of $T$ we have $\lambda(t)<\mu(t)$, and $u = 0$ by the maximum principle.
\smallskip 

{\textup{(ii)}} By contradiction, supposed that there is a switch to the V arc at time $t_s>0$. We have that $\lambda(t_s)$ =  $\mu(t_s)$ for the switching time $t_s$. As the arc before $t_s$ is V, we know that $\lambda$ must have been greater than $\mu$ for $t$ close to $t_s$, so that $\dot{\lambda}(t_s)\geq\dot{\mu}(t_s)$. 
Taking the derivative of $\eqref{Adj}$ when $u = 1$, we get $\ddot{\lambda}=(P'(x)^2-P''(x)P(x))\lambda$. Since $(\ln P)''=\frac{-P'(x)^2+P''(x)P(x)}{P^2(x)}<0$ we have $\lambda >0$, and $\ddot{\mu} = \frac{d}{dt}\big(-L'(y)\big)=0$, since $y$ is constant. Thus, if $\lambda > \mu$, $\mu$ and $\lambda$ cannot intersect again on $[0,t_s)$, as claimed. 
\begin{figure}[htbp]
\begin{centering}
\includegraphics[width=120mm]{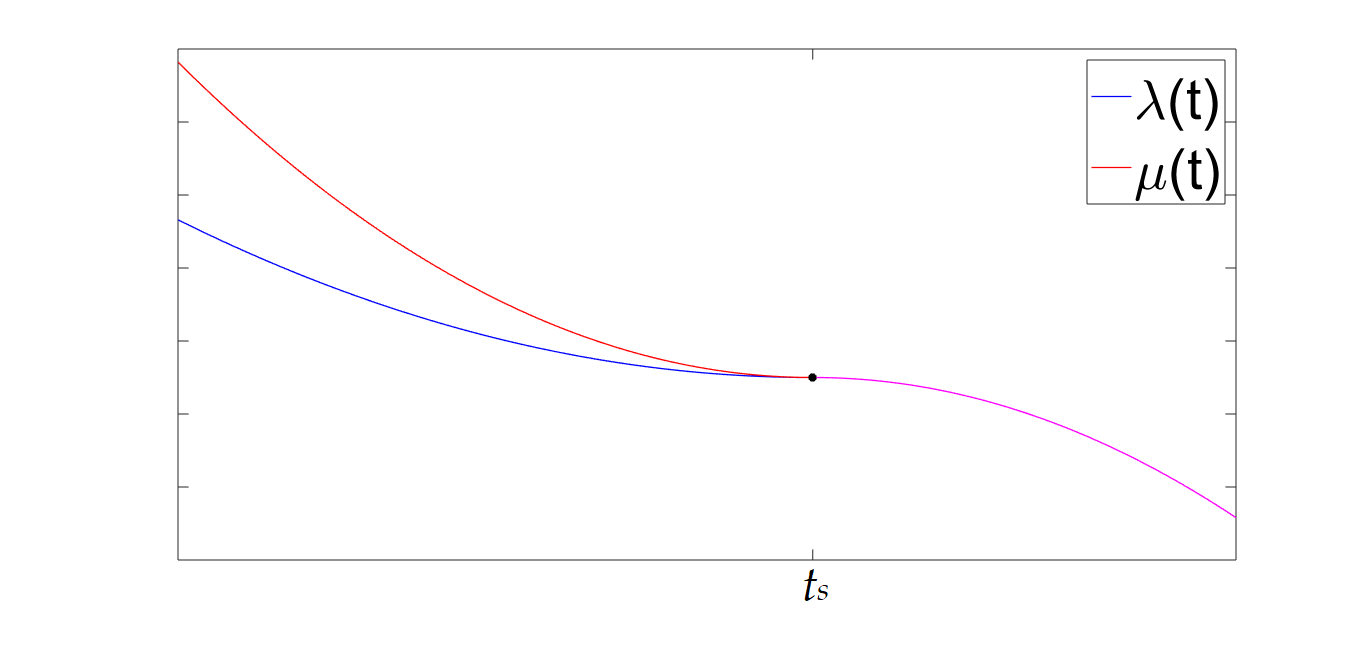}
\par\end{centering}
\vspace{-0.2in}
\hspace*{-0.1in}\caption{\label{Divlm} Marginal values on $[0, t]$ when $\mu(t)>\lambda(t)$ for $t<t_{s}$. The values are equal for $t\geq t_{s}$ (purple curve), so $t_s$ is an RM switching time.}
\end{figure}
\smallskip 

{\textup{(iii)}} Let $t_s0$ be the time of the RM switch. We have that $\lambda(t_s) = \mu(t_s) = \dot{\lambda}(t_s) = \dot{\mu}(t_s) = -L'(y_s)$.
As the arc before $t_s$ is R, we know that $\lambda < \mu$ on a small interval immediately before $t_s$. Furthermore, since $\lambda(t_s) = \mu(t_s)$ and $\dot{\lambda}(t_s) = \dot{\mu}(t_s)$, we know  $\ddot{\lambda}(t_s)\leq\ddot{\mu}(t_s)$. Solving $\eqref{Adj}$ with $u = 0$, we reduce this inequality to:  
$$
\frac{P'(x(t_s))}{P(x(t_s))}\leq-\frac{L''(y(t_s))}{L'(y(t_s))}<-\frac{L''(y(t))}{L'(y(t))}.
$$
The last inequality follows from $y(t)$ strictly increasing and $\frac{L''}{L'}$ strictly decreasing for $t<t_s$ since $\frac{L''}{L'} = (\ln(L'))'$. Therefore, $\ddot{\lambda}(t)\leq\ddot{\mu}(t)$, and hence $\lambda(t)< \mu(t)$ for $t<t_s$, see Figure \ref{Divlm}. Thus, there are no earlier switches, and the R arc is initial.
\smallskip 

{\textup{(iv)}} By (i), the last arc is always R. If there is no switching we have a pure R trajectory, and if it is preceded immediately by a V arc then the latter is initial by (ii), so we have a VR trajectory. If it is immediately preceded by an M arc, then either M is initial, and we have an MR trajectory, or the M arc is preceded by a V or R arc. Either way, those are initial by Lemmas (ii), (iii), respectively.
\end{proof}
\noindent Already for $P(x)=x$ and $L(y)=y^\a$ with any $\a<1$ there are values of $x_0,y_0$ and $T$ that instantiate each of the five structures (Section \ref{Lin}). On the other hand, for $\a\geq1$ only bang-bang, i.e. R and VR, structures occur (Corollary \ref{ConGrowth}). The next question is how to tell which structure is optimal for a given set of parameters.

\section{Geometry of optimal trajectories}\label{Geom}

It turns out that the switching structure of an optimal schedule is determined by the location of the initial point relative to a geometric configuration, two surfaces and the curve of their intersection, see Figure \ref{SwSing}. We describe the elements of this configuration in this section, and sketch how it determines optimal strategies at the end of \ref{Sing}. For linear production functions the configuration collapses to a two-dimensional one described in Section \ref{Lin}.

\subsection{Switching surface}\label{Switch}

Let us start by looking at the VR trajectories. The marginal values are both equal (to zero) at the terminal time $T$. For the final R arc to join to the V arc at an earlier time, they must become equal again at that time. This motivates the following definition.
\begin{definition} We say that a point $(x,y,t)$ is on the switching surface if $t<T$, and the costate variables computed backwards in time along a pure R extremal trajectory starting at it, are equal.
\end{definition}
\noindent The trajectory in the definition is uniquely determined. Denote $t_s$ the switching time, and $x_{s},y_{s}$ the values at the switch. Since $u=0$ after the switch we have that $x(t)=x_s$ is constant, and $y(t)=y_s+P(x_s)(t-t_s)$. It is convenient to use $y$ as the integration variable when integrating the costate equations \eqref{Adj} backwards from $T$ to $t_s$. Denoting $y_T=y(T)$, we obtain:
\begin{equation}\label{AdjR}
\mu(y) = \frac{L(y_T) - L(y)}{P(x_s)},\quad
\lambda(y) = \frac{P'(x_s)}{P(x_s)^2}\int^{y_T}_y L(y_T) - L(y)\,dy.
\end{equation}
The difference $T-t_s$ is often called {\it time-to-go} in control theory. Since $\lambda(t)=\mu(t)$ at the VR junction, we obtain a relation between the switching values $t_s,x_{s}$ and $y_{s}$ satisfied at such a junction. To express it, it is convenient to introduce new variables: 
\begin{gather}\label{NewVar}
\xi:=P(x)/P'(x),\ \ \ \tau:= P'(x)(T-t),\ \ \ y_T=y+\xi\tau, 
\end{gather}
and define the {\it switching function}:
\begin{align}\label{SwFun}
\Sw(\tau,\xi,y) &:= L(y_T)(y_T-y) - \int_{y}^{y_T} L(\zeta) d\zeta - \xi\big(L(y_T) - L(y)\big)\\
&=\int_{y}^{y_T} \zeta L'(\zeta) d\zeta-(\xi+y)\big(L(y_T) - L(y)\big)\,.\notag
\end{align}
Note that in $\tau$, $\xi$, $y$ coordinates the equation of the switching surface does not depend on $P(x)$. One can see that 
\begin{equation}\label{SwMarg}
\Sw(\tau_{s},\xi_s,y_s)=\frac1{\xi_sP(x_s)}\big(\lambda(y_s)-\mu(y_s)\big),
\end{equation}
i.e. the switching surface is given by the equation $\Sw=0$ with $\tau>0$. This equation allows one to find VR trajectories (almost) analytically. On the V arc we have $\dot{x}=P(x)$ and $\dot{y}=0$. This gives $y_s=y_0$ and a system of two equations to determine $t_s$ and $x_s$:
\begin{equation}\label{tsxs}
t_s=\int\limits_{x_0}^{x_s}\frac{dx}{P(x)};\ \ \  \Sw\left(P'(x_s)(T-t_s),\frac{P(x_s)}{P'(x_s)},y_s\right)=0.
\end{equation}
Once these values are found, the VR trajectory can be easily integrated, and satisfies the necessary conditions of the Pontryagin's theorem. Of course, these trajectories are not always optimal.

Let us prove some simple properties of the switching surface.
\begin{lemma}\label{SubR} A trajectory that passes through a point above the switching surface (i.e. $\Sw>0$), and is R afterwards, is suboptimal.
\end{lemma}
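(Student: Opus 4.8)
The plan is to translate the geometric condition $\Sw>0$ into a sign condition on the costates and then contradict the maximum principle. The starting point is the identity \eqref{SwMarg}, which says that along the pure R trajectory through a point $(x,y,t)$ the switching function equals $\frac{1}{\xi P(x)}\big(\lambda-\mu\big)$, where $\lambda,\mu$ are obtained by integrating \eqref{Adj} backwards from $T$ with $u=0$. Because $P(x)>0$ and $\xi=P(x)/P'(x)>0$ (as $P,P'>0$), the prefactor is strictly positive, so the sign of $\Sw$ coincides with the sign of $\lambda-\mu$. Hence the hypothesis $\Sw>0$ is equivalent to $\lambda(t)>\mu(t)$ at the given point.

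Next I would use this to contradict optimality of the R continuation. On the R arc the control is $u=0$, whereas $\frac{\partial H}{\partial u}=(\lambda-\mu)P(x)>0$ (from \eqref{Ham}) makes the Hamiltonian strictly increasing in $u$, so by the bang characterization \eqref{Bang} the Hamiltonian-maximizing control at this point is $u=1$, not $u=0$. Thus the R control fails to maximize $H$ pointwise. Since $\lambda$ and $\mu$ are continuous (indeed $C^1$ on the R arc by \eqref{Adj}), the strict inequality $\lambda>\mu$ persists on a one-sided neighborhood of $t$, so the necessary maximum-principle condition is violated on an interval of positive length. A trajectory that uses $u=0$ where $u=1$ is required cannot satisfy Pontryagin's theorem and is therefore suboptimal.

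The step requiring the most care is the bookkeeping of which costates are in play: the $\lambda,\mu$ appearing in \eqref{SwMarg} are exactly the adjoint variables of the candidate trajectory on its R segment (they solve \eqref{Adj} with $u=0$ and the terminal conditions $\lambda(T)=\mu(T)=0$), so the sign condition genuinely feeds into Pontryagin's theorem for this trajectory rather than an auxiliary one. One also needs $t<T$ — guaranteed by the definition of the switching surface — so that the R segment is nondegenerate and the violation occurs on a true interval. An alternative, PMP-free route is available through the marginal-value interpretation: since $\lambda>\mu$ means a unit of mass is worth more in the vegetative than the reproductive sector, diverting an infinitesimal fraction of production to vegetative growth for a short time and later reallocating it strictly increases the payoff. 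I would keep the maximum-principle argument as the primary one, as it is shorter and dovetails directly with \eqref{Bang}.
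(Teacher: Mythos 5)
Your proof is correct, but it reaches the conclusion by a genuinely different route than the paper. The shared step is the translation via \eqref{SwMarg}: since $\xi P(x)>0$, the hypothesis $\Sw>0$ is equivalent to $\lambda(t)>\mu(t)$ for the adjoints integrated backwards along the pure R tail, and your bookkeeping point is sound --- these are the unique PMP costates of the candidate trajectory on $[t,T]$, determined by the terminal conditions and the dynamics there, independently of what the trajectory did before $t$. From there you argue \emph{indirectly}: the control $u=0$ fails to maximize the Hamiltonian on a one-sided neighborhood of $t$ (a set of positive measure), so the trajectory violates the necessary conditions of Pontryagin's theorem and cannot be optimal. The paper argues \emph{directly} and constructively: normalizing $t=0$, it inserts a V arc of length $\epsilon$ at the point, writes the resulting payoff as $J_\epsilon = L(y)\epsilon + J(x(\epsilon),y,T-\epsilon)$, and uses the marginal-value identities $\partial J/\partial x=\lambda(0)$ and $\partial J/\partial T=L(y_T)$ together with \eqref{AdjR} to get $\frac{dJ_\epsilon}{d\epsilon}\big|_{\epsilon=0}=P(x)\bigl(\lambda(0)-\mu(0)\bigr)>0$, so the perturbed trajectory strictly beats the given one for small $\epsilon$ --- which is exactly the ``PMP-free marginal-value route'' you sketch as an alternative at the end. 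The trade-off: your version is shorter and dovetails with \eqref{Bang}, but it concludes ``not optimal'' by leaning on the necessity (and implicit normal form and uniqueness of adjoints) of the maximum principle; the paper's perturbation argument exhibits an explicit admissible competitor with strictly larger payoff, making the lemma self-contained and independent of PMP necessity. Either argument establishes the lemma.
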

\begin{proof}
Without loss of generality, we may assume that at our point $t=0$. Let $J(x,y,T)$ denote the payoff computed along an R trajectory starting from $(x,y)$ over $[0,T]$. To show that such a trajectory is suboptimal we will produce an admissible trajectory with higher payoff. Consider a trajectory with $u = 1$ on $[0,\epsilon]$, and $u = 0$ on $[\epsilon,T]$. Denoting $x_\epsilon=x(t_\epsilon)$ we can express the new payoff as: 
\begin{equation*}
    J_\epsilon = L(y)\epsilon + J(x(\epsilon),y,T-\epsilon).
\end{equation*}
Now $\frac{dx(\epsilon)}{d\epsilon}\Big|_{\epsilon=0}=P(x)$,  $\frac{\partial J}{\partial T}=L(y_T)$, and $\frac{\partial J}{\partial x}=\lambda(0)$ computed along the R trajectory (by the standard properties of marginal values \cite[2.2]{LW}). Taking into account \eqref{AdjR}, we compute
$$
\frac{dJ_\epsilon}{d\epsilon}\Big|_{\epsilon=0} = P(x)\big(\lambda(0) - \mu(0)\big).
$$ 
Since $(x,y,0)$ is above the switching surface, \eqref{SwMarg} implies that $\lambda(0) > \mu(0)$, and the derivative is positive. But then $J_\epsilon>J_0$ for small $\epsilon>0$, as required.
\end{proof}
Next, we will show that for monotone non-increasing functions $L(y)$ the switching surface does not come too close to the plane $t=T$, i.e. the time-to-go at the VR junction cannot be too small. This is plausible intuitively since the plant needs enough time to convert accumulated $x$ into $y$. The inequality we will prove is sharp since, for $L(y) = y$ and $P(x) = x$, the switching surface is the plane $T-t = 2$ (VR switches in this case happen with a constant time-to-go as follows from \eqref{SwCurve} with $\alpha=1$, for example).
\begin{theorem}\label{tau2}
Suppose that $L'(y)$ is monotone non-increasing. Then the time-to-go on the switching surface (at the junction with the final R arc) satisfies $T-t_s\geq\frac2{P'(x_s)}$. This inequality is strict if $L'(y)$ is strictly decreasing.
\end{theorem}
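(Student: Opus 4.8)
The plan is to reduce the claimed inequality $T-t_s\ge 2/P'(x_s)$ to the cleaner statement that any positive zero $\tau$ of the switching function satisfies $\tau\ge 2$. On the switching surface we have $\Sw(\tau_s,\xi_s,y_s)=0$ with $\tau_s>0$, and by the definition \eqref{NewVar} of the new variables $\tau_s=P'(x_s)(T-t_s)$, so $\tau_s\ge 2$ is exactly $T-t_s\ge 2/P'(x_s)$. Thus it suffices to show: if $L'$ is non-increasing and $\Sw(\tau,\xi,y)=0$ with $\tau>0$, then $\tau\ge 2$.

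First I would rewrite $\Sw$ so that the sign-changing weight $(t-1)$ appears explicitly. Starting from the second form in \eqref{SwFun}, one checks $\Sw=\int_y^{y_T}L'(\zeta)(\zeta-y-\xi)\,d\zeta$, and the substitution $\zeta=y+\xi t$ (so $\zeta$ runs from $y$ to $y_T=y+\xi\tau$ as $t$ runs from $0$ to $\tau$) gives $\Sw(\tau,\xi,y)=\xi^2\int_0^\tau L'(y+\xi t)\,(t-1)\,dt$. Since $\xi=P(x)/P'(x)>0$, the sign of $\Sw$ equals the sign of $\Phi(\tau):=\int_0^\tau g(t)(t-1)\,dt$, where $g(t):=L'(y+\xi t)$ is positive (as $L'>0$) and non-increasing (as $L'$ is non-increasing and $\xi>0$). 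The problem is now purely about the scalar integral $\Phi$.

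Next I would locate the positive zero. Since $\Phi'(\tau)=g(\tau)(\tau-1)<0$ on $(0,1)$, $\Phi$ strictly decreases from $\Phi(0)=0$ and stays negative on $(0,1]$, so any positive zero has $\tau>1$. Splitting $\Phi(\tau)=\int_0^1+\int_1^\tau$ and comparing $g$ to its value at the pivot $t=1$ via monotonicity yields the bound in one direction on both pieces: on $[0,1]$ where $(t-1)\le 0$ we have $g(t)\ge g(1)$, hence $g(t)(t-1)\le g(1)(t-1)$; on $[1,\tau]$ where $(t-1)\ge 0$ we have $g(t)\le g(1)$, hence again $g(t)(t-1)\le g(1)(t-1)$. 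Integrating both bounds gives $0=\Phi(\tau)\le g(1)\int_0^\tau(t-1)\,dt=\tfrac{g(1)}{2}\big((\tau-1)^2-1\big)$. As $g(1)>0$, this forces $(\tau-1)^2\ge 1$, and with $\tau>1$ we conclude $\tau\ge 2$.

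For strictness, if $L'$ is strictly decreasing then $g$ is strictly decreasing, so the pointwise bound is strict on a set of positive measure, making the integral inequality strict and giving $(\tau-1)^2>1$, i.e. $\tau>2$. I expect the only genuine subtlety — the \emph{main obstacle} — to be recognizing that $t=1$ is the correct comparison pivot: it is exactly where the weight $(t-1)$ changes sign, which makes the monotonicity of $g$ push in the same favorable direction on both subintervals and converts the linear-$L$ equality $\tau=2$ (sharp when $g\equiv\mathrm{const}$, i.e. $L$ linear) into an inequality for concave $L$. Everything else is routine bookkeeping.
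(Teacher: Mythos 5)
Your proof is correct, and it reaches the bound by a genuinely different decomposition than the paper's. The paper rescales multiplicatively, setting $s=\zeta/y$ and $r=\xi/y$, so that the switching equation becomes $\int_1^{1+r\tau}s\,l(s)\,ds=(1+r)\int_1^{1+r\tau}l(s)\,ds$ with $l(s)=L'(ys)$, and then invokes a separate elementary lemma: for positive non-increasing $l$, $\int_1^R s\,l(s)\,ds\le\frac{R+1}{2}\int_1^R l(s)\,ds$, proved via the antisymmetry of $s-\frac{R+1}{2}$ about the midpoint of $[1,R]$; cancelling the integrals yields $1+r\le\frac{2+r\tau}{2}$, i.e.\ $\tau\ge2$. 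You instead substitute affinely, $\zeta=y+\xi t$, writing $\Sw=\xi^2\int_0^\tau L'(y+\xi t)(t-1)\,dt$, and compare $g(t)=L'(y+\xi t)$ with its value at the sign change $t=1$ of the weight. The two comparison points are genuinely different: your pivot corresponds to $\zeta=y+\xi$, i.e.\ $s=1+r$ (the center of mass dictated by the switching equation), while the paper's is the interval midpoint $s=1+\frac{r\tau}{2}$, and the two coincide precisely in the extremal case $\tau=2$. Your route is more self-contained: it needs no ratio variable and no auxiliary lemma, and strictness under strictly decreasing $L'$ falls out transparently from the pointwise comparison being strict off $t=1$. (Your preliminary step that $\int_0^\tau g(t)(t-1)\,dt<0$ on $(0,1]$ is in fact dispensable, since $(\tau-1)^2\ge1$ together with $\tau>0$ already forces $\tau\ge2$, but it does no harm.) What the paper's bookkeeping buys is reuse: its form of the switching equation in the $r$-$\tau$ variables (equation \eqref{Swrs}) is exactly the form exploited later for power fitness functions (the switching curve \eqref{SwrsPow} and the cutoff analysis), whereas your normal form serves this theorem only. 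Both arguments implicitly use $L'>0$ to divide or cancel, which is covered by the paper's standing assumption that $L$ is monotone increasing.
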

\begin{proof} It will be convenient to use the ratio variable $r:=\frac{\xi}{y}$. We will first transform the second equation in \eqref{SwFun}. Writing $L(y_T) - L(y)=\int^{y_T}_y L'(\zeta)d\zeta$, and changing to a new variable $s=\frac{\zeta}y$ in both integrals, we obtain:
\begin{equation}\label{Swrs}
\Sw(\tau,\xi,y)=y^2L'(y)\left(\int_1^{1+r\tau}\!\!\!\!\!\!sL'(ys)ds-(1+r)\int_1^{1+r\tau}\!\!\!\!\!\!L'(ys)ds\right)\,.
\end{equation}
Next, we will need the following elementary inequality for positive monotone non-increasing functions $l(s)$:
$$
\int_1^R\!\!\!\!s\,l(s)ds\leq\frac{R+1}2\int_1^R\!\!\!\!l(s)ds
$$
Indeed, this is equivalent to $\displaystyle{\int_1^R\!\!\!\left(s-\frac{R+1}2\right)l(s)ds\leq0}$, which holds as the difference $s-\frac{R+1}2$ is symmetric with respect to the middle of the interval $[1,R]$, and larger values of $l(s)$ are on the side where it is negative. The equality is only attained when $l=\textrm{const}$.

Therefore, taking $l(s)=L'(ys)$ in \eqref{Swrs} we have that for the points on the switching surface:
$$
\int_1^{1+r\tau}\!\!\!\!\!\!sL'(ys)ds=(1+r)\int_1^{1+r\tau}\!\!\!\!\!\!L'(ys)ds
\leq\frac{2+r\tau}2\int_1^{1+r\tau}\!\!\!\!\!\!L'(ys)ds.
$$
Canceling the integrals and $r$, we obtain $2\leq\tau$. The conclusion now follows from \eqref{NewVar}.
\end{proof}

\subsection{Cutoff curve}\label{Cut}

Not every point on the switching surface can be the VR junction of an optimal trajectory. When it is, it follows from \eqref{Bang} that $\lambda>\mu$ reverses to $\lambda<\mu$ there. Therefore, at the junction points not only does $\lambda=\mu$, but also $\dot{\lambda}\leq\dot{\mu}$. 
\begin{definition} A point $(x,y,t)$ of the switching surface is said to be on the cutoff curve when the time derivatives of the costate variables computed along a pure R extremal trajectory starting at it are equal, $\dot{\lambda}=\dot{\mu}$. If $\dot{\lambda}<\dot{\mu}$ we say that the point is above the cutoff curve.
\end{definition}
\noindent Intuitively, the cutoff curve separates the ``active" part of the surface, where the optimal junctions can occur, from the ``inactive" part. Optimal VR junctions can only occur on or above the cutoff curve, hence the name. As we will see below, this curve serves as a ``cutoff" also for singular arcs. The cutoff condition can be expressed explicitly by using \eqref{AdjR}, and recalling that for $\lambda=\mu$ we have:
$$
\dot{\lambda}=-\mu P'(x_s)=-L'(y)=\dot{\mu}.
$$
Recall from \eqref{NewVar} that $y_T=y+\xi\tau$, and define the {\it cutoff function}:
\begin{equation}\label{CutFun}
\Cut(\tau,\xi,y) = L(y_T) - L(y) -\xi L'(y).
\end{equation}
Note that by the costate equations
\begin{equation}\label{CutMarg}
\Cut(\tau_{s},\xi_s,y_s) = -\frac1{\xi_s}\big(\dot{\lambda}(y_s)-\dot{\mu}(y_s)\big).
\end{equation}
Thus, points on the cutoff curve can be characterized as satisfying $\Cut=0$, and those above it as satisfying $\Cut>0$. 

As already mentioned, the cutoff curve is also a cutoff in another sense. As the only place where $\lambda=\mu$ and $\dot{\lambda}=\dot{\mu}$ on the final R arc, it is also the only place where an optimal junction with singular M arcs can occur. At these junctions, the singular control \eqref{SingControl} must be terminated and switched to $u=0$. Conversely, if the cutoff curve is empty, or the singular control on it is inadmissible, optimal trajectories can only be bang-bang. The next theorem gives sufficient conditions for the cutoff curve to be either empty or ``large".
\begin{theorem}\label{CutExist}
Suppose that $L(y)$ is strictly increasing and $L(y)\xrightarrow[y\to\infty]{}\infty$. If $L(y)$ is convex then the cutoff curve is empty. If $\frac{L(y)}y\xrightarrow[y\to\infty]{}0$ then for any $y>0$ there is a point on the cutoff curve with this value of $y$. If, moreover, $L(y)$ is strictly concave, this point is unique.
\end{theorem}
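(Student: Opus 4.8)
The plan is to fix $y>0$ and reduce the pair of equations $\Sw=0$, $\Cut=0$ that jointly define the cutoff curve to a single equation in one variable. Writing $w:=y_T=y+\xi\tau$, the condition $\Cut=0$ in \eqref{CutFun} solves explicitly for $\xi=\frac{L(w)-L(y)}{L'(y)}$, which is automatically positive when $w>y$ (as $L$ is increasing), and then $\tau=(w-y)/\xi>0$ as well, so the switching-surface requirement $\tau>0$ is met. Substituting this $\xi$ into the second form of $\Sw$ in \eqref{SwFun} and writing $\Delta:=L(w)-L(y)$, the switching equation collapses to $F(w)=0$, where
\begin{equation*}
F(w):=\int_{y}^{w}(\zeta-y)L'(\zeta)\,d\zeta-\frac{\Delta^2}{L'(y)}.
\end{equation*}
Thus cutoff points with the prescribed value of $y$ correspond bijectively to roots $w>y$ of $F$, and the entire theorem becomes a matter of counting such roots.

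Next I would record the elementary facts $F(y)=0$ and
\[
F'(w)=L'(w)\Big[(w-y)-\tfrac{2\Delta}{L'(y)}\Big]=:L'(w)\,G(w),
\]
with $G(y)=0$, $G'(w)=1-\tfrac{2L'(w)}{L'(y)}$, and $G'(y)=-1$; in particular $F''(y)=-L'(y)<0$, so $F<0$ just to the right of $y$. For the convexity claim, $L'$ increasing gives $\Delta=\int_y^w L'\ge L'(y)(w-y)$, hence $G(w)\le-(w-y)<0$ and $F'<0$ on all of $(y,\infty)$. Since $F(y)=0$, this forces $F<0$ throughout, so $F$ has no root $w>y$, i.e. the cutoff curve is empty.

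For the existence claim I would combine $F<0$ near $y$ with $F(w)\to+\infty$ and invoke the intermediate value theorem. The growth statement is the main obstacle: it requires the ``good'' term $\Phi(w):=\int_y^w(\zeta-y)L'(\zeta)\,d\zeta$ to dominate the subtracted $\Delta^2/L'(y)$, yet a crude bound such as $\Phi(w)\ge L(w)-L(y+1)$ only grows like $L(w)$, not like $L(w)^2$. The device I would use is to argue at the level of derivatives: the hypothesis $L(y)/y\to 0$ makes the secant slope $\sigma(w):=\Delta/(w-y)\to 0$, so for all large $w$ one has $\tfrac{2\Delta}{L'(y)}\le\tfrac{w-y}{2}$, whence $G(w)\ge\tfrac{w-y}{2}$ and therefore $F'(w)\ge\tfrac12(w-y)L'(w)=\tfrac12\Phi'(w)$. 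Integrating and using $\Phi\to\infty$ (which needs only $L\to\infty$) yields $F\to+\infty$, as required.

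Finally, for uniqueness under strict concavity I would exploit that $L'$ strictly decreasing makes $\sigma(w)$ strictly decreasing, from the limit $L'(y)$ as $w\to y^+$ down to $0$ as $w\to\infty$. Hence $G(w)=0\iff\sigma(w)=\tfrac{L'(y)}{2}$ holds at exactly one point $w_0>y$, with $G<0$ on $(y,w_0)$ and $G>0$ on $(w_0,\infty)$. Consequently $F$ strictly decreases on $(y,w_0)$ and strictly increases on $(w_0,\infty)$, dipping below zero and then rising monotonically to $+\infty$; it therefore has exactly one root $w>y$, yielding the unique cutoff point at the prescribed $y$.
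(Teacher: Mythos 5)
Your proof is correct, and while it shares the paper's overall strategy --- fix $y$, use $\Cut=0$ to eliminate one unknown, and reduce the pair $\Sw=\Cut=0$ to a single scalar equation whose positive roots are counted --- the root-counting itself runs along a genuinely different line. The paper takes $\eta:=\xi L'(y)$ as the unknown and substitutes $z=L(\zeta)$, turning the combined equation \eqref{Sw+Cut} into the inverse-function equation \eqref{EtaCut}; emptiness, existence and uniqueness are then all read off from convexity properties of $L^{-1}$ ($L$ convex makes $L^{-1}$ concave, $L$ strictly concave makes $L^{-1}$ strictly convex), via a cascade of comparisons of $(F-G)''$, $(F-G)'$ and $F-G$ starting from \eqref{EtaDer}. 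You instead keep $w:=y_T$ as the unknown (the two parametrizations are equivalent under the monotone map $\eta=L(w)-L(y)$, so your $F(w)=0$ and the paper's $F(\eta)=G(\eta)$ are the same equation), never invoke $L^{-1}$, and factor $F'(w)=L'(w)\,G(w)$, reading everything off the secant slope $\sigma(w)=\Delta/(w-y)$: $L'$ nondecreasing forces $\sigma\geq L'(y)$, hence $F'<0$ and emptiness; $\sigma\to0$ gives $F'\geq\tfrac12\Phi'$ for large $w$, hence $F\to\infty$ and existence by the intermediate value theorem; $\sigma$ strictly decreasing from $L'(y)$ to $0$ gives exactly one sign change of $F'$, hence uniqueness. (Note that your uniqueness step does use $\sigma\to0$, i.e.\ the hypothesis $L(y)/y\to0$ of the existence part, but the theorem's ``moreover'' makes that legitimate.) Your route is more elementary: the convex case becomes a one-line integral inequality rather than a second-derivative monotonicity argument, no invertibility of $L$ needs to be discussed, and in the strictly concave case you obtain the sharper global picture that $F$ strictly decreases on $(y,w_0)$ and strictly increases on $(w_0,\infty)$, which locates the unique root at a glance. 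What the paper's route buys in exchange is the transparent duality between convexity of $L$ and concavity of $L^{-1}$, and a comparison template ($F$ versus $G$ through successive derivatives) that is reused almost verbatim in the proof of Lemma \ref{CutPt} for the power fitness functions.
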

\begin{proof} Since $\Sw=\Cut=0$ on the cutoff curve, we get from \eqref{SwFun} and \eqref{CutFun}:
\begin{equation}\label{Sw+Cut}
    \int_y^{y_T}\zeta L'(\zeta)d\zeta = (\xi + y)\xi L'(y)
\end{equation}
By our assumptions, $L(y)$ has an inverse function $L^{-1}(z)$ defined for $z>0$, and $L^{-1}(z)\to\infty$ when $z\to\infty$. By making a change of variables in the integral, $\zeta=L(y)$, taking into account that $L(y_T) = L(y)+\xi L'(y)$, and denoting $\eta:=\xi L'(y)$, we get the following equation for $\eta$:
\begin{equation}\label{EtaCut}
\int^{L(y) + \eta}_{L(y)}\!\!L^{-1}(z) dz = \Big(\frac{\eta}{L'(y)} + y\Big)\eta.
\end{equation}
Let $F(\eta)$ and $G(\eta)$ denote the left and the right hand sides of \eqref{EtaCut}, respectively. Then we have $F(0)=G(0)=0$, and
\begin{align}\label{EtaDer}
&F'(\eta)\Big|_{\eta=0}=L^{-1}\big(L(y)+\eta\big)\Big|_{\eta=0}=y;\ \ \ \ \ \ \ \ \ \ \ \ \ \
G'(\eta)\Big|_{\eta=0}=y+2\frac{\eta}{L'(y)}\Big|_{\eta=0}=y\notag\\
&F''(\eta)\Big|_{\eta=0}=(L^{-1})'\big(L(y)+\eta\big)\Big|_{\eta=0}=\frac{1}{L'(y)};\ \ \ \ \
G''(\eta)\Big|_{\eta=0}=\frac{2}{L'(y)}.
\end{align}
Therefore, $F''(\eta)<G''(\eta)$, $F'(\eta)<G'(\eta)$, and $F(\eta)<G(\eta)$ for small $\eta>0$. When $L(y)$ is convex, $L^{-1}(z)$ is concave, and $(L^{-1})'(z)$ is non-increasing, so $(F-G)''(\eta)$ is as well. But then, it remains negative for all $\eta>0$, and so do $(F-G)'$ and $F-G$. Thus, there are no positive solutions to $F(\eta)=G(\eta)$.
\medskip 

When $\frac{L(y)}y\xrightarrow[y\to\infty]{}0$, we have $\frac{L^{-1}(z)}z\xrightarrow[y\to\infty]{}\infty$, and $\frac{F'(\eta)-G'(\eta)}\eta\xrightarrow[\eta\to\infty]{}\infty$. Therefore, $F(\eta)-G(\eta)\xrightarrow[\eta\to\infty]{}\infty$,  and $F(\eta)=G(\eta)$ for some $\eta>0$. Setting $$\xi:=\frac{\eta}{L'(y)},\ \  \tau:=\frac{L'(y)}{\eta}\left(L^{-1}\big(L(y)+\eta\big)-y\right)
 $$ 
 gives us the desired point on the cutoff curve.
\smallskip  

If $L(y)$ is strictly concave, then $L^{-1}(z)$ is strictly convex, and $(L^{-1})'(z)$ is strictly increasing. Then, so is $(F-G)''$. If it stayed negative for $\eta>0$ then so would $(F-G)'$ and $F-G$, contradicting the existence of a solution. Therefore, $F''(\eta_{**})=G''(\eta_{**})$ for some $\eta_{**}>0$. Similarly, $F'(\eta_{*})=G'(\eta_{*})$ for some $\eta_{*}\geq\eta_{**}$, and the difference is positive for $\eta>\eta_{*}$. Thus, $F-G$ is strictly increasing for $\eta>\eta_{*}$, and since it is negative for $\eta<\eta_{*}$, it must be zero at a single positive $\eta$.
\end{proof}
\begin{corollary}\label{ConGrowth}
If $L(y)$ is strictly increasing, convex, and $L(y)\xrightarrow[y\to\infty]{}\infty$, then the optimal control is always bang-bang, and the optimal trajectories are either R or VR.
\end{corollary}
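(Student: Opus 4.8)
The plan is to deduce the corollary from Theorem~\ref{CutExist} together with the structural constraints of Theorem~\ref{5Structures}, keeping in mind the standing assumption that $\ln P$ is concave (as in Theorem~\ref{Schedules}(i), which this corollary restates). First I would apply Theorem~\ref{CutExist}: since $L$ is strictly increasing, convex, and $L(y)\to\infty$, its hypotheses hold and the cutoff curve is empty. The whole content of the argument is then to convert ``empty cutoff curve'' into ``no mixed arc,'' exactly as the paragraph preceding Theorem~\ref{CutExist} anticipates, and afterward to bookkeep the surviving bang-bang possibilities.

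The core step is to show that no optimal trajectory contains an M arc, which I would prove by contradiction. Suppose an optimal trajectory has a mixed arc, and let $t^*>0$ be the right endpoint of its rightmost maximal M arc. By Theorem~\ref{5Structures}(i) the final arc is R, so this M arc is not terminal and some arc follows it at $t^*$. It cannot be another M arc by maximality, and it cannot be a V arc, since Theorem~\ref{5Structures}(ii) (here using $\ln P$ concave) forces every V arc to be initial, whereas $t^*>0$. Hence the arc after $t^*$ is R, and — since no M arc lies to the right of $t^*$ and no V arc can begin at a positive time — the trajectory is purely R on all of $[t^*,T]$. Consequently the costates computed backward along this R tail are the actual costates, so $(x(t^*),y(t^*),t^*)$ lies on the switching surface; and because $\lambda=\mu$ holds throughout the singular arc, the continuous differentiability of $\lambda,\mu$ established in Section~\ref{Struct} transports both $\lambda(t^*)=\mu(t^*)$ and $\dot{\lambda}(t^*)=\dot{\mu}(t^*)$ across the junction. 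By \eqref{SwMarg} and \eqref{CutMarg} this means $\Sw=\Cut=0$ at $t^*$, i.e. $t^*$ lies on the cutoff curve, contradicting its emptiness. Therefore there are no M arcs and every optimal control is bang-bang.

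It then remains to enumerate the bang-bang possibilities. Such a trajectory is a finite concatenation of V and R arcs; by Theorem~\ref{5Structures}(i) the last arc is R, and by Theorem~\ref{5Structures}(ii) any V arc must be initial. Hence there is at most one V arc and it can only precede the R part, so the only admissible structures are R and VR, each with at most one switch, which is exactly the assertion.

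The step I expect to be most delicate is the bridge in the second paragraph: verifying that the termination point $t^*$ of the singular arc genuinely satisfies the \emph{definition} of the cutoff curve rather than merely the equations $\Sw=\Cut=0$ formally. This requires (a) confirming that the trajectory is pure R from $t^*$ all the way to $T$, so that the backward R-integration used to define the switching and cutoff surfaces reproduces the true costate values at $t^*$, and (b) carrying the identities $\lambda=\mu$ and $\dot{\lambda}=\dot{\mu}$ from the singular arc across the junction via the piecewise-$C^1$ regularity of $\lambda,\mu$. Everything else reduces to bookkeeping on admissible arc orderings.
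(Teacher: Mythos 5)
Your proposal is correct and takes essentially the same route as the paper: the paper derives this corollary from Theorem~\ref{CutExist} (empty cutoff curve) combined with the claim, stated in the paragraph preceding that theorem, that a singular arc of an optimal trajectory can only join the final R arc on the cutoff curve, and then uses Theorem~\ref{5Structures}(i)--(ii) to reduce the surviving bang-bang structures to R and VR. Your second paragraph merely supplies the costate-continuity details of that junction argument which the paper leaves implicit (and you rightly avoid invoking Theorem~\ref{5Structures}(iv), whose hypothesis on $\ln L'$ fails for convex $L$).
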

\noindent This situation occurs in the case $L(y)=y$, and for $L(y)=y^\a$ with $\a\geq1$ generally. We see, however, that it is borderline to the range of gambler's fitness functions, while the more common concave fitness functions have non-empty cutoff curves. The increase  of $L(y)$ to $\infty$ is a technical condition, one can modify the proof to cover functions with a horizontal asymptote, like $L(y)=\frac{y^\a}{\a}$ for $\a<0$, see Lemma \ref{CutPt}.

\subsection{Singular surface}\label{Sing}

We are now in a position to define the final piece of our geometric configuration, the singular surface. When we substitute the singular control law \eqref{SingControl} into the state equations \eqref{StateEq} we obtain integral curves passing through every point of state space. However, these curves can only produce singular arcs of optimal trajectories if they pass through a point on the cutoff curve, where these arcs can be joined to the final R arc. This motivates the next definition. \begin{definition} We call the singular surface the set foliated by the integral curves of the system \eqref{StateEq} with the singular control \eqref{SingControl} (singular curves, for short) passing through points on the cutoff curve.
\end{definition}
\noindent By definition, the singular surface is empty when the cutoff curve is empty, and, when it is non-empty, it intersects the switching surface along the cutoff curve, see Figure \ref{SwSing}. The part of the singular surface past the cutoff curve is ``inactive", i.e. singular arcs of optimal trajectories can not pass through it there.

While the non-emptiness of the cutoff curve is a necessary condition for the existence of singular arcs, it is not sufficient. {\it A priori}, the singular curves may not even be locally optimal, or the singular controls on them may not be admissible, that is, remain within the bounds $0\leq u\leq1$. 
\begin{figure}[htbp]
\vspace{-0.15in}
\begin{centering}
\includegraphics[width=125mm]{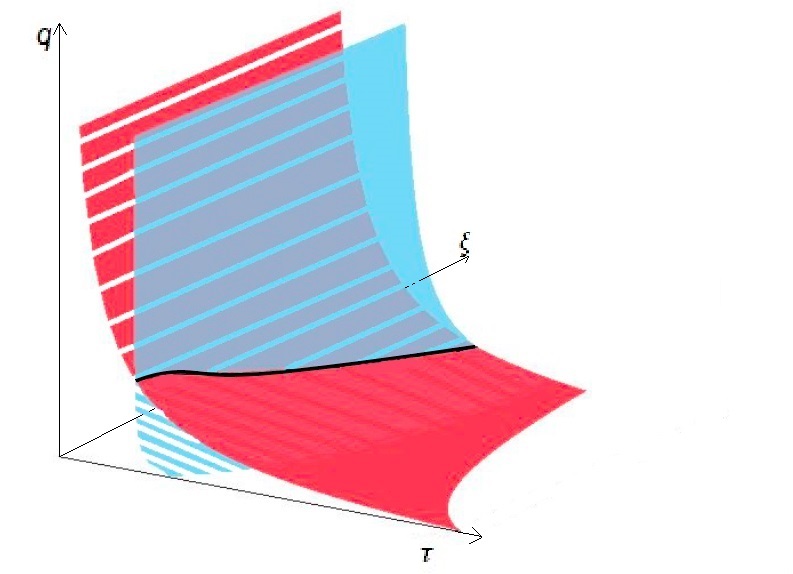}
\par\end{centering}
\vspace{-0.5in}
\hspace*{-0.1in}\caption{\label{SwSing} Schematic depiction of the switching surface (blue), the cutoff curve (black), and the singular surface (red) in $\xi$-$q$-$\tau$ coordinates, $\xi=\frac{P(x)}{P'(x)}$ and $q=\frac{y}{\xi}$. The lined parts are inactive.}
\end{figure}
We will now sketch how the geometric configuration, as depicted on Figure \ref{SwSing}, determines the nature of optimal schedules. To interpret this biologically, recall that our model only covers the volatile part of the season, with uncertain duration, so that the plant is likely to be mature already at the starting point. Note that the direction of $\tau$ is opposite to the direction of time $t$, so the progression of time corresponds to moving to the left in the figure.

Trajectories that start under the singular surface, or to the left of the switching surface, move up along an R arc until they meet the active part of the singular surface, or reach $\tau=0$ ($t=T$). In the latter case, they are pure R, in the former, they switch to a singular M arc, and follow it until meeting the cutoff curve. When there, they switch to a final R arc, making them RMR. Trajectories that start above both surfaces move down along a V arc until meeting one of them (or both, on the cutoff curve). At the switching surface they switch to a final R arc (VR), and at the singular surface they first switch to a singular M arc, and then to a final R arc at the cutoff curve (VMR). MR trajectories must start on the active part of the singular surface. 
Of course, this intuitive picture depends on the configuration being roughly as depicted. In the next section, we will confirm that this is indeed the case for the power fitness functions, and add some analytic characterizations.

\section{Power fitness functions}\label{Power}

In this section we apply geometric considerations of the previous section to  the optimal allocation problem \eqref{StateEq}-\eqref{Pay} with the power fitness functions $L(y) = \frac{y^\alpha}{\alpha}$. The $\frac1{\alpha}$ is added for technical convenience and does not affect optimization. It simplifies some formulas, provides the correct sign for $\alpha<0$, and ensures continuity with $L(y) = \ln y$ for $\alpha\to0$, as $\frac{y^\alpha-1}{\alpha}\xrightarrow[\alpha\to0]{}\ln y$.

The geometric configuration we introduced simplifies considerably when $L(y)$ is a power function or logarithm. The underlying reason is that those are exactly the functions for which the fitness density $L'(y)$ is multiplicative, i.e. $L'(ys)=L'(y)L'(s)$, and the scaling factor $l_y(s):=\frac{L'(ys)}{L'(y)}$ does not depend on $y$. As a result, a prominent role is played by the ratio $r:=\frac{\xi}{y}$ and its reciprocal $q:=\frac{y}{\xi}$. For power production functions $P(x)=kx^\beta$ we have $\xi=\frac{x}{\beta}$, so these ratios are proportional to the ratios of vegetative and reproductive masses. We will refer to $q$ as the {\it mass ratio} in general. Much of the geometry can be described in two-dimensional terms, using $q$,$\tau$ or $r$,$\tau$ as coordinates.

\subsection{Switching curve}

Let us rewrite the switching equation $\Sw=0$ in terms of $l_y(s)$:
\begin{equation}\label{SwrsPow}
\int_1^{1+r\tau}\!\!\!\!\!\!sl_y(s)ds=(1+r)\int_1^{1+r\tau}\!\!\!\!\!\!l_y(s)ds;
\end{equation}
Since $l_y(s)=l(s)$, the switching surface is a cylinder over a curve in the $r$-$\tau$ plane given by \eqref{SwrsPow}. For $\alpha\neq0,-1$ it simplifies to\footnote{For $\alpha=0$ this is replaced by $\tau=\left(1+\frac1r\right)\ln(1+r\tau)$, and for $\alpha=-1$ by $\tau=\frac1{r(1+r)}(1+r\tau)\ln(1+r\tau)$.}
\begin{equation}\label{SwCurve}
\tau=\left(1+\alpha+\frac1r\right)\frac{1-(1+r\tau)^{-\alpha}}\alpha.
\end{equation}

We will call the curve given by this equation the {\it switching curve}. For $\alpha=1$ it reduces to $\tau=2$, i.e. the switching curve is a vertical line. To determine its shape in general, it is convenient to rewrite \eqref{SwCurve} in a parametric form using $z=r\tau$ as the parameter (we leave the $\alpha=0,-1$ cases to the reader):
\begin{align}\label{ParSwCurve}
r(z)&=\frac1{1+\alpha}\frac{(1+z)^{-\alpha}-1-\alpha z}
{(1+z)^{-\alpha}-1};\\
\tau(z)&=(1+\alpha)\,\frac{z\left((1+z)^{-\alpha}-1\right)}{(1+z)^{-\alpha}-1-\alpha z}.
\end{align}
One can prove the following by direct calculations. 
\begin{lemma}\label{SwCurv}
The switching curve is the graph of a positive, monotone decreasing function $q(\tau)=\frac1{r(\tau)}$ on $(2,\infty)$. It has a vertical asymptote at $\tau=2$, and a horizontal asymptote when $\tau\to\infty$ (and $z\to\infty$). The latter is $q=0$ for $-1<\alpha<1$ (and the special cases $\alpha=0,-1$), and $q=-\alpha-1$ for $\alpha<-1$. 
\end{lemma}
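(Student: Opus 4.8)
The plan is to work throughout with the parametrization by $z=r\tau$ given in \eqref{ParSwCurve} and to show that, as $z$ runs over $(0,\infty)$, the point $(\tau(z),q(z))$ with $q=\tfrac1r$ sweeps out a strictly decreasing graph over $\tau\in(2,\infty)$. It suffices to prove three things: (a) $q(z)>0$ and $\tau(z)>0$ on $(0,\infty)$; (b) $\tau(z)$ is strictly increasing with $\tau\to2$ as $z\to0^+$ and $\tau\to\infty$ as $z\to\infty$, so that $z\mapsto\tau$ is a bijection $(0,\infty)\to(2,\infty)$; and (c) $q(z)$ is strictly decreasing. Then $q=q(\tau)$ is well defined, positive and decreasing, and the limits at the two ends of $z$ give the asymptotes. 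The first move is to record the identity $z=r(z)\tau(z)$, a direct check on \eqref{ParSwCurve}, which yields $q=\tau/z$ and lets me express both quantities through the single elementary function $\Theta(z):=\tfrac1z+\tfrac{\alpha}{(1+z)^{-\alpha}-1}$, namely $\tau=-(1+\alpha)/\Theta$ and $q=-(1+\alpha)/(z\Theta)$. Checking the signs of the factors $(1+\alpha)$, $(1+z)^{-\alpha}-1$ and $\Theta$ over the relevant regime $\alpha<1$ disposes of (a); the borderline $\alpha=1$ collapses \eqref{SwCurve} to the vertical line $\tau=2$ and is excluded.

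For the monotonicity of $q$ I would reduce (c) to a single sign. Writing $z\Theta(z)=1+\phi(z)$ with $\phi(z)=\tfrac{\alpha z}{(1+z)^{-\alpha}-1}$, one has $q=-(1+\alpha)/(1+\phi)$, so $q'(z)$ has the same sign as $(1+\alpha)\phi'(z)$. A short computation gives $\phi'=\tfrac{\alpha}{((1+z)^{-\alpha}-1)^2}\,\psi(z)$, where $\psi(z)=(1+z)^{-\alpha-1}\big(1+(1+\alpha)z\big)-1$ satisfies $\psi(0)=0$ and, after the transcendental factor cancels, $\psi'(z)=-\alpha(1+\alpha)z(1+z)^{-\alpha-2}$. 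Since $\psi'$ has constant sign $\operatorname{sgn}(-\alpha(1+\alpha))$ on $(0,\infty)$, so does $\psi$; chasing the signs through then shows $(1+\alpha)\phi'(z)<0$ uniformly in $\alpha$, hence $q'(z)<0$. This step is comparatively clean precisely because of the cancellation in $\psi'$.

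The monotonicity of $\tau$ is where I expect the real work. By the same device, $\tau=-(1+\alpha)/\Theta$ is increasing iff $\Theta'$ has sign $\operatorname{sgn}(1+\alpha)$, and here $\Theta'(z)=-\tfrac1{z^2}+\tfrac{\alpha^2(1+z)^{-\alpha-1}}{((1+z)^{-\alpha}-1)^2}$ has the sign of $\alpha^2z^2(1+z)^{-\alpha-1}-((1+z)^{-\alpha}-1)^2$. Taking square roots and substituting $u=1+z>1$, $a=|\alpha|$, this reduces in both the $\alpha>0$ and $\alpha<0$ cases to the single inequality $R(u):=a(u-1)-u^{(a+1)/2}+u^{(1-a)/2}\gtrless0$. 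This is the crux: $R(1)=R'(1)=0$, and $R''(u)=\tfrac{(a+1)(1-a)}{4}\big(u^{(a-3)/2}-u^{-(a+3)/2}\big)$, whose sign is governed entirely by $1-a$ (the bracket being positive for $u>1$). Hence $R>0$ when $a<1$, i.e. $-1<\alpha<1$, and $R<0$ when $a>1$, i.e. $\alpha<-1$; in each regime this is exactly the sign needed to make $\operatorname{sgn}(\Theta')=\operatorname{sgn}(1+\alpha)$, so $\tau$ is strictly increasing throughout. The difficulty lies wholly in isolating $R$ and keeping the sign bookkeeping straight across the two regimes; once $R$ is in hand the two-vanishing-derivatives argument is routine.

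Finally, the endpoint behavior comes out of the same closed forms. As $z\to0^+$, Taylor expansion of $(1+z)^{-\alpha}$ gives $q(z)\sim2/z\to\infty$ while $\tau(z)\to2$, producing the vertical asymptote at $\tau=2$. As $z\to\infty$ I would read the horizontal asymptote off $q=-(1+\alpha)/(1+\phi)$ via dominant balance in $\phi$: for $-1<\alpha<1$ one has $\phi(z)\to-\infty$, whence $q\to0$, while for $\alpha<-1$ the superlinear factor $(1+z)^{-\alpha}$ forces $\phi(z)\to0$, whence $q\to-(1+\alpha)=-\alpha-1$. The special indices $\alpha=0,-1$ are handled identically from their own (footnoted) switching-curve equations, and the borderline $\alpha=1$ gives the line $\tau=2$, completing the picture. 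The only genuinely delicate point remains the sign of $R(u)$ in the monotonicity of $\tau$.
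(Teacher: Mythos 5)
Your overall strategy --- parametrize by $z=r\tau$, express everything through the single function $\Theta$, and settle monotonicity by sign analysis --- is exactly the kind of ``direct calculation'' the paper alludes to (it gives no written proof of this lemma), and most of your execution is correct. In fact your closed forms are more reliable than the paper's printed ones: the identity $\tau=-(1+\alpha)/\Theta$ with $\Theta(z)=\frac1z+\frac{\alpha}{(1+z)^{-\alpha}-1}$ is what actually follows from \eqref{SwCurve} (check $\alpha=-2$: $\Theta=\frac{1}{z+2}$, $\tau=z+2$, $r=\frac{z}{z+2}$, matching the paper's explicit solution $q(\tau)=\frac{\tau}{\tau-2}$), whereas \eqref{ParSwCurve} as printed carries a sign error in the numerator of $r(z)$ (at $\alpha=-2$, $z=1$ it returns $r=-5/3$ instead of $1/3$); you silently worked with the corrected form. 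I also verified your two key computations: $\psi'(z)=-\alpha(1+\alpha)z(1+z)^{-\alpha-2}$, which indeed forces $q'(z)<0$ for every $\alpha<1$, $\alpha\neq0,-1$; and the difference-of-squares reduction of $\mathrm{sgn}(\Theta')$ to $\mathrm{sgn}(R(u))$, with $R(1)=R'(1)=0$ and $R''$ of constant sign $\mathrm{sgn}(1-a)$ on $u>1$, which gives $\mathrm{sgn}(\Theta')=\mathrm{sgn}(1+\alpha)$ and hence $\tau'(z)>0$ in both regimes. The $z\to0^{+}$ expansion ($\tau\to2$, $q\sim2/z$) is also right.

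The genuine gap is the unproven half of your step (b): you assert $\tau\to\infty$ as $z\to\infty$, ``so that $z\mapsto\tau$ is a bijection $(0,\infty)\to(2,\infty)$'', but your endpoint paragraph only computes $\lim_{z\to\infty}q$, never $\lim_{z\to\infty}\tau$ --- and the assertion is false for $0<\alpha<1$. Your own dominant balance exposes this: there $1+\phi(z)\sim-\alpha z$, so $q\sim\frac{1+\alpha}{\alpha z}$ and therefore $\tau=qz\to\frac{1+\alpha}{\alpha}<\infty$. Thus for $0<\alpha<1$ the switching curve is the graph of a decreasing function on the bounded interval $\left(2,\frac{1+\alpha}{\alpha}\right)$, descending to $q=0$ as $\tau$ approaches its finite right endpoint; there is no horizontal asymptote and no bijection onto $(2,\infty)$. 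The limit $\tau\to\infty$ does hold for $\alpha\le0$ (namely $\tau\sim(-\alpha-1)z$ for $\alpha<-1$, $\tau\sim\frac{1+\alpha}{-\alpha}\,z^{-\alpha}$ for $-1<\alpha<0$, $\tau\sim\ln z$ for $\alpha=0$, and $\tau\sim z/\ln z$ for $\alpha=-1$), so your argument completes the lemma only in that range. For $0<\alpha<1$ the lemma as stated is itself inaccurate about the domain and the asymptote; a correct write-up must prove the finite limit $\frac{1+\alpha}{\alpha}$ and flag that discrepancy, rather than assert surjectivity onto $(2,\infty)$ that the formulas themselves contradict.
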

\noindent For $\alpha=-2$ one can derive an explicit equation, $q(\tau)=\frac{\tau}{\tau-2}$, see Figure \ref{Alpha-2}.

\subsection{Cutoff point}\label{CutPow}

In terms of $l_y(s)$, the cutoff equation $\Cut=0$ becomes
\begin{equation}\label{Cutrs}
r=\int_1^{1+r\tau}\!\!\!\!\!\!l_y(s)ds.
\end{equation}
Accordingly, the cutoff curve is (generically) a collection of straight lines over points solving the system \eqref{SwrsPow},\eqref{Cutrs}. We will show when there is, in fact, such a point which we will call the {\it cutoff point}. 
\begin{lemma}\label{CutPt}
For $\alpha\geq1$ there are no cutoff points. For $\alpha<1$ there is a unique point $(r_c,\tau_c)$, with $\tau_c\geq2$, $r_c\geq\frac1{1-\alpha}$. For $\alpha<0$, the unique point also satisfies $r_c<-\frac1{\alpha}$.
\end{lemma}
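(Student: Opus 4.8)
The plan is to specialize the switching and cutoff equations \eqref{SwrsPow}, \eqref{Cutrs} to the multiplicative density $l(s)=s^{\alpha-1}$ and reduce the whole system to a single equation in one variable. Introducing $R:=1+r\tau>1$ and evaluating the elementary integrals (for $\alpha\neq0,-1$; the two exceptional values are handled the same way with $\ln$ replacing the power, or by continuity), the cutoff equation becomes $r=\frac{R^\alpha-1}{\alpha}$, while the switching equation $\int_1^R s^\alpha ds=(1+r)\int_1^R s^{\alpha-1}ds$ becomes $\frac{R^{\alpha+1}-1}{\alpha+1}=(1+r)\frac{R^\alpha-1}{\alpha}$. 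Substituting the cutoff relation for $r$, a cutoff point corresponds exactly to a root $R_c>1$ of
\[
\Phi(R):=\frac{R^{\alpha+1}-1}{\alpha+1}-\Big(1+\tfrac{R^\alpha-1}{\alpha}\Big)\tfrac{R^\alpha-1}{\alpha},
\]
after which $r_c=\frac{R_c^\alpha-1}{\alpha}$ and $\tau_c=(R_c-1)/r_c$ are determined. Thus existence, uniqueness, and non-existence all reduce to counting roots of $\Phi$ on $(1,\infty)$.

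Next I would analyze the sign of $\Phi$. One checks $\Phi(1)=0$ and computes $\Phi'(R)=R^{\alpha-1}\psi(R)$ with $\psi(R):=R-1-2\frac{R^\alpha-1}{\alpha}$, so $\psi(1)=0$ and $\psi'(R)=1-2R^{\alpha-1}$. For $\alpha\geq1$ one has $\psi'<0$ throughout $(1,\infty)$, hence $\psi<0$, hence $\Phi$ is strictly decreasing and stays negative: no root, no cutoff point (this also follows from the convex case of Theorem \ref{CutExist}). For $\alpha<1$ the sign of $\psi'$ changes once, at $R_*=2^{1/(1-\alpha)}$, so $\psi$ decreases then increases and has a unique zero $R_1>1$, being negative on $(1,R_1)$ and positive after; consequently $\Phi$ decreases then increases and is negative just past $R=1$. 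Computing $\lim_{R\to\infty}\Phi(R)$ — which equals $+\infty$ for $-1\leq\alpha<1$ and $\frac{-1}{\alpha^2(\alpha+1)}>0$ for $\alpha<-1$ — shows $\Phi$ returns to a positive value, so it has a unique root $R_c\in(R_1,\infty)$, with $\Phi<0$ on $(1,R_c)$ and $\Phi>0$ on $(R_c,\infty)$. This establishes the unique cutoff point for every $\alpha<1$, including the range $\alpha<0$ not covered by Theorem \ref{CutExist}.

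Finally I would prove the three bounds. The inequality $\tau_c\geq2$ is immediate from Theorem \ref{tau2}, since $L'(y)=y^{\alpha-1}$ is non-increasing for $\alpha\leq1$ and the cutoff point lies on the switching surface. The bound $r_c<-1/\alpha$ for $\alpha<0$ is also immediate: since $R_c>1$ forces $R_c^\alpha\in(0,1)$, the cutoff relation gives $r_c=\frac{R_c^\alpha-1}{\alpha}\in(0,-1/\alpha)$. For the remaining bound $r_c\geq\frac{1}{1-\alpha}$, observe that the point $R_0:=(1-\alpha)^{-1/\alpha}$ is exactly where $\frac{R^\alpha-1}{\alpha}=\frac{1}{1-\alpha}$; since $r(R)=\frac{R^\alpha-1}{\alpha}$ is increasing and $\Phi$ is negative precisely on $(1,R_c)$, it suffices to show $\Phi(R_0)\leq0$, which then forces $R_0\leq R_c$ and hence $r_c=r(R_c)\geq r(R_0)=\frac{1}{1-\alpha}$.

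The main obstacle is this last step: after substituting $R_0^\alpha=\frac{1}{1-\alpha}$ and $R_0^{\alpha+1}=(1-\alpha)^{-(1+\alpha)/\alpha}$, the claim $\Phi(R_0)\leq0$ becomes the one-variable inequality
\[
\frac{(1-\alpha)^{-(1+\alpha)/\alpha}-1}{\alpha+1}\leq\frac{2-\alpha}{(1-\alpha)^2},
\]
which I expect to be the delicate part, since it is close to equality (for instance near $\alpha=-2$). I would verify it by substituting $t=1-\alpha$ and studying the resulting single-variable function by elementary calculus, treating the ranges $0<t<1$, $1<t<2$, and $t>2$ separately if needed, with the endpoint cases $\alpha=0,-1$ handled by continuity.
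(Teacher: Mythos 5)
Most of your proposal is sound, and the structural part takes a genuinely different route from the paper. Where the paper eliminates $R=1+r\tau$ to reduce the system \eqref{SwrsPow}, \eqref{Cutrs} to the single equation $f_\alpha(r)=0$ of \eqref{Cutr} and then studies the sign pattern of $f_\alpha''$, you instead eliminate $(r,\tau)$ in favor of $R$ and count roots of $\Phi(R)$. Your computation is correct: $\Phi'(R)=R^{\alpha-1}\psi(R)$ with $\psi(R)=R-1-2\frac{R^\alpha-1}{\alpha}$ and $\psi'(R)=1-2R^{\alpha-1}$, so $\psi'$ has at most one sign change, which gives non-existence for $\alpha\geq1$ and the ``decrease-then-increase'' shape of $\Phi$ for $\alpha<1$; your limit value $\lim_{R\to\infty}\Phi(R)=\frac{-1}{\alpha^2(\alpha+1)}>0$ for $\alpha<-1$ checks out, and the bijection between roots $R_c>1$ and cutoff points makes uniqueness legitimate. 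This is arguably cleaner than the paper's argument because it treats all $\alpha<1$ with one factorization rather than case-by-case inversions of monotonicity. The bounds $\tau_c\geq2$ (via Theorem \ref{tau2}) and $r_c<-\frac1\alpha$ (from $R_c^\alpha\in(0,1)$) are also correct, and the latter is exactly the paper's observation that $1+\alpha r=(1+r\tau)^\alpha>0$ in disguise.

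The genuine gap is the bound $r_c\geq\frac1{1-\alpha}$: you reduce it to the one-variable inequality $\frac{(1-\alpha)^{-(1+\alpha)/\alpha}-1}{\alpha+1}\leq\frac{2-\alpha}{(1-\alpha)^2}$ and then only state a plan to verify it. That inequality does appear to be true, but it is genuinely delicate --- the two sides become asymptotically equal as $\alpha\to-\infty$ --- so as written the lemma is not proved. The missing observation is that this bound costs nothing once $\tau_c\geq2$ is known, and the fix works entirely inside your own parametrization. Substitute the cutoff relation $R^\alpha=1+\alpha r$ into your switching equation $\frac{R^{\alpha+1}-1}{\alpha+1}=(1+r)\frac{R^\alpha-1}{\alpha}$ and use $R=1+r\tau$: expanding $(1+r\tau)(1+\alpha r)=1+(\alpha+1)r(1+r)$ and dividing by $r$ gives
\begin{equation*}
\tau=1+\frac{r}{1+\alpha r},
\end{equation*}
which is the first equation of the paper's system \eqref{Powrt}. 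Since $1+\alpha r_c=R_c^\alpha>0$, the inequality $\tau_c\geq2$ gives $\frac{r_c}{1+\alpha r_c}\geq1$, hence $r_c\geq1+\alpha r_c$, i.e.\ $r_c(1-\alpha)\geq1$. This is exactly how the paper derives the bound, and replacing your ``delicate part'' with these three lines completes your proof.
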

\begin{proof}
For $\alpha\neq0,-1$ the system \eqref{SwrsPow},\eqref{Cutrs} simplifies to\footnote{For $\alpha=0$ the second equation is replaced by $r=\ln(1+r\tau)$, and for $\alpha=-1$ by $r(1+r)=\ln\tau$.}
\begin{align}\label{Powrt}
\tau&=1+\frac{r}{1+\alpha r};\notag\\
1+\alpha r&=(1+r\tau)^\alpha.
\end{align}
It follows from the second equation that $1+\alpha r\geq0$ for any positive solution, and the system reduces to a single equation for $r$:
\begin{equation}\label{Cutr}
f_\alpha(r):=(1+\alpha r)^{\frac1\alpha}-\left(1+r+\frac{r^2}{1+\alpha r}\right)=0.
\end{equation}
\vspace{-0.2in}
\begin{figure}[!ht]
\begin{centering}
\includegraphics[width=8cm]{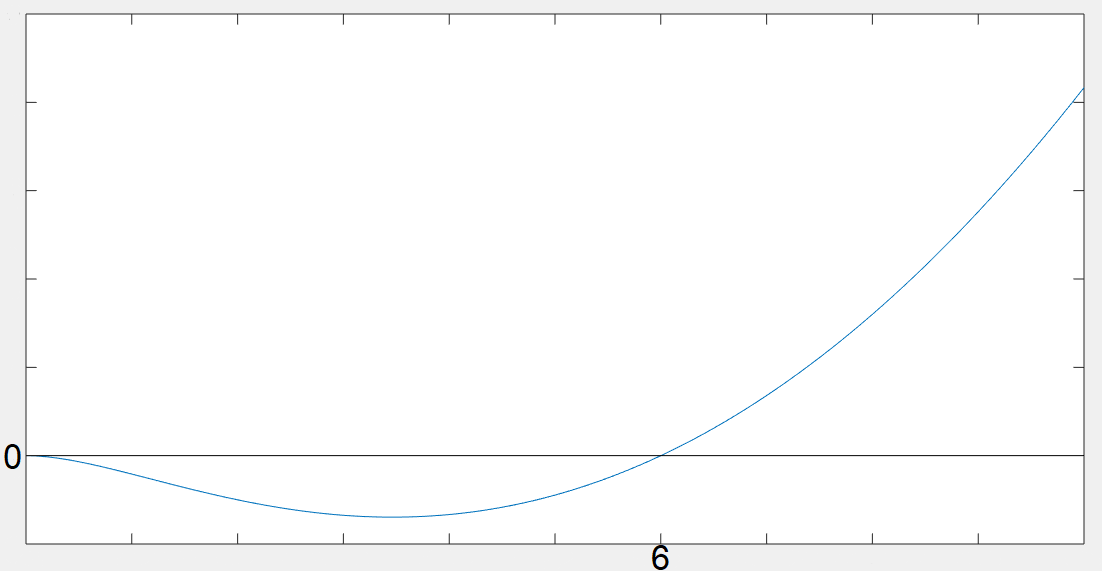}
\par\end{centering}
\vspace{-0.2in}
\hspace*{-0.1in}\caption{\label{falpha} The graph of  $f_\alpha(r)$ for $\a=\frac{1}{2}$, the shape of which is typical for $0\leq\alpha<1$. For $-1<\alpha<0$ the graph has a vertical asymptote at $r=-\frac1\alpha$, and for $\alpha<-1$ its shape is inverted about the $r$-axis.}
\end{figure}
\medskip 

The rest of the proof is similar to the proof of Theorem \ref{CutExist}, and we only sketch it. Note that $f_\alpha(0)=f'_\alpha(0)=0$, and $f''_\alpha(0)<0$, for all $\alpha$. Moreover (we leave the case $\alpha=0$ to the reader),
$$
f_\alpha^{''}(r)=(1-\alpha )(1+\alpha r)^{\frac{1-2\alpha}{\alpha}}-\frac{2}{(1+\alpha r)^{3}}
$$
stays negative for all $r>0$ when $\alpha\geq1$. Therefore, there are no positive solutions in this case. For $-1<\alpha<1$ the equation $f_\alpha^{''}(r)=0$ has a single positive zero $r_{**}$ that can be found explicitly. For $0<\alpha<1$ this means that $f_\alpha^{'}(r)$ decreases on $(0,r_{**})$ and strictly increases on $(r_{**},\infty)$. Since it is negative on $(0,r_{**})$ there is a unique $r_*$ on $(r_{**},\infty)$ such that $f_\alpha^{'}(r_*)=0$. 

By a similar argument, $f_\alpha(r)=0$ has a unique positive root $r_c\geq r_*\geq r_{**}>0$. The estimate $\tau_c\geq2$ follows from Theorem \ref{tau2}, and implies $r_c(1-\alpha)\geq1$ by the first equation in \eqref{Powrt}. For $-1<\alpha<0$ one should replace $\infty$ with $-\frac1{\alpha}$ and for $\alpha<-1$ reverse the positive/negative and the increase/decrease in the argument above.
\end{proof}

\noindent The case $\alpha=0$, i.e. $L(y)=\ln y$, was studied by King and Roughgarden, who estimated $r_c\approx1.793$ and $\tau_c\approx2.793$. When $\alpha\to1$ our estimate implies that $r_c\to\infty$, and $\tau_c\to2$, and one can show that $r_c\sim-\frac1{\alpha}$ for large negative $\alpha$. Graphs of $r_c(\alpha)$ and $\tau_c(\alpha)$ are depicted on Figure \ref{rtau}. Biologically, this means that, all else being equal, increase in the fitness index $\a$ leads to shorter times-to-go at the switch to the final reproductive arc, and larger  vegetative mass at the time of the switch.
\begin{figure}[!ht]
\vspace{-0.1in}
\begin{centering}
\includegraphics[width=90mm]{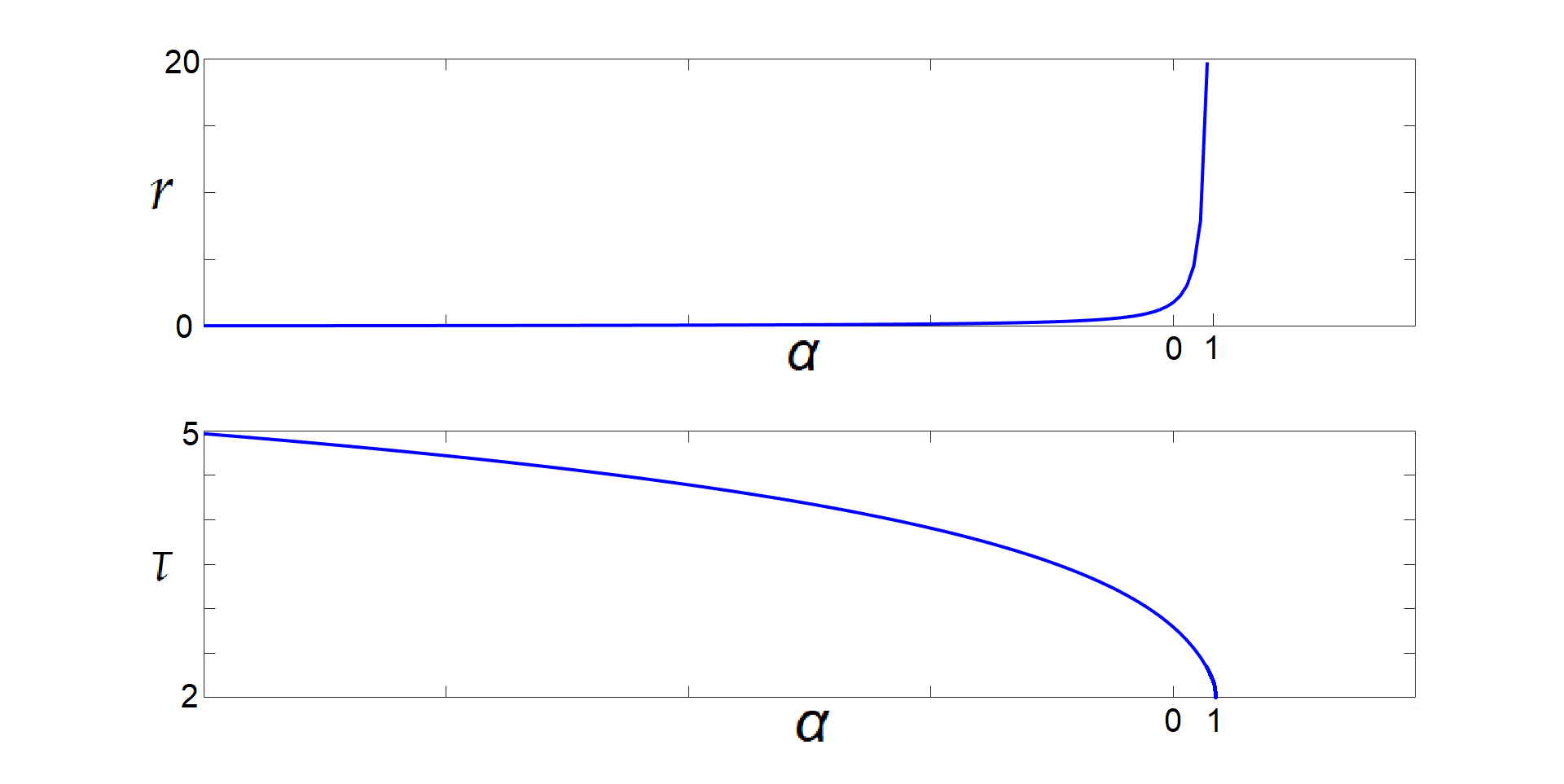}
\par\end{centering}
\vspace{-0.3in}
\hspace*{-0.1in}\caption{\label{rtau} Graphs of $r_c(\alpha)$ (above) and $\tau_c(\alpha)$ (below) on $(-\infty,1)$.}
\end{figure}

Recall that points on the switching surface that are above the cutoff curve are the points where optimal control can switch from $u=1$ to $u=0$. In general, these are characterized by the inequality $\Cut>0$, where $\Cut$ is defined in \eqref{CutFun}. We can now make this condition more explicit.
\begin{lemma}\label{VRabove} If $(\tau,r,y)$ is a VR junction point of an optimal trajectory then $r\leq r_c$ and $\tau\leq\tau_c$.
\end{lemma}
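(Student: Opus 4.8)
The plan is to translate the two inequalities into a single sign condition on the cutoff function restricted to the switching curve, and then locate the sign change using the cutoff point supplied by Lemma~\ref{CutPt} (so throughout I take $\alpha<1$, which is where $r_c,\tau_c$ are defined). First I would record why a VR junction must lie on or above the cutoff curve. At an optimal VR junction time $t_s$ the switching rule \eqref{Bang} forces $\lambda-\mu$ to pass from positive (on the preceding V arc) through $0$ (at $t_s$) to negative (on the following R arc); moreover, since $\lambda(t_s)=\mu(t_s)$, the term $(\lambda-\mu)u$ in the $\lambda$-equation of \eqref{Adj} vanishes, so $\dot\lambda=-\mu P'(x)$ from both sides and $\lambda-\mu$ is $C^1$ at $t_s$. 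Hence $\dot\lambda(t_s)\le\dot\mu(t_s)$, and \eqref{CutMarg} gives $\Cut(\tau_s,\xi_s,y_s)=-\xi_s^{-1}(\dot\lambda-\dot\mu)\ge0$. As the junction also lies on the switching surface ($\Sw=0$), the point $(\tau,r,y)$ lies on the switching curve \eqref{SwCurve} and satisfies $\Cut\ge0$.

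Next I would make the condition $\Cut\ge0$ explicit for $L(y)=y^\alpha/\alpha$. Writing $w:=1+r\tau=y_T/y$, the definition \eqref{CutFun} gives $\Cut/y^\alpha=\frac{w^\alpha-1}{\alpha}-r$. The key observation is that $w\mapsto\frac{w^\alpha-1}{\alpha}$ is strictly increasing for every real $\alpha$ (its derivative $w^{\alpha-1}$ is positive), so, uniformly in the sign of $\alpha$, $\Cut\ge0\iff w\ge(1+\alpha r)^{1/\alpha}\iff\tau\ge\tau_{Cut}(r)$, where $\tau_{Cut}(r):=\big((1+\alpha r)^{1/\alpha}-1\big)/r$ is the time-to-go read off the cutoff equation \eqref{Cutrs} (here $1+\alpha r>0$ automatically for $\alpha>0$, and on the relevant range for $\alpha<0$ by Lemma~\ref{CutPt}). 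Thus on the switching curve the junction condition becomes $\tau_{Sw}(r)\ge\tau_{Cut}(r)$, where $\tau_{Sw}(r)$ is the $\tau$ supplied by \eqref{SwCurve}.

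It then remains to show that $D(r):=\tau_{Sw}(r)-\tau_{Cut}(r)$ is positive for $r<r_c$ and negative for $r>r_c$. By construction $D(r)=0$ exactly when the point of the switching curve at parameter $r$ also satisfies the cutoff equation, i.e.\ exactly at the cutoff point; Lemma~\ref{CutPt} guarantees this happens at the single value $r=r_c$, with $\tau_c=\tau_{Sw}(r_c)$. Since $D$ is continuous with this unique zero, it has constant sign on each of $(0,r_c)$ and $(r_c,\infty)$. A short expansion as $r\to0^+$ settles the left arc: $\tau_{Sw}\to2$ by the vertical asymptote in Lemma~\ref{SwCurv}, while $(1+\alpha r)^{1/\alpha}=1+r+\tfrac{1-\alpha}{2}r^2+\cdots$ gives $\tau_{Cut}(r)\to1$, so $D>0$ for small $r$ and hence on all of $(0,r_c)$. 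The crossing at $r_c$ is transversal because $f_\alpha$ in \eqref{Cutr} is strictly increasing through its root $r_c$ (which is exactly what the proof of Lemma~\ref{CutPt} establishes), so $D$ changes sign there and is negative on $(r_c,\infty)$. Therefore $\Cut\ge0$ on the switching curve forces $r\le r_c$, and $\tau\le\tau_c$ follows because $r$ increases with $\tau$ along the switching curve, as $q(\tau)=1/r(\tau)$ is decreasing by Lemma~\ref{SwCurv}.

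The main obstacle is this last step, pinning down the sign of $D$ on the far arc $r>r_c$: positivity for small $r$ together with the unique zero only shows $D$ keeps one sign beyond $r_c$, and one must rule out that $\Cut\ge0$ persists for large $r$ (which would defeat the lemma). I would handle this through the transversality of the crossing inherited from the strict monotonicity of $f_\alpha$ at $r_c$ in Lemma~\ref{CutPt}. An alternative is a direct asymptotic evaluation of $D$ as $\tau\to\infty$, but that needs separate bookkeeping in the cases $\alpha>0$, $-1<\alpha<0$, and $\alpha<-1$, where the horizontal asymptote of the switching curve (Lemma~\ref{SwCurv}) and the constraint $1+\alpha r>0$ behave differently; so the transversality route is cleaner.
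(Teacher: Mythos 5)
Your reduction of the lemma to a sign analysis of $D(r)=\tau_{Sw}(r)-\tau_{Cut}(r)$ is sound, and most of its steps match the paper's proof: a VR junction forces $\Sw=0$ and $\Cut\ge0$; monotonicity of $w\mapsto\frac{w^\alpha-1}{\alpha}$ converts $\Cut\ge0$ into $\tau_{Sw}(r)\ge\tau_{Cut}(r)$; $D$ vanishes only at $r_c$ by Lemma \ref{CutPt}; $D>0$ near $r=0$; and the closing passage from $r\le r_c$ to $\tau\le\tau_c$ via Lemma \ref{SwCurv} is exactly the paper's last line. The genuine gap is the step you yourself flag as the crux: the claim that $D$ changes sign at $r_c$ \emph{because} $f_\alpha$ in \eqref{Cutr} is strictly monotone through its root. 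That inference is a non sequitur. Writing $\tilde\tau(r):=1+\frac{r}{1+\alpha r}$ for the curve given by the first equation of \eqref{Powrt}, a direct computation gives $f_\alpha(r)=r\bigl(\tau_{Cut}(r)-\tilde\tau(r)\bigr)$. So strict monotonicity of $f_\alpha$ through $r_c$ is a transversality statement about the crossing of $\tau_{Cut}$ with the \emph{auxiliary} curve $\tilde\tau$ (which arises only after the cutoff relation has been substituted into the switching relation), not about its crossing with the switching curve $\tau_{Sw}$, which is what $D$ compares. Nothing in Lemma \ref{CutPt} controls $\tau_{Sw}'(r_c)$, so as written you cannot exclude the scenario that the switching curve touches $\{\Cut=0\}$ tangentially at $r_c$ and stays above it, i.e.\ that $D\ge0$ persists past $r_c$ --- precisely the possibility that would defeat the lemma.

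The step can be repaired, but it needs ingredients absent from your proposal. One route: normalize $\Sw$ by the positive factor $y^{\alpha+1}$ (see \eqref{Swrs}) and check that its restriction to the curve $\{\Cut=0\}$ equals $\frac{1+\alpha r}{1+\alpha}\,f_\alpha(r)$; then establish that $\Sw$ is strictly increasing in $\tau$ for fixed $r$ when $\tau>1$ (indeed $\partial_\tau\Sw=\xi^2L'(y_T)(\tau-1)$). Since $\Cut$ is also increasing in $\tau$, the sign of $f_\alpha$ along $\{\Cut=0\}$ then translates into the sign of $D$ along $\{\Sw=0\}$, with the factor $\frac{1+\alpha r}{1+\alpha}$ changing sign at $\alpha=-1$ in exact compensation for the inversion of $f_\alpha$ noted in Figure \ref{falpha}. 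This is all doable, but it is a substantive addition with its own case bookkeeping. The paper sidesteps the two-curve comparison entirely: it substitutes the switching relation \eqref{SwCurve} into $\Cut\ge0$ to get a one-variable inequality in $z=r\tau$, runs the $f_\alpha$-type sign analysis on that function of $z$ to conclude $r\tau\le r_c\tau_c$, and then obtains $r\le r_c$ because $r\mapsto r+\frac{r^2}{1+\alpha r}$ is strictly increasing. If you wish to keep your geometric two-curve formulation, the monotonicity-in-$\tau$ lemma above is the missing bridge; otherwise the paper's single-variable reduction is the shorter path.
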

\begin{proof}
The $\Cut\geq0$ condition simplifies to 
\begin{equation*}\label{derivs}
    r\leq\frac{(1+r\tau)^\alpha - 1}{\alpha}.
\end{equation*}
Combining this inequality with \eqref{SwCurve}, we derive:
\begin{equation*}
    r\tau-\frac{(1+r\tau)^{\alpha}-1}{\alpha}\left(1+\frac{1-(1+r\tau)^{-\alpha}}{\alpha}\right)\leq0.
\end{equation*}
The same analysis as for $f_\alpha(r)$ in Lemma \ref{CutPt} can be carried out for the left hand side of this inequality as a function of $r\tau$. It has a single positive zero $r_c\tau_c$, and the inequality implies that $r\tau\leq r_c\tau_c$. By \eqref{Powrt}, we obtain: 
$$
r + \frac{r^2}{1+\alpha r}\leq r_c + \frac{r_c^2}{1 + \alpha r_c}.
$$
Since the left hand side is strictly increasing for positive $r$ (for $r<-\frac1\a$ when $\a<0$) this implies $r\leq r_c$. The second estimate follows from monotonicity of $\tau$ as a function of $r$ given implicitly by \eqref{SwCurve}.
\end{proof}

\subsection{Existence of singular arcs}\label{ExiSing}

Note that $r_c, \tau_c$ are universal functions of $\alpha$ that do not depend on the initial values, the season length $T$, or the production function $P(x)$. This means that the switching curve and the cutoff point do not depend on $P(x)$ either. However, the two-dimensionalization of the geometry described above is incomplete for general production functions. The singular control, and therefore the singular surface, does depend on $P(x)$, and is not a cylinder over a curve in the $r$-$\tau$ plane in general. Nonetheless, the information on the active part of the switching surface, provided by the last lemma, along with monotone behavior of the mass ratio $q$ along V and R arcs, is sufficient to infer the existence of singular arcs on optimal trajectories.

The inconvenience of the smaller $r$ values being ``above" the cutoff curve when is removed by using $q=\frac1r$, which makes ``above" indeed above, see Figure \ref{PowSwSing}. It follows from Lemma \ref{VRabove} that the region $q>q_c$, $\tau>\tau_c$, is above the switching surface, which leads to the main result of this section.
\begin{figure}[htbp]
\vspace{-0.15in}
\begin{centering}
\includegraphics[width=125mm]{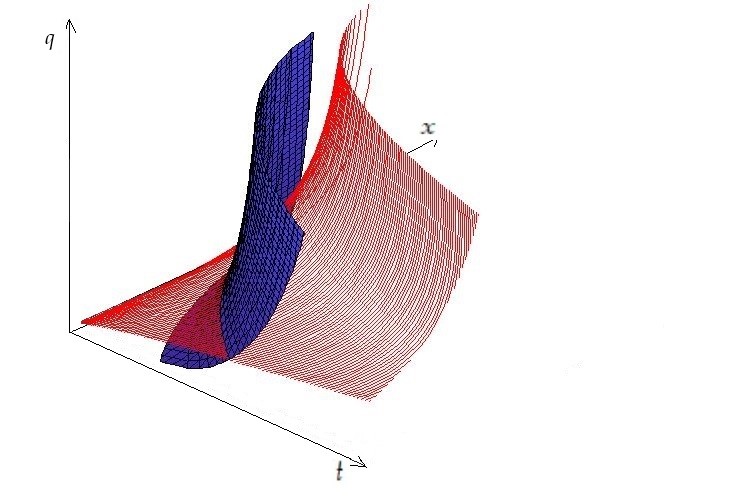}
\par\end{centering}
\vspace{-0.5in}
\hspace*{-0.1in}\caption{\label{PowSwSing} The switching surface (blue), and the singular surface (red) for $P(x)=x^{0.7}$ and $L(y)=2y^{0.5}$.}
\end{figure}

\begin{theorem}\label{Mexists} Let $L(y) = \frac{y^\alpha}{\alpha}$, $\alpha<1$, and $P(x) $ be concave down. Set $t_*:=T-\frac{\tau_c}{P'(x_0)}$, and suppose that  
\begin{equation}\label{Mbounds}
q_c-P'(x_0)t_*<q_0<q_c\,e^{P'(x_0)t_*}\,.
\end{equation}
Then, any optimal trajectory with $q_0$ as the initial value has a singular arc.
\end{theorem}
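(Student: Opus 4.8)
My plan is to use the structural classification already proved. By Theorem~\ref{5Structures}(iv), since $\ln P$ is concave, $L(y)=\frac{y^\alpha}{\alpha}$ is strictly increasing, and $\ln L'(y)=(\alpha-1)\ln y$ is strictly convex for $\alpha<1$, every optimal trajectory is of type R, MR, VR, RMR or VMR. The last three all contain a singular (M) arc, so it is enough to rule out the two bang-bang types R and VR. Before doing so I would record two elementary facts in the $(q,\tau)$ coordinates: along an R arc $\xi$ is constant and $\dot q=P'(x)=-\dot\tau$, so that $q+\tau$ is conserved; along a V arc both $q$ and $\tau$ strictly decrease. I would also translate the hypotheses, using $\tau_0=P'(x_0)T$ and hence $P'(x_0)t_*=\tau_0-\tau_c$: the lower bound becomes $q_0+\tau_0>q_c+\tau_c$ and the upper bound becomes $q_0<q_c\,e^{\tau_0-\tau_c}$.

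Ruling out pure R is the easy half and combines the lower bound with Lemma~\ref{SubR}. Since $t_*>0$ forces $\tau_0>\tau_c$, a pure R trajectory starting at $(q_0,\tau_0)$ descends the line $q+\tau=q_0+\tau_0$ and in particular meets the level $\tau=\tau_c$, where its mass ratio is $q_0+\tau_0-\tau_c$. The lower bound gives $q_0+\tau_0-\tau_c>q_c$, and since $q_c$ is the value of the switching curve at $\tau_c$ (the cutoff point lies on the switching surface), the trajectory is strictly above the switching surface at that instant while remaining R afterward. Lemma~\ref{SubR} then makes it suboptimal, so no optimal trajectory is pure R.

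Ruling out VR is the crux, and this is where the upper bound is meant to enter. If a trajectory were VR then, by Lemma~\ref{VRabove}, its V$\to$R junction occurs with $\tau_s\le\tau_c$ (equivalently $q_s\ge q_c$), i.e.\ on the active part of the switching surface. I would instead try to show that under the upper bound the initial V arc must cross the singular surface while $\tau>\tau_c$, switching to an M arc and producing a VMR trajectory before any active junction is reached. For linear $P(x)=kx$ this is transparent: the V arc is exactly $q=q_0e^{\tau-\tau_0}$, and $q_0<q_c e^{\tau_0-\tau_c}$ says precisely that this curve passes below the cutoff point $(q_c,\tau_c)$, placing it on the singular side of the cutoff configuration where it is intercepted. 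For a general concave $P$ I would control the V arc through the differential inequality $\frac{d\ln q}{d\tau}=\frac{P''P-(P')^2}{P''\tau P-(P')^2}\le 1$, valid for $\tau>1$, integrate it from $\tau_0$ down to $\tau_c$, and compare the position of the V arc against the singular curve running backward from the cutoff point.

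The main obstacle is exactly this comparison. Two features make it delicate. First, for nonlinear $P$ the singular surface is genuinely three-dimensional: it depends on $\xi=P/P'$, which grows along the V arc, so its location cannot be read from $(q,\tau)$ alone and must be tracked jointly with the evolving $\xi$. Second, the inequality above shows that concave V arcs descend more slowly in $q$ than the linear comparison arc, so the clean placement below the cutoff point does not transfer for free and the upper bound has to be weighed against the position of the singular surface. I would attempt to close this by a monotonicity argument anchored at the cutoff curve, propagating both the V arc and the singular curve backward in $\tau$ and showing the upper bound keeps the V arc on the singular side throughout $\tau>\tau_c$; an alternative would be a direct variational argument in the spirit of Lemma~\ref{SubR}, showing that inserting a short M arc strictly improves any candidate VR trajectory satisfying the bound.
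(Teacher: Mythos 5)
Your overall architecture matches the paper's, and the parts you actually complete are correct: the paper likewise reduces to excluding the M-free structures, kills the pure R case with the lower bound of \eqref{Mbounds} plus Lemma \ref{SubR} at the instant $\tau=\tau_c$ (its R arcs satisfy $\dot q=P'(x_0)=-\dot\tau$, which is exactly your conservation of $q+\tau$), and attacks VR junctions through Lemma \ref{VRabove}. But your proposal has a genuine gap precisely where you flag it: the VR case for nonlinear concave $P$ is never closed; you end with two candidate strategies, not an argument. Note also that your ``first obstacle'' is not an obstacle at all: the singular surface plays no role. Since Theorem \ref{5Structures}(iv) says the only M-free structures are R and VR, it suffices to show that the initial V arc never meets the switching surface inside the active region $\{q\geq q_c,\ \tau\leq\tau_c\}$ of Lemma \ref{VRabove}; you never need to know where an M arc would begin, so the $P$-dependence and three-dimensionality of the singular surface are irrelevant, and indeed the paper's proof never mentions it.

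Your ``second obstacle,'' however, is the true crux, and your diagnosis of it is accurate --- in fact it exposes a soft spot in the paper's own proof. The paper estimates $\dot{\xi}=\left(P'(x)-P''(x)\xi\right)\xi\geq P'(x)\xi$ and then asserts $q(t)\leq q_0e^{-P'(x_0)t}$, pairing this with $\tau(t)\leq P'(x_0)(T-t)$ so that the V arc avoids the active region: before $t_*$ because $\tau>\tau_c$, after $t_*$ because the upper bound in \eqref{Mbounds} forces $q<q_c$. For linear $P$ these are identities and this is exactly your ``transparent'' case. For strictly concave $P$, Gronwall gives only $q(t)\leq q_0\exp\bigl(-\int_0^tP'(x(s))\,ds\bigr)$, and since $P'(x(s))\leq P'(x_0)$ along a V arc this is \emph{weaker} than the asserted bound (similarly, $\tau(t)\leq P'(x_0)(T-t)$ yields no lower bound on $\tau$ before $t_*$); the paper's inequality uses the monotonicity of $P'$ in the wrong direction. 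That is precisely your observation that $d\ln q/d\tau\leq 1$ places the concave V arc \emph{above} the linear comparison arc at $\tau=\tau_c$. So the step you declined to wave through is the step the paper waves through. A concrete way to close it is to strengthen the hypothesis rather than the argument: every admissible trajectory satisfies $x(t)\leq \bar{x}(T)$, where $\bar{x}$ solves $\dot{\bar{x}}=P(\bar{x})$, $\bar{x}(0)=x_0$; replacing $P'(x_0)$ by $p:=P'(\bar{x}(T))$ in \eqref{Mbounds} and in $t_*$ gives the valid two-sided bounds $\int_0^tP'(x(s))\,ds\geq pt$ and $\tau(t)\geq p(T-t)$, after which the paper's argument (and your linear-case argument) goes through verbatim. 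Without some such repair, neither your outline nor the published estimate establishes the theorem for strictly concave $P$.
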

\begin{proof}If the initial arc is M, we are done. If it is R then $u=0$, and $\dot{q}=\frac{P(x_0)}{\xi}=P'(x_0)$. By the lower bound in \eqref{Mbounds}, we have that $q(t_*)>q_c$ and $\tau(t_*)=P'(x_0)(T-t_*)=\tau_c$. This means that just before $t_*$ the arc is above the switching curve, and hence is suboptimal past $t_*$ by Lemma \ref{SubR}. Since RV switches are impossible by Theorem \ref{5Structures}, it must switch to an M arc before $t_*$.
\smallskip 

If the initial arc is V, we can estimate:
\begin{equation*}
\dot{\xi}=\left(\frac{P(x)}{P'(x)}\right)'\dot{x}=\left(1-\frac{P''(x)}{P'(x)}\frac{P(x)}{P'(x)}\right)P(x)=\left(P'(x)-P''(x)\xi\right)\xi\geq P'(x)\xi.
\end{equation*}
Therefore, $\xi(t)\geq\xi_0e^{P'(x)t}$ and $q(t)=\frac{y_0}{\xi(t)}\leq q_0e^{-P'(x_0)t}$. This means that $\tau(t)<\tau_c$ for $t<t_*$, and, by the upper bound in \eqref{Mbounds}, $q(t)<q_c$ for $t\geq t_*$. Therefore, by Lemma \ref{VRabove}, this V arc can not intersect the switching surface at a point where the VR switch is optimal. Since the final arc must be R, it must switch to an M arc before $t_*$. 
\end{proof}
\begin{corollary}\label{Marcs} Under the conditions of Theorem \ref{Mexists}, if $T>\frac{\tau_c}{P'(x_0)}$ there exist optimal trajectories that include singular arcs, and if $T>\frac{\tau_c+q_c}{P'(x_0)}$ then pure R trajectories are suboptimal, and all optimal trajectories with small enough $q_0$ are RMR.
\end{corollary}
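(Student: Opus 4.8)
The plan is to obtain both assertions from Theorem \ref{Mexists} by tracking how the admissible interval \eqref{Mbounds} for the initial mass ratio $q_0$ behaves as $T$ increases, and then to fix the precise schedule type using the arc list of Theorem \ref{5Structures}(iv) together with the geometric configuration of Section \ref{Sing}. Writing $t_*=T-\tau_c/P'(x_0)$, so that $P'(x_0)t_*=P'(x_0)T-\tau_c$, the two hypotheses on $T$ read $P'(x_0)t_*>0$ and $P'(x_0)t_*>q_c$ respectively; since $q_c>0$ for $\alpha<1$ by Lemma \ref{CutPt}, the second is strictly stronger than the first.

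For the first assertion, $T>\tau_c/P'(x_0)$ gives $t_*>0$, hence $e^{P'(x_0)t_*}>1$ and the interval in \eqref{Mbounds} has lower endpoint $q_c-P'(x_0)t_*<q_c<q_c e^{P'(x_0)t_*}$, so it is a nonempty open interval whose upper endpoint is positive. I would fix any $q_0$ in it with $q_0>0$, realize it by the initial datum $y_0=q_0\xi_0>0$, invoke the Filippov existence result of Section \ref{Model} to produce an optimal trajectory with this datum, and apply Theorem \ref{Mexists} to conclude that this trajectory carries a singular arc.

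For the second assertion the extra room $P'(x_0)t_*>q_c$ makes the lower endpoint $q_c-P'(x_0)t_*$ negative, so the lower constraint in \eqref{Mbounds} is vacuous and Theorem \ref{Mexists} applies to every admissible $q_0\in[0,q_c e^{P'(x_0)t_*})$. To see that a pure R trajectory is suboptimal I follow it in the $q$-$\tau$ plane: with $u\equiv0$ one has $\dot q=P'(x_0)$ and $\tau=P'(x_0)(T-t)$, so at $t_*$ it reaches $\tau=\tau_c$ and $q(t_*)=q_0+P'(x_0)t_*>q_c$. Thus just before $t_*$ the trajectory lies in the region $\{q>q_c,\ \tau>\tau_c\}$, which by Lemma \ref{VRabove} is above the switching surface; being R thereafter, it is suboptimal by Lemma \ref{SubR}. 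This argument holds for every $q_0\ge0$, giving the pure-R clause.

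Finally, to pin the schedule for small $q_0$ I would combine three inputs. Theorem \ref{Mexists} guarantees a singular (M) arc, and Theorem \ref{5Structures}(iv) then restricts the type to MR, RMR or VMR, the M-free types R and VR being excluded. It remains to discard MR and VMR. An M-initial arc forces the start onto the singular surface, while a V-initial arc requires that the V arc, along which $\dot q<0$ for concave $P$ (as computed in the proof of Theorem \ref{Mexists}), later meet the singular surface. If $q_0$ lies below the $q$-height of the singular surface over the initial fibre $(\xi_0,\tau_0)$, $\tau_0=P'(x_0)T>\tau_c$, then the start is not on the surface, ruling out MR; and since that height increases as $\tau$ decreases toward $\tau_c$ while $q$ falls along a V arc, such a V arc only moves further below the surface and can never reach it, ruling out VMR. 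The lone survivor is RMR. I expect the main obstacle to be exactly this geometric input: confirming, from the configuration of Section \ref{Power} and Figure \ref{PowSwSing}, that the singular surface sits at a strictly positive, $\tau$-decreasing mass ratio over the relevant range, so that all sufficiently small $q_0$ lie below it and are thereby forced onto an R-initial, hence RMR, schedule.
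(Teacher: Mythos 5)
Your handling of the first two claims is correct and is exactly the derivation the paper intends (the corollary carries no separate proof there): when $t_*>0$ the interval \eqref{Mbounds} is nonempty, so Filippov existence plus Theorem \ref{Mexists} yields trajectories with singular arcs; and when $P'(x_0)t_*>q_c$ the R-arc computation from the proof of Theorem \ref{Mexists} (the pure R arc reaches $q(t_*)=q_0+P'(x_0)t_*>q_c$ at $\tau=\tau_c$, hence sits in the region $\{q>q_c,\,\tau>\tau_c\}$ above the switching surface just before $t_*$) together with Lemma \ref{SubR} gives suboptimality of pure R for every $q_0\ge0$.

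The genuine gap is in the RMR claim, and it is precisely the step you defer. To exclude VMR you assume the active singular surface is a graph over $(\xi,\tau)$ with strictly positive height that increases as $\tau$ decreases, so that a falling V arc can never reach it. But Section \ref{ExiSing} explicitly warns that for nonlinear $P$ the singular surface is \emph{not} a cylinder over a curve in the $r$-$\tau$ plane: its height varies with $\xi$ as well, and $\xi$ grows along a V arc, so monotonicity in $\tau$ alone cannot prevent the arc from meeting the surface. Your picture is imported from the linear-production diagram, where it is actually proved (Theorem \ref{4LinStructures}(i) via \eqref{SingCurve}), but it does not transfer as stated. What closes the gap is control along singular \emph{curves} rather than a graph property: substituting \eqref{SingControl} with $L(y)=y^{\alpha}/\alpha$ into the state equations gives, on any singular arc, $\dot q=P'(x)\,(B+q)(q+\alpha)/\bigl((1-\alpha)+B\bigr)$ with $B:=-yP''(x)/P'(x)\ge0$. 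For $\alpha<0$ the region $q\le-\alpha$ is then forward-invariant, so no singular arc passing through it can reach the cutoff value $q_c>-\alpha$ (Lemma \ref{CutPt}); for $0\le\alpha<1$ one has $B\le Cq$ on the compact range of $x$ reachable by time $T$, hence $\dot q\le P'(x_0)(C+1)\,q(q+\alpha)/(1-\alpha)$, and the time for a singular arc to climb from $q_1$ to $q_c$ diverges as $q_1\to0$, while only time $T$ is available. Either way, every junction with an active singular arc --- an MR start, or a VM switch, whose mass ratio is at most $q_0$ because $q$ strictly falls along V arcs --- has $q$ bounded below by a positive constant determined by $T$ and $P$; for $q_0$ below that constant the initial arc must be R, and Theorem \ref{Mexists} plus Theorem \ref{5Structures}(iv) then force RMR. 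Without this (or an equivalent) quantitative bound on the singular flow, the exclusion of MR and VMR --- the only nontrivial content of the corollary beyond citation --- remains unproven.
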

In other words, singular arcs (mixed growth) are a general phenomenon for conservative ($\alpha<1$) fitness indices. On the other hand,  for $\alpha\geq1$, that corresponds to ``gambler's" strategies, we always get bang-bang behavior. 

\section{Linear production and the mass ratio diagram}\label{Lin}

In this section we give a complete solution of the optimal allocation problem for linear production functions and power utilities. Note that for linear $P(x)$ we have $\xi=x$ and $q=\frac{y}{x}$, which is merely the ratio of the reproductive to the vegetative mass of the plant. The switching and the singular surfaces are both cylindrical in $q$-$t$ coordinates, so the solutions can be depicted on a a two-dimensional diagram, which we call the {\it mass ratio diagram}, see Figure \ref{KRdiagram}. For $L(y)=\ln y$, this diagram was originally described (with only partial proof) in \cite{KR}. It is somewhat remarkable that essentially the same diagram is valid for all $L(y) = \frac{y^\alpha}{\alpha}$ with $\alpha<1$, and even that the picture is two-dimensional for $\alpha\neq0$, as there is no obvious way to produce a payoff functional equivalent to \eqref{Pay} that depends only on the mass ratio. For $\alpha=0$ it can be done by using the additive property of logarithms.

For definiteness, we take $P(x)=x$, but one can easily reformulate the results for $P(x)=kx$ with $k>0$ by rescaling time. The equation of the switching curve in $q$-$t$ coordinates follows directly from \eqref{SwCurve}:
\begin{equation}\label{LinSwCurve}
T-t=(1+\alpha +q)\frac{1-(1+\frac{T-t}{q})^{\alpha}}{\alpha}\,.
\end{equation}
The state equations \eqref{StateEq} can be reduced to a single equation for the mass ratio $q$:
\begin{equation}\label{Statev}
\dot{q} = 1-(1+q)u.
\end{equation}
The singular control \eqref{SingControl} also depends only on the mass ratio:
\begin{equation}
u = 1-\frac1{1-\alpha}\frac{y}{x}=1-\frac{q}{1-\alpha}.
\end{equation}
\begin{figure}[!ht]
\vspace{-0.1in}
\begin{centering}
\ \ \ \ \ \ \ \includegraphics[width=80mm]{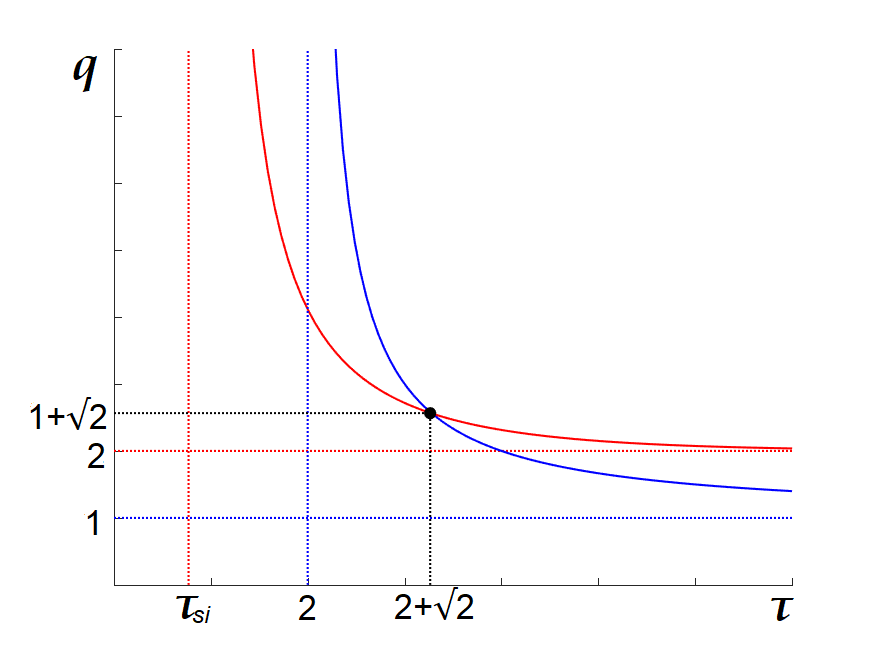}
\par\end{centering}
\vspace{-0.5in}
\hspace*{-0.1in}\caption{\label{Alpha-2} Switching (blue) and singular (red) curves for $\a=-2$ in $q$-$\tau$ coordinates, the shapes of which are typical for $\alpha<-1$. For $-1\leq\alpha<1$ the horizontal switching asymptote is $q=0$, and for $0\leq\alpha<1$ the horizontal singular asymptote is also $q=0$. The switching curve is universal for all production functions, the singular curve is for $P(x)=x$.}
\end{figure}
Integrating \eqref{Statev} with the singular control, we find for $\alpha\neq0$ that\footnote{For $\alpha=0$ this simplifies to $q(t)=\frac1{C-t}$.} 
\begin{equation}\label{SingCurve}
q(t)=\frac{\alpha}{e^{\frac{\alpha}{1-\alpha}(C-t)}-1}\,.
\end{equation}
The constant of integration $C$ can be found from the condition that the singular curve passes through the cutoff point, which gives $C=T-\tau_{si}(\alpha)$, where
\begin{align}\label{taus}
\tau_{si}(\alpha)&=\tau_c(\alpha)-(1-\alpha)\ln(1+r_c(\alpha)\tau_c(\alpha))\\
&=1+\frac{r_c(\alpha)}{1+\alpha\,r_c(\alpha)}
-\frac{1-\alpha}{\alpha} \ln\Big(1+\alpha\,r_c(\alpha)\Big),
\end{align}
and we explicitly indicated the dependence of $r_c,\tau_c$ from Lemma \ref{CutPt} on $\alpha$. Clearly, $\tau_{si}(\alpha)\leq\tau_c(\alpha)$ for $\alpha\leq1$, and $\tau_{si}(0)=1$. One can show that $\tau_{si}(\alpha)\geq1$ for $0\leq\alpha<1$. For $\alpha=-2$ we find that $\tau_{si}=2+\sqrt{2}-3\ln(\sqrt{2}+1)\approx0.77$, see Figure \ref{Alpha-2}.
\begin{figure}[htbp]
\begin{centering}
\includegraphics[width=140mm]{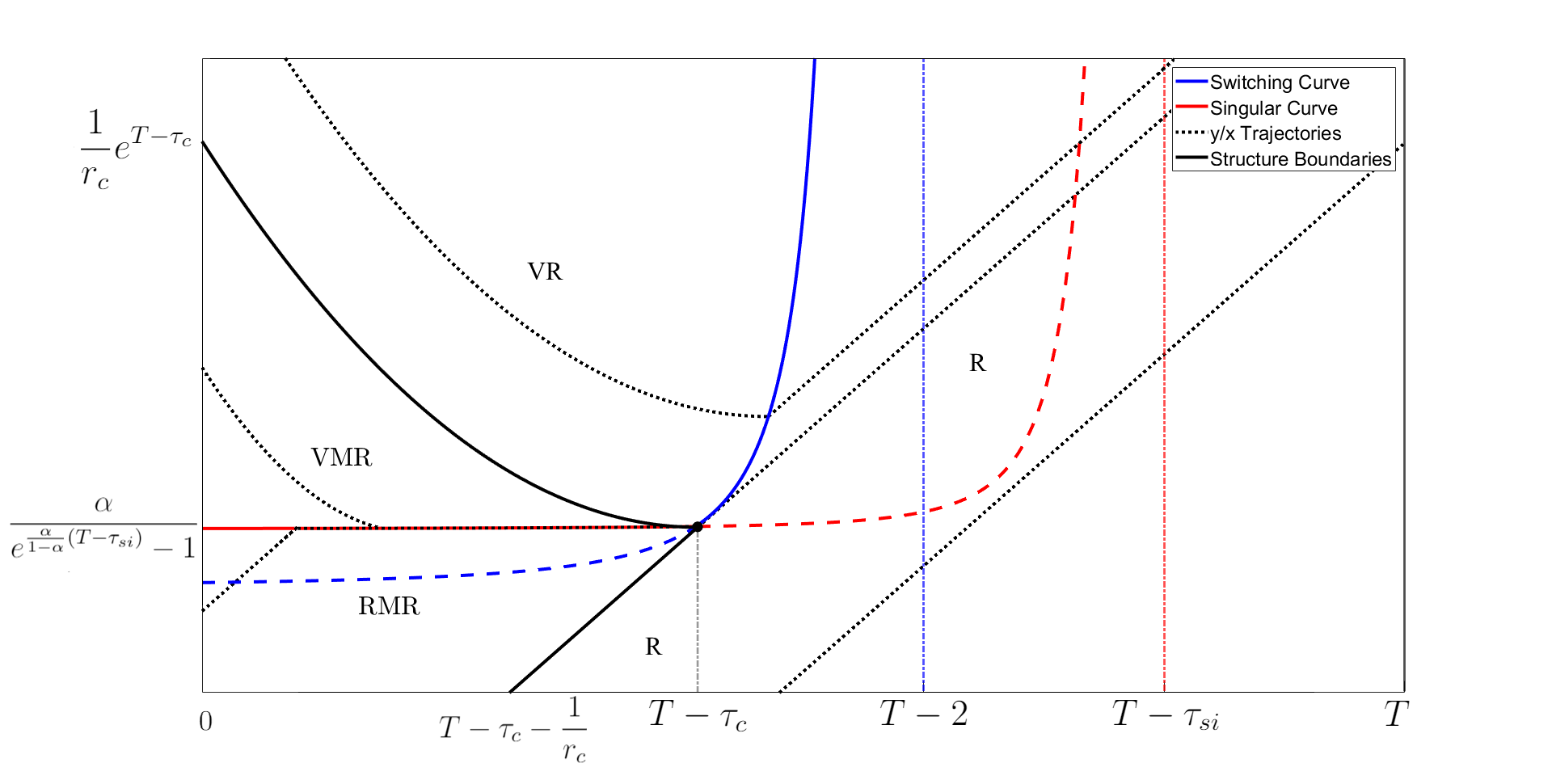}
\par\end{centering}
\vspace{-0.4in}
\hspace*{-0.1in}\caption{\label{KRdiagram} The mass ratio diagram for $L(y) = \frac{y^\alpha}{\alpha}$ with $\alpha<1$, and $P(x)=x$. The inactive parts of the switching and singular curves are dashed, and the vertical lines are their asymptotes. V arcs fall exponentially, R arcs rise with slope $1$, and M arcs follow the red curve. Optimal trajectories starting in the labeled regions have the indicated control structure, and MR trajectories must start on the solid part of the red curve.}
\end{figure}

The resulting configuration of the switching and singular curves, along with the arrangement of optimal trajectories it implies, is depicted on Figure \ref{KRdiagram}. We will now show how to determine the optimal schedules explicitly from the initial values based on it. The proof is a more precise version of the sketch given for the general case at the end of Section \ref{Sing}. Optimal solutions are found analytically by integrating the state equations. For simplicity, we state the theorem for the case of $T>q_c+\tau_c$, but the modifications needed for simpler diagrams, where pure R trajectories occur, are straightforward. We also leave out the modifications necessary for $\alpha=0,-1$.
\begin{theorem}\label{4LinStructures} Let $P(x) = x, L(y) = \frac{y^\alpha}{\alpha}$, $\alpha<1$, $q_0=\frac{y_0}{x_0}$ be the mass ratio of the initial values, and suppose that $T>q_c+\tau_c$. Then:
\smallskip

\textup{(i)} If $q_0<\frac{\alpha}{\exp(\frac{\alpha}{1-\alpha}\left(T-\tau_{si})\right)-1}$, then the optimal trajectory is RMR. The second switching time is $t_2=T-\tau_c$, and the first is $t_1=\frac{y_1-y_0}{x_0}$, where $y_1$ can be found by solving for $r_1:=\frac{x_0}{y_1}$ in
\begin{equation}\label{FirstMass}
\ln\frac{1+\alpha r_1}{1+\alpha r_0}=\frac{\alpha}{1-\alpha}\left(T-\tau_c-\ln\frac{r_1}{r_0}\right);
\end{equation}
\smallskip

\textup{(ii)} If $q_0=\frac{\alpha}{\exp(\frac{\alpha}{1-\alpha}\left(T-\tau_{si})\right)-1}$, then the optimal trajectory is MR. The switching time is $T-\tau_c$;
\smallskip

\textup{(iii)} If $\frac{\alpha}{\exp(\frac{\alpha}{1-\alpha}\left(T-\tau_{si})\right)-1}<q_0<\frac{1}{r_c}e^{T-\tau_c}$, then the optimal trajectory is VMR. The second switching time is $t_2=T-\tau_c$, and the first is $t_1=\int\limits_{x_0}^{x_1}\frac{dx}{P(x)}$, where $x_1$ can be found by solving for $r_1:=\frac{x_1}{y_0}$ in \eqref{FirstMass};
\medskip

\textup{(iv)} If $q_0\geq\frac{1}{r_c}e^{T-\tau_c}$, then the optimal trajectory is VR. The switching time $t_s$ can be found as $t_s=\ln\frac{x_s}{x_0}$ after solving for $x_s$ in:
\begin{equation}\label{Linxs}
T-\ln\frac{x_s}{x_0}=\left(1+\alpha+\frac{x_s}{y_0}\right)\frac1\alpha\left(1-\left(1+\frac{x_s}{y_0}\ln\frac{x_s}{x_0}\right)^{-\alpha}\right).
\end{equation}
\end{theorem}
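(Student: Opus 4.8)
The plan is to exploit the reduction, noted at the start of Section \ref{Lin}, that for $P(x)=x$ every relevant object depends only on the mass ratio $q=y/x$ and on time: the state dynamics collapse to $\dot q=1-(1+q)u$ \eqref{Statev}, and the singular feedback, the switching curve \eqref{LinSwCurve}, and the singular curve \eqref{SingCurve} are all functions of $q$ and $t$ alone. The whole argument therefore reduces to tracking the scalar trajectory $q(t)$ against two fixed curves in the diagram of Figure \ref{KRdiagram}. First I would record the three elementary flows: on a V arc ($u=1$) one has $\dot q=-q$, so $q=q_0e^{-t}$ and $r=x/y$ grows as $r_0e^{t}$; on an R arc ($u=0$) one has $\dot q=1$, so $q=q_0+t$ and $1/r=1/r_0+t$; and on an M arc $q$ traces the singular curve \eqref{SingCurve}, whose integration constant is pinned by the requirement that it pass through the cutoff point $(q_c,\,T-\tau_c)$. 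Converting \eqref{SingCurve} to the ratio variable gives the convenient singular-curve relation $\tau=\tau_{si}+\tfrac{1-\alpha}{\alpha}\ln(1+\alpha r)$. These flows, the fact that the final arc is always R (Theorem \ref{5Structures}(i)), and the restriction to the five structures R, MR, VR, RMR, VMR (Theorem \ref{5Structures}(iv)) are the only structural inputs.

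The heart of the proof is to show that the position of $q_0$ relative to the two displayed thresholds fixes which curve the forward trajectory meets first. The value $\alpha/(e^{\frac{\alpha}{1-\alpha}(T-\tau_{si})}-1)$ is simply the singular curve evaluated at $t=0$; if $q_0$ lies below it, the rising R arc climbs and---because $T>q_c+\tau_c$ forces $q_0+(T-\tau_c)>q_c$---overtakes the singular curve strictly before the cutoff time $T-\tau_c$, giving RMR (case (i)); if $q_0$ equals it, the trajectory is singular from the outset (case (ii)); if $q_0$ exceeds it, the falling V arc carries $q$ downward. The second threshold $\tfrac1{r_c}e^{T-\tau_c}$ is exactly the initial value of the V arc that runs backward through the cutoff point, obtained by integrating $q=q_0e^{-t}$ from $(q_c,T-\tau_c)$. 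For $q_0$ above it the V arc meets the \emph{active} part of the switching curve ($\tau\le\tau_c$), yielding a genuine VR junction (case (iv)); for $q_0$ between the two thresholds the V arc would reach the switching curve only in its inactive part ($\tau>\tau_c$), which Lemma \ref{VRabove} forbids, so it instead switches to an M arc upon meeting the singular curve, giving VMR (case (iii)). In each region Lemma \ref{SubR} rules out an R continuation above the switching surface, Lemma \ref{VRabove} rules out VR junctions outside the active part, and Theorem \ref{5Structures} rules out RV switches, leaving a \emph{unique} admissible extremal structure. Since Filippov's theorem guarantees that an optimal control exists, this unique extremal is the optimal one.

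Extracting the switching times is then bookkeeping. The M arc always terminates at the cutoff point, so the final switch in cases (i)--(iii) occurs at $t_2=T-\tau_c$, read off the $\tau$-coordinate of that point. For the first switch, the point $(r_1,t_1)$ must lie on the singular curve and be reachable from $(r_0,0)$ along the initial arc; the initial arc supplies $t_1$ explicitly as a function of $r_0,r_1$---namely $t_1=1/r_1-1/r_0$ for the R arc of case (i) and $t_1=\ln(r_1/r_0)$ for the V arc of case (iii)---and substituting this into the singular-curve relation produces the single transcendental equation \eqref{FirstMass} for $r_1$. In the bang-bang case (iv) there is no singular piece: substituting the V-arc relations $y\equiv y_0$ and $t_s=\ln(x_s/x_0)$ into the switching-curve equation \eqref{LinSwCurve} leaves a single equation \eqref{Linxs} for $x_s$, after which $t_s=\ln(x_s/x_0)$.

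The step I expect to be the main obstacle is the geometric comparison underpinning the second paragraph: verifying that each forward trajectory meets the configuration exactly once and in the asserted order, with the crossing type flipping precisely at the two thresholds. Concretely, one must compare the slope of each elementary flow against the slopes of the fixed curves---using the monotonicity of the switching curve (Lemma \ref{SwCurv}) and the location of the cutoff point (Lemma \ref{CutPt})---to show that the decreasing V arc of case (iii) genuinely reaches the active singular curve before any inadmissible re-crossing, and that the rising R arc of case (i) cannot escape to $\tau=0$ as a pure R trajectory. The hypothesis $T>q_c+\tau_c$ (equivalently Corollary \ref{Marcs}) is what makes these comparisons work, guaranteeing that the active parts of both curves extend back far enough to be reached from $t=0$; for smaller $T$ a pure R region opens up and the thresholds must be adjusted.
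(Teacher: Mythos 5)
Your proposal follows essentially the same route as the paper's own proof: the reduction to the mass-ratio diagram, the three elementary flows, the two thresholds identified as the $t=0$ value of the singular curve and of the V arc traced backward from the cutoff point, and the elimination argument combining Theorem \ref{5Structures}, Lemma \ref{SubR}, and Lemma \ref{VRabove} to leave a unique admissible structure in each region. The geometric comparisons you defer to the end as the ``main obstacle'' are precisely what the paper's proof consists of (V arcs fall and can never rejoin the curves, so a V start below the singular curve would force a pure V trajectory contradicting the final R arc; R arcs rise with slope $1$ and overtake the singular flow; pure R is killed by Lemma \ref{SubR} because $T>q_c+\tau_c$ pushes it above the active switching curve), so the structural classification in (i)--(iv) is sound and matches the paper.

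One discrepancy in the bookkeeping deserves attention: carrying out your final substitution does \emph{not} literally produce \eqref{FirstMass}. Anchoring the singular curve at the cutoff point, your relation $\tau=\tau_{si}+\frac{1-\alpha}{\alpha}\ln(1+\alpha r)$ gives, at the first switch,
\[
\ln\frac{1+\alpha r_1}{1+\alpha r_c}=\frac{\alpha}{1-\alpha}\left(T-\tau_c-t_1\right),
\]
with $t_1=\ln(r_1/r_0)$ in case (iii) but $t_1=\frac1{r_1}-\frac1{r_0}$ in case (i). Equation \eqref{FirstMass} has $r_0$ where this has $r_c$, and uses the V-arc time term in both cases; the two equations coincide only when $r_0=r_c$. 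The mismatch is visible in the limit $\alpha\to0$, where the correct relation reads $r_1-r_c=T-\tau_c-\ln(r_1/r_0)$ while \eqref{FirstMass} gives $r_1-r_0$ on the left, and in case (i) with $\alpha<0$ the printed equation has no solution at all with $r_1<r_0$ (its two sides have opposite signs). So your derivation is the correct one and \eqref{FirstMass} as printed appears to carry a typo; but your unchecked claim that the substitution ``produces \eqref{FirstMass}'' shows the substitution was not actually performed, which is the only real lapse in an otherwise correct argument.
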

\begin{proof}
One easily checks that the conditions of Theorem \ref{5Structures} and Lemma \ref{SubR} are met in this case.
\smallskip

\textup{(i)} The inequality means that the trajectory starts under the singular curve. The initial arc can not be V because V arcs drop exponentially, and no switch to any other arc is possible then, while having a pure V trajectory contradicts Theorem \ref{5Structures}(i). Since $q_0$ is not on the singular curve the initial arc must be R. It can either stay R to the end or switch to an M arc only on the cutoff curve. The former possibility is suboptimal by Lemma \ref{SubR}, so the second arc is M, which must switch to the final R arc by Theorem \ref{5Structures}. Equation \eqref{FirstMass} follows from \eqref{SingCurve} rewritten for $r=\frac1q$.
\smallskip

\textup{(ii)} Initial V and R arcs are ruled out by the reasoning in (i) and (iii), respectively. Hence, MR is the only possibility.
\smallskip

\textup{(iii)} The trajectory starts above the singular curve, and hence above the switching curve. If the initial arc were R it would rise, and hence it could not switch to M. Therefore it must be R all the way, as RV switches are ruled out by Theorem \ref{5Structures}. But this is also ruled out by Lemma \ref{SubR}. Therefore, the initial arc is V. It follows from the second inequality that it never crosses the active part of the switching curve. Since it can not remain V all the way it must switch to M when crossing the singular curve, and then to the final R arc. 
\smallskip

\textup{(iv)} Initial R arc is ruled out as in (iii), and the initial arc is V. The inequality implies that it never crosses the active part of the singular surface (except at the cutoff point in the case of equality). Therefore, it can only be VR. The equations specialize \eqref{tsxs} and  \eqref{SwCurve} to this case.
\end{proof}

\section{Full lifetime}\label{Ext}

The model that we studied so far had the length of the safe period $T_0$ set to $0$. In general, the Hamiltonian and the costate equations retain the same form on $[T_0,T_0+T]$, but on $[0,T_0]$ the Hamiltonian reduces to 
\begin{equation}\label{ExtHam}
H(x,y,\lambda,\mu,u)=\big(\mu + (\lambda - \mu)u\big)P(x),
\end{equation}
and the costate equation for $\mu$ trivializes:
\begin{equation}\label{ExtAdj}
\begin{cases}
\dot{\lambda} = -\big(\mu + (\lambda - \mu)u\big)P'(x)\\ 
\dot{\mu} = 0.\\
\end{cases}
\end{equation}
This simplifies the structure of optimal trajectories on $[0,T_0]$. We continue to use the labeling convention of Definition \ref{ArcLet}. The next theorem complements Theorem \ref{5Structures} for the full lifetime model.
\begin{theorem}\label{5StructuresExt}
Let $P(x)$ and $L(y)$ be twice differentiable and $P(x)$ be positive for $x,y>0$. Then, we have for the optimal trajectories of of problem \eqref{StateEq}, \eqref{ExtPay}:
\smallskip 

{\textup{(i)}} Optimal trajectories have no M arcs on $[0, T_{0}]$.
\smallskip 

{\textup{(ii)}} If an optimal trajectory arrives at $T_0$ along a V arc then this arc is initial.
\smallskip 

{\textup{(iii)}} If an optimal trajectory arrives at $T_0$ along an R arc then either this arc is initial, or there is a single switch on $[0, T_{0}]$ from the initial V arc.
\smallskip 

{\textup{(iv)}} If $P(x)$ and $L(y)$ meet all the conditions of Theorem \ref{5Structures} \textup{(iv)} then any optimal trajectory has no more than three switches, and one of the following switching structures: R, VR, VMR, RMR, VRMR.
\end{theorem}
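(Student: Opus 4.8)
The plan is to exploit the two simplifications that occur on the safe interval $[0,T_0]$: by \eqref{ExtAdj} the costate $\mu$ is constant there, and matching to the volatile period at $T_0$ forces this constant to be positive. Indeed, on $[T_0,T_0+T]$ we have $\dot{\mu}=-L'(y)$ with $\mu(T_0+T)=0$, so $\mu(T_0)=\int_{T_0}^{T_0+T}L'(y)\,dt>0$ since $L'>0$ by our standing monotonicity assumption; because the costates are continuous at the junction $T_0$ (the safe interval carries no running payoff and there is no interior point cost), we get $\mu\equiv\mu(T_0)>0$ throughout $[0,T_0]$. Writing $\phi:=\lambda-\mu$, the switching rule \eqref{Bang} reads $u=1$ where $\phi>0$ and $u=0$ where $\phi<0$, while an M arc would require $\phi\equiv0$ on an interval. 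But along \eqref{ExtAdj}, at any instant where $\phi=0$ (so $\lambda=\mu$) we compute $\dot{\phi}=\dot{\lambda}-\dot{\mu}=-\mu P'(x)<0$, which is incompatible with $\phi$ remaining at $0$; this proves (i).

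The same computation is the engine for (ii) and (iii). Since $\dot{\phi}<0$ at every zero of $\phi$, the function $\phi$ can only cross $0$ from $+$ to $-$, so it changes sign at most once on $[0,T_0]$ and only in the order V$\to$R. I would isolate this as a one-line ``crossing'' observation and then read off the two statements. If the trajectory arrives at $T_0$ with $\phi(T_0)>0$ (a V arc) it cannot have been negative earlier, for a $-\to+$ crossing is impossible; hence $\phi>0$ on all of $[0,T_0]$ and the V arc is initial, giving (ii). If it arrives with $\phi(T_0)<0$ (an R arc) then $\phi$ is either negative throughout (R initial) or has exactly one V$\to$R switch (VR), giving (iii). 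Combined with (i), the only admissible structures on the safe interval are R, V, and VR.

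For (iv) the plan is to glue the safe-interval structure to the volatile-interval structure. By Bellman's principle of optimality the restriction of an optimal trajectory to $[T_0,T_0+T]$ is optimal for the autonomous subproblem \eqref{StateEq}, \eqref{Pay} with initial data $(x(T_0),y(T_0))$, so under the hypotheses of Theorem \ref{5Structures}(iv) it has one of the types R, MR, VR, RMR, VMR. Continuity of $\lambda$ and $\mu$ at $T_0$ makes the sign of $\phi$, and hence the arc type, continuous across $T_0$: if the safe interval ends on R then the volatile piece begins with R and is thus R or RMR; if it ends on V then the volatile piece begins with V and is thus VR or VMR. Concatenating the safe types R/VR/V with the compatible volatile types and merging adjacent like arcs produces exactly R, VR, VMR, RMR, VRMR, with VRMR the longest at three switches.

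The hard part will be the bookkeeping at the junction $T_0$. I need to justify rigorously that the costates are continuous there (no jump, since $[0,T_0]$ has no running payoff and there is no interior state constraint), and to dispose of the measure-zero case $\phi(T_0)=0$, the only way an $M$-initial volatile piece can attach. Here the sign-of-$\dot{\phi}$ computation is decisive: on an R arc $\phi$ is strictly decreasing, so it cannot reach $0$ at the right endpoint from below, and therefore a safe interval ending with $\phi(T_0)=0$ must end on a V arc; the attached MR volatile piece then yields only the already-listed VMR, and no forbidden R$\to$V or R$\to$M transition across $T_0$ is manufactured. Everything else reduces to the short $\dot{\phi}<0$ argument reused three times.
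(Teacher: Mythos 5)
Your proposal is correct and follows essentially the same route as the paper's proof: you establish that $\mu$ is a positive constant on $[0,T_0]$ so that $\lambda-\mu$ can only cross zero downward, which rules out M arcs and limits the safe-period structure to V, R, or VR, and you then concatenate with the volatile-period classification of Theorem \ref{5Structures}(iv), using the impossibility of an upward crossing at $T_0$ to exclude forbidden junctions. The only differences are presentational: the paper argues via global strict monotonicity of $\lambda$ (with $\mu$ constant) rather than your local sign-of-$\dot{\phi}$ crossing argument, and it reapplies the Pontryagin conditions on $[T_0,T_0+T]$ directly where you invoke Bellman's principle, neither of which changes the substance.
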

\begin{proof}
{\textup{(i)}} Along a singular arc $\dot{\lambda} = -\mu P'(x)=\dot{\mu} = 0$. Since $P'(x)>0$ this implies $\lambda=\mu=0$. But $\lambda$ and $\mu$ are non-increasing over the entire interval $[0,T_0+T]$, so the singular arc would have to extend all the way to $T_0+T$, contradicting Theorem \ref{5Structures} (i) (the final arc is R).
\begin{figure}[htbp]
\vspace{-0.1in}
\begin{centering}
(a)\includegraphics[width=68mm]{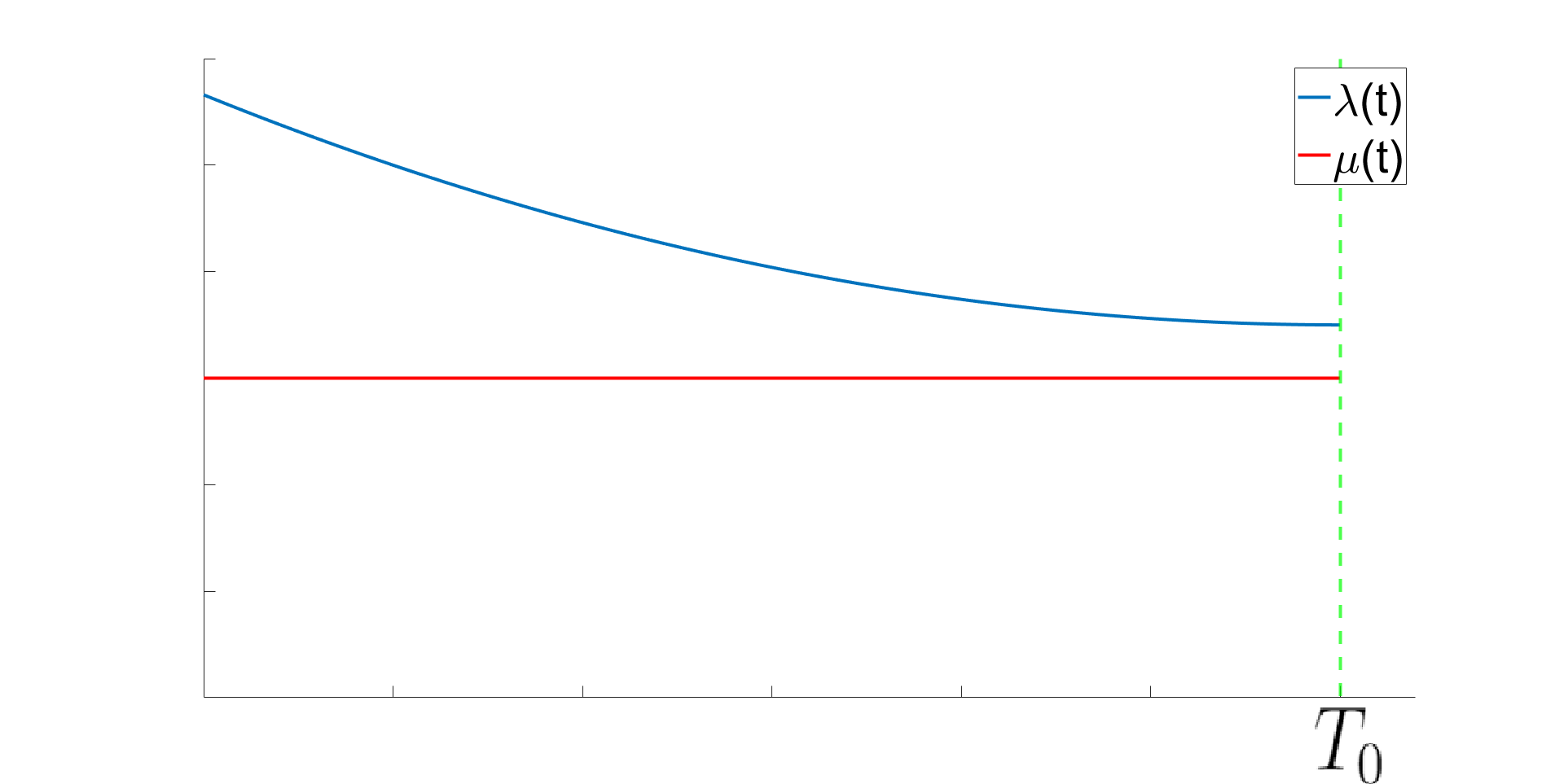} \hspace{0.0in} (b) \includegraphics[width=68mm]{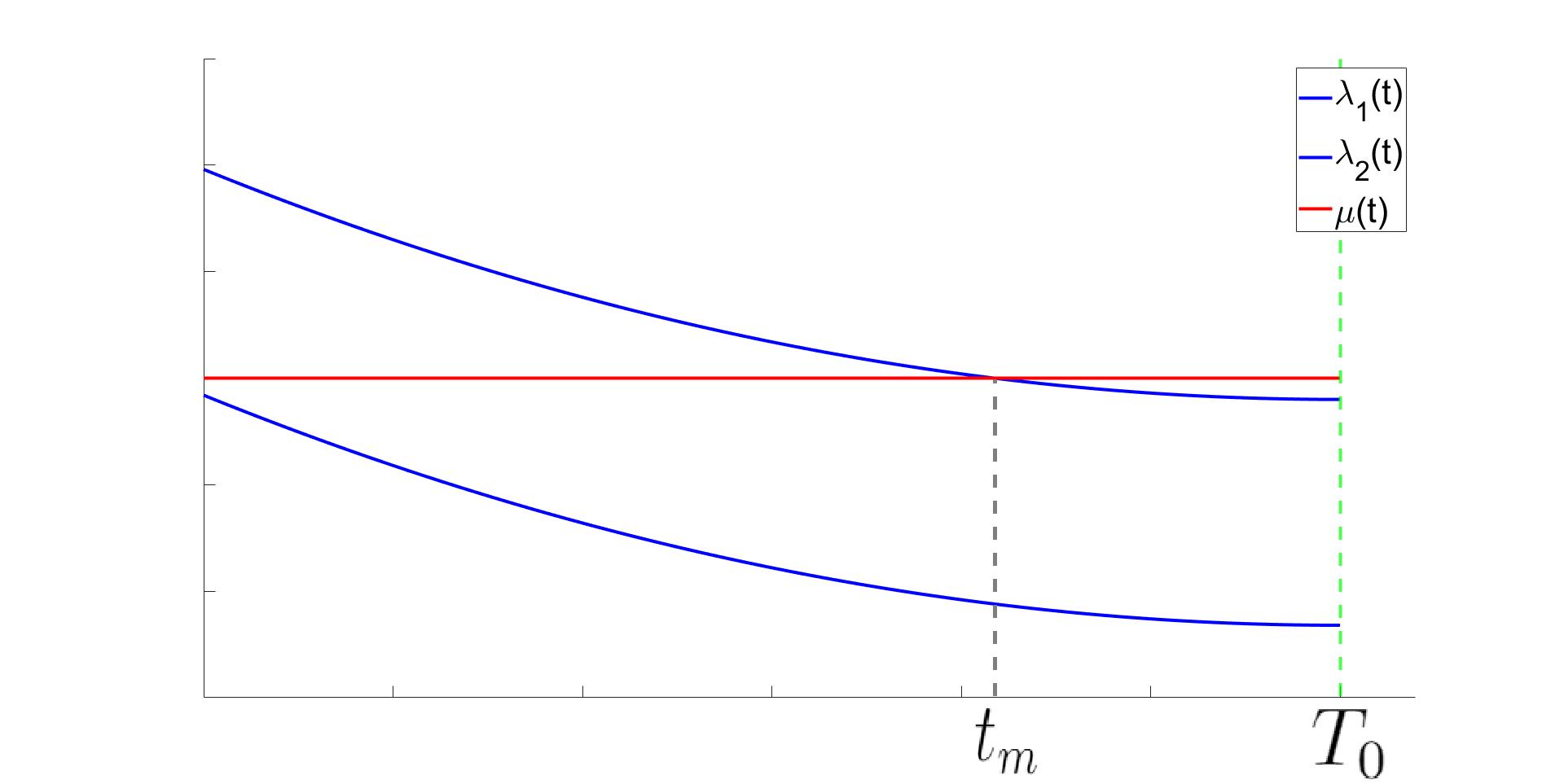}
\par\end{centering}
\vspace{-0.2in}
\hspace*{-0.1in}\caption{\label{Safelm} Marginal values on $[0, T_{0}]$ when (a) $\lambda(T_{0}) > \mu(T_{0})$ (V arc); (b) $\lambda(T_{0})<\mu(T_{0})$ for two different cases: VR (upper curve) and R (lower curve).}
\end{figure}
\smallskip 

(ii) and (iii). By the above argument, $\mu(T_{0})>0$, so $\mu$ is a positive constant on $[0, T_{0}]$. Since $(\lambda - \mu)u\geq0$ for any admissible control $u$,  $\lambda$ is strictly monotone decreasing there by \eqref{ExtAdj}. If the arc of arrival is V, then $\lambda(T_{0})\geq\mu(T_{0})$ and $\lambda(t)>\mu(t)$ for all $t<T_0$. Hence, there is no switch, see Figure \ref{Safelm} (a). If the arc of arrival is R then $\lambda(t)=\mu(t)$ at at most one point, see Figure \ref{Safelm} (b).
\smallskip 

{\textup{(iv)}} If the first arc on $[T_0,T_0+T]$ is M or V then we have a V arc on $[0, T_{0}]$ by (ii). Then, MR, VR, and VMR trajectories from Theorem \ref{5Structures} turn into VMR, VR, and VMR, respectively. If the first arc is R then by (iii) they either remain R on $[0, T_{0}]$, or are complemented by an initial V arc. This means that R trajectories from Theorem \ref{5Structures} turn into R or VR, while RMR turn into RMR or VRMR.
\end{proof}

The main effect of adding the safe period is, therefore, that initial V arcs are added, or already occurring ones extended, and there is, potentially, an additional switching time $t_m$ on $(0, T_{0}]$. The points $(t_m,x_m,y)$ form an additional switching surface that connects to the singular surface at $t=T_0$, see Figure \ref{PowSwSingExt}. If one can solve the problem on $[T_0,T_0+T]$ it is now easy to extend the solution to the entire interval. 
\begin{figure}[htbp]
\vspace{-0.1in}
\begin{centering}
(a)\includegraphics[width=68mm]{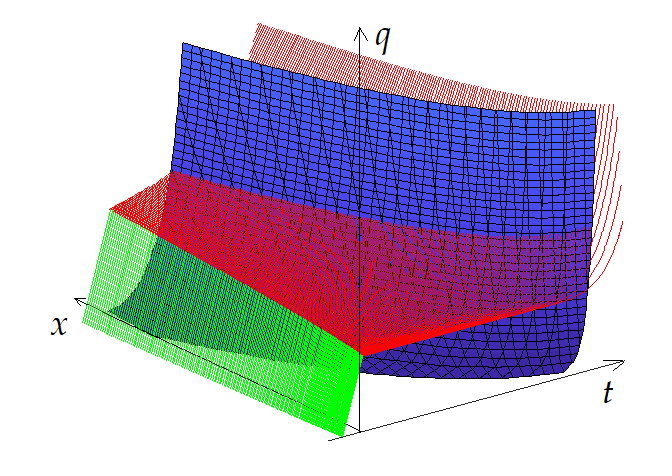} \hspace{0.0in} (b) \includegraphics[width=68mm]{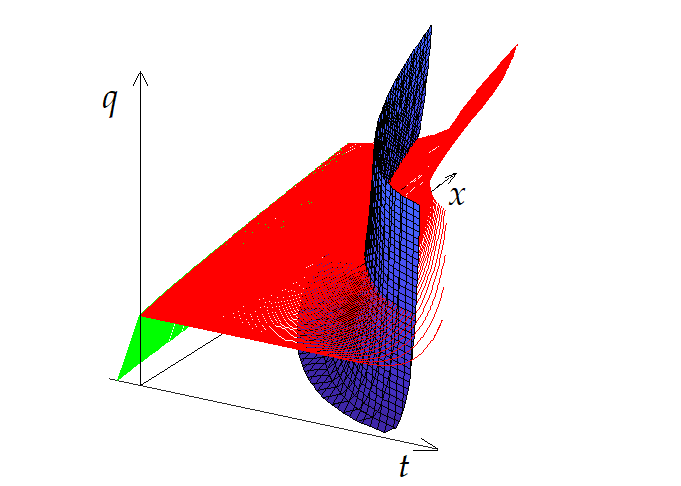}
\par\end{centering}
\vspace{-0.2in}
 \hspace*{-0.1in}\caption{\label{PowSwSingExt} Switching surface (blue), singular surface (red), and safe switching surface (green) for $P(x)=x^{0.7}$ and $L(y)=2y^{0.5}$, shown from two different angles.}
\end{figure}

To find the ``safe" switching time $t_m$, note that $x$ is constant over an R arc, and $u=0$, so $\dot{\lambda} = -\mu P'(x)=-\mu(T_0) P'(x_m)$, where $x_m:=x(t_m)$. Therefore, on $[t_m, T_{0}]$,
$$
\lambda(t)=\lambda(T_0)+\mu(T_0)P'(x_m)(T-t). 
$$
Since $\lambda(t)$ equals $\mu(T_0)$ at $t=t_m$ we derive:
\begin{equation}\label{SafeSwt}
t_m=T_0-\frac1{P'(x_m)}\left(1-\frac{\lambda(T_0)}{\mu(T_0)}\right). 
\end{equation}
Together with $t_m=\int\limits_{x_0}^{x_m}\frac{dx}{P(x)}$, which integrates the state equation over the preceding V arc, this determines $t_m$ and $x_m$. The time $t_m$ is the time when the plant is old enough to start reproducing, and is called the {\it age of maturity}. We can show that if the initial mass of the plant is high enough and the safe period is long enough, the age of maturity is always positive, even if the plant has no reproductive mass initially (as is usually the case). If the safe period is sufficiently long, it is better to invest into growth first, to produce more offspring later.

\begin{corollary} If $T_0P'(x(T_0))\geq1$, then the initial arc of an optimal trajectory is always V. In particular, if $P(x) = x$, and $T_0>1$ then the initial arc is V for any initial values. 
\end{corollary}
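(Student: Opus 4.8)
The plan is to argue by contradiction: I assume the initial arc is \emph{not} V and extract a violation of the maximum principle at $t=0$. By Theorem \ref{5StructuresExt}(iv) the admissible full-lifetime structures are R, VR, VMR, RMR, VRMR, and the only two whose initial arc is not V are R and RMR. Since M arcs cannot occur on $[0,T_0]$ by Theorem \ref{5StructuresExt}(i), and parts (ii)--(iii) forbid any RV switch on $[0,T_0]$, in either of these cases the trajectory is \emph{pure R on all of} $[0,T_0]$. The first step is therefore to record that along this initial R arc $u\equiv0$, hence $\dot x=0$ and $x(t)\equiv x_0$; in particular $x(T_0)=x_0$, so the hypothesis reads $T_0P'(x_0)\geq1$.

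Next I would pin down the costates on $[0,T_0]$. By \eqref{ExtAdj} the variable $\mu$ is constant there, equal to $\mu(T_0)$, and the argument in the proof of Theorem \ref{5StructuresExt}(i) already gives $\mu(T_0)>0$. Substituting $u=0$ and $x\equiv x_0$ into \eqref{ExtAdj} yields the linear law $\dot\lambda=-\mu(T_0)P'(x_0)$, so integrating backward gives $\lambda(0)=\lambda(T_0)+\mu(T_0)P'(x_0)T_0$. Because $\lambda$ is strictly decreasing on the whole interval and meets the terminal condition $\lambda(T_0+T)=0$, we have $\lambda(T_0)>0$.

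Now I can close. Combining the two facts with the hypothesis $T_0P'(x_0)\geq1$ gives
\[
\lambda(0)=\lambda(T_0)+\mu(T_0)P'(x_0)T_0>\mu(T_0)P'(x_0)T_0\geq\mu(T_0)=\mu(0),
\]
the strict inequality coming from $\lambda(T_0)>0$. Thus $\lambda(0)>\mu(0)$, and by the switching rule \eqref{Bang} the optimal control obeys $u(0)=1$, i.e. the initial arc is V, contradicting the assumption that it is R. Hence the initial arc is always V. Equivalently, formula \eqref{SafeSwt} with $x_m=x(T_0)=x_0$ gives $t_m\geq T_0-1/P'(x_0)\geq0$, placing the V$\to$R switch at a nonnegative time. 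For $P(x)=x$ one has $P'\equiv1$, so $P'(x(T_0))=1$ and the hypothesis reduces to $T_0\geq1$, which $T_0>1$ implies for arbitrary initial values.

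The argument is short, so the step I would guard most carefully is the reduction of both non-V structures to a \emph{pure} R arc on $[0,T_0]$: this is exactly where I must invoke the absence of M arcs and of RV switches on the safe interval from Theorem \ref{5StructuresExt}, and observe that freezing $u=0$ keeps $x$ at $x_0$, so the quantity $P'(x(T_0))$ appearing in the hypothesis is precisely the slope $P'(x_0)$ driving $\dot\lambda$. The positivity facts $\mu(T_0)>0$ and $\lambda(T_0)>0$, inherited from the volatile-period analysis, are what make the final inequality strict and thereby dispose of the borderline case $\lambda(0)=\mu(0)$.
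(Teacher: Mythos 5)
Your proof is correct and takes essentially the same route as the paper's: both arguments integrate the linear costate law $\dot\lambda=-\mu(T_0)P'(x_m)$ along the R arc in the safe period and combine positivity of the marginal values with the hypothesis $T_0P'(x(T_0))\geq1$. The paper packages this computation as the positivity of the maturity time, $t_m>T_0-\tfrac1{P'(x_m)}\geq0$, via \eqref{SafeSwt}, whereas you run the identical computation as a contradiction with a pure R arc at $t=0$; the two formulations are equivalent.
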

\begin{proof}
Since the marginal values are positive,  $x(T_0)=x_m$, and the growth is purely reproductive after $t_m$, we must have $t_m>T_0-\frac1{P'(x_m)}=T_0-\frac1{P'(x(T_0))}>0$ by \eqref{SafeSwt}. \end{proof}
\noindent The value of $x(T_0)$ is typically easy to estimate from above in terms of $x_0$ and $T_0$ by integrating $\dot{x}=P(x)$ on $[0,T_0]$. When $P(x)=kx^\beta$ we have explicitly $x(T_0)=\left(x_0^{1-\beta}+(1-\beta)kT_0\right)^{\frac1{1-\beta}}$.

As before, when $P(x) = x$ and $L(y) = \frac{y^\alpha}{\alpha}$, the behavior of optimal trajectories can be represented by a two-dimensional mass ratio diagram for $q=\frac{y}{x}$, Figure \ref{Extdiagram}. The maturity surface is represented by a curve that starts from the $t$ axis and intersects the $q$ axis at the same point as the singular curve $q_{\textrm{sing}}=\frac{\alpha}{\exp(\frac{\alpha}{1-\alpha}\left(T-\tau_{si})\right)}$. We denote its $t$-intercept $T_0-\tau_R$. 
\begin{figure}[htbp]
\begin{centering}
\includegraphics[width=150mm]{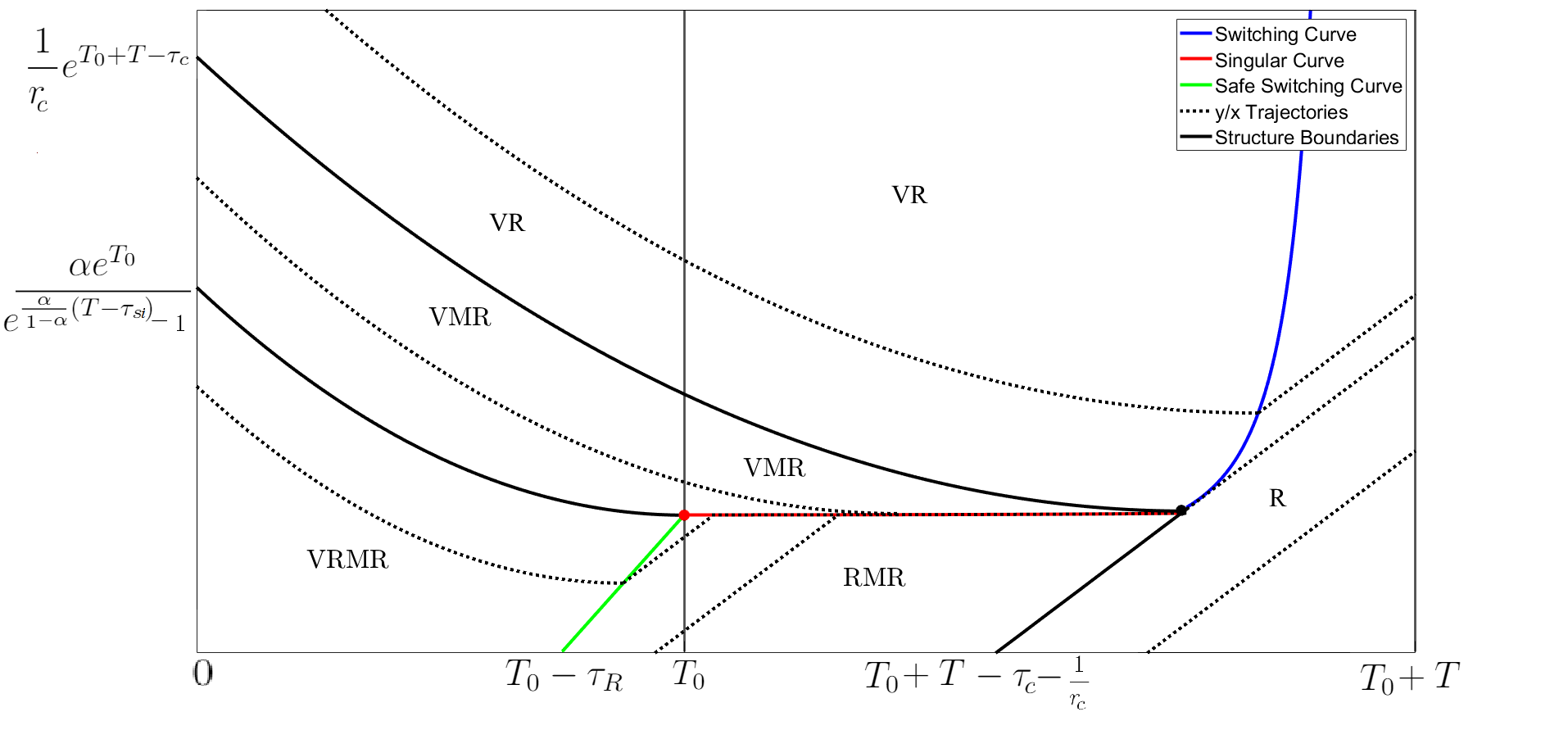}
\par\end{centering}
\vspace{-0.4in}
\hspace*{-0.1in}\caption{\label{Extdiagram} The mass ratio diagram for the extended time model with $P(x)=x$ and $L(y) = \frac{y^\alpha}{\alpha}$ with $\alpha<1$. The inactive parts of the switching and singular curves are dashed, the vertical lines are their asymptotes. V arcs fall exponentially, R arcs rise with slope $1$, M arcs follow the red curve. Optimal trajectories starting in the labeled regions have the indicated control structure.}
\end{figure}

Finally, we state an analog of Theorem \ref{4LinStructures} for 
the case of $T>q_c+\tau_c$ and $T_0>\tau_R$, for which the proof is analogous. The modifications for shorter times, when pure R and RMR trajectories also occur, are straightforward. 
\begin{theorem}\label{Ext3LinStructures} Let $P(x) = x, L(y) = \frac{y^\alpha}{\alpha}$, and $q_0=\frac{y_0}{x_0}$ be the ratio of initial values. Then:

\textup{(i)} If $q_0<\frac{\alpha e^{T_0}}{\exp(\frac{\alpha}{1-\alpha}\left(T-\tau_{si})\right)-1}$, then the optimal trajectory is VRMR.
\smallskip 

\textup{(ii)} If $\frac{\alpha e^{T_0}}{\exp(\frac{\alpha}{1-\alpha}\left(T-\tau_{si})\right)-1}
\leq q_0<\frac{1}{r_c}e^{T_0+T-\tau_c}$, then the optimal trajectory is VMR.
\smallskip 

\textup{(iii)} If $q_0\geq\frac{1}{r_c}e^{T_0+T-\tau_c}$, then the optimal trajectory is VR.
\end{theorem}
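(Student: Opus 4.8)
The plan is to reduce the full-lifetime problem to the volatile-period analysis of Theorem \ref{4LinStructures} by tracking the mass ratio $q$ across the safe period. First I would record that under the stated hypotheses the conditions of Theorems \ref{5Structures} and \ref{5StructuresExt} and of Lemmas \ref{SubR} and \ref{VRabove} all hold for $P(x)=x$ and $L(y)=\frac{y^\alpha}{\alpha}$ with $\alpha<1$; moreover $T_0>\tau_R$ guarantees (as in the corollary following \eqref{SafeSwt}) that the initial arc is V and that the maturity curve lies entirely in $[0,T_0]$, so by Theorem \ref{5StructuresExt} the only admissible structures are VR, VMR, and VRMR. The key point is that the reduced equation \eqref{Statev} for $q$ holds verbatim on both $[0,T_0]$ and $[T_0,T_0+T]$: on every V arc $q(t)=q_0e^{-t}$ falls exponentially, on every R arc $q$ rises with unit slope, and on every M arc $q$ follows the singular curve \eqref{SingCurve}. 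Thus the safe and volatile dynamics differ only through where an arc change is optimal.

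For cases (ii) and (iii) I would show the plant remains on its initial V arc throughout $[0,T_0]$ --- the decreasing exponential $q_0e^{-t}$ stays above the increasing maturity curve --- so it enters the volatile period with $q(T_0)=q_0e^{-T_0}$. Dividing the stated thresholds by $e^{T_0}$ turns (iii) into $q(T_0)\geq\frac1{r_c}e^{T-\tau_c}$ and (ii) into $\frac{\alpha}{\exp(\frac{\alpha}{1-\alpha}(T-\tau_{si}))-1}\leq q(T_0)<\frac1{r_c}e^{T-\tau_c}$, which are exactly the ranges of Theorem \ref{4LinStructures}(iv) and (iii) for a volatile problem beginning at $T_0$. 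In the first the falling V arc meets the switching curve before the singular curve and switches to the final R arc (VR); in the second it meets the singular curve first, follows an M arc to the cutoff point, and then switches to R (VMR). The upper bound on $q_0$ is precisely what keeps the V arc from crossing the active switching curve, as in the proof of Theorem \ref{4LinStructures}.

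For case (i) the plant must mature during the safe period. On the initial V arc $q(t)=q_0e^{-t}$ decreases, while the maturity curve determined by \eqref{SafeSwt} rises from the $t$-axis at $t=T_0-\tau_R$ to meet the singular curve \eqref{SingCurve} at $t=T_0$. When $q_0<\frac{\alpha e^{T_0}}{\exp(\frac{\alpha}{1-\alpha}(T-\tau_{si}))-1}$ the exponential lies below this junction at $t=T_0$, so it crosses the maturity curve at some $t_m<T_0$, where the optimal control switches from V to R. (That the plant cannot instead stay V through $T_0$ also follows from Theorem \ref{4LinStructures}(i): arriving at $T_0$ on a V arc below the singular curve would force an R initial arc on the remaining volatile problem, a contradiction.) After $t_m$ the R arc rises with unit slope and, since $T>q_c+\tau_c$, meets the singular curve inside the volatile period, after which the trajectory follows an M arc to the cutoff point and a final R arc, giving VRMR.

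The hard part will be case (i), specifically verifying the geometry of the maturity curve from \eqref{SafeSwt}: that it is monotone, issues from the $t$-axis at $t=T_0-\tau_R$, and joins the singular curve \eqref{SingCurve} exactly at $t=T_0$ at height $\frac{\alpha}{\exp(\frac{\alpha}{1-\alpha}(T-\tau_{si}))-1}$. Granting this, the falling exponential $q_0e^{-t}$ meets the maturity curve exactly once for $q_0$ below the threshold and stays above it through $T_0$ otherwise, the borderline exponential passing through the maturity/singular junction precisely when $q_0e^{-T_0}=\frac{\alpha}{\exp(\frac{\alpha}{1-\alpha}(T-\tau_{si}))-1}$; this is what pins the VMR/VRMR boundary to $q_0=\frac{\alpha e^{T_0}}{\exp(\frac{\alpha}{1-\alpha}(T-\tau_{si}))-1}$. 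Everything else is a direct transcription of the proof of Theorem \ref{4LinStructures}, with all thresholds scaled by the factor $e^{T_0}$ arising from the exponential decay of $q$ over the vegetative safe period.
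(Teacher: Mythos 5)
Your proposal is correct and takes essentially the approach the paper intends: the paper gives no detailed proof (it states only that the argument is "analogous" to Theorem \ref{4LinStructures} under the standing assumptions $T>q_c+\tau_c$ and $T_0>\tau_R$), and your reduction --- pure-V decay $q(T_0)=q_0e^{-T_0}$ across the safe period, the maturity curve \eqref{SafeSwt} governing any safe-period V--R switch, and the principle of optimality plus Theorem \ref{4LinStructures} applied to the volatile tail, which is exactly where the factor $e^{T_0}$ in the thresholds comes from --- is precisely the mass-ratio-diagram argument of Figure \ref{Extdiagram}. The maturity-curve geometry you defer as "the hard part" is likewise only asserted, not proved, in the paper, and your parenthetical costate-matching argument is in fact the cleaner way to close it: a V arrival at $T_0$ forces $\lambda(T_0)\geq\mu(T_0)$ while a nontrivial initial R arc of an RMR volatile continuation forces $\lambda(T_0)<\mu(T_0)$ (by the strict inequality in the proof of Theorem \ref{5Structures}(iii)), which pins the (i)/(ii) threshold, and the same matching handles (ii)/(iii) since any safe-period V-then-R history gives $q(T_0)\geq q_0e^{-T_0}$, contradicting the requirement that an RMR continuation start below the singular curve.
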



\bigskip
\noindent{\large\bf Acknowledgments:} This work was conceived during the summer 2019 REU program at the University of Houston-Downtown, and is funded by the National Science Foundation grant 1560401. The authors are grateful to the anonymous reviewers for clarifying the relation between density dependence and fitness and many other helpful suggestions that greatly improved the paper.

{\footnotesize

}

\end{document}